\newtheorem{Theorem}{Theorem}
\newtheorem{Lemma}{Lemma}
\newtheorem{Remark}{Remark}
\newcommand{\rs}{\!\!}
\newcolumntype{C}[1]{>{\centering \arraybackslash}p{#1}}
\newcommand{\bviol}{\textcolor{violet}}
\title{Resource Constrained Vehicular Edge Federated Learning with Highly Mobile Connected Vehicles}  
\author{Md Ferdous Pervej, \IEEEmembership{Graduate Student Member, IEEE}, Richeng Jin, \IEEEmembership{Member, IEEE}, and Huaiyu Dai, \IEEEmembership{Fellow, IEEE}
\thanks{
This research was supported in part by the Zhejiang Provincial Natural Science Foundation of China under Grant No. LQ23F010021, in part by the Ng Teng Fong Charitable Foundation in the form of ZJU-SUTD IDEA Grant under Grant No. 188170-11102, in part by the National Key Research and Development Program of China under Grant 2018YFB1801104, and in part by the US National Science Foundation under grants CNS-1824518 and ECCS-2203214. (\textit{Corresponding author: Richeng Jin}.)}
\thanks{M. F. Pervej and H. Dai are with the Department of Electrical and Computer Engineering, NC State University, Raleigh, NC 27695, USA (e-mails: \{mpervej, hdai\}@ncsu.edu).}
\thanks{R. Jin is with the Zhejiang\text{–}Singapore Innovation and AI Joint Research Lab, the Department of Information and Communication Engineering, Zhejiang University, Hangzhou, China, 310007, and also with Zhejiang Provincial Key Lab of Information Processing, Communication, and Networking (IPCAN), Hangzhou, China, 310007 (e-mail: richengjin@zju.edu.cn).}
\vspace{-0.35 in}
}
\begin{document}
\maketitle
\IEEEpeerreviewmaketitle

\begin{abstract}
This paper proposes a vehicular edge federated learning (VEFL) solution, where an edge server leverages highly mobile connected vehicles' (CVs') onboard central processing units (CPUs) and local datasets to train a global model. 
Convergence analysis reveals that the VEFL training loss depends on the successful receptions of the CVs' trained models over the intermittent vehicle-to-infrastructure (V$2$I) wireless links.
Owing to high mobility, in the full device participation case (FDPC), the edge server aggregates client model parameters based on a weighted combination according to
the CVs' dataset sizes and sojourn periods, while it selects a subset of CVs in the partial device participation case (PDPC).
We then devise joint VEFL and radio access technology (RAT) parameters optimization problems under delay, energy and cost constraints to maximize the probability of successful reception of the locally trained models.
Considering that the optimization problem is NP-hard, we decompose it into a VEFL parameter optimization sub-problem, given the estimated worst-case sojourn period, delay and energy expense, and an online RAT parameter optimization sub-problem.
Finally, extensive simulations are conducted to validate the effectiveness of the proposed solutions with a practical $5$G new radio ($5$G-NR) RAT under a realistic microscopic mobility model.
\end{abstract}

\begin{IEEEkeywords}
Connected vehicle (CV), energy efficiency (EE), federated learning (FL), vehicular edge network (VEN).
\end{IEEEkeywords}

\section{Introduction}
\noindent
\IEEEPARstart{W}{hile} modern connected vehicles (CVs) are an essential part of an intelligent transportation system (ITS), higher automation on the road demands more exploration.
One way to achieve higher automation is to put more sensors on the onboard units of these CVs to facilitate real-time sensing and onboard computing \cite{Qayyum2020Securing}.
Machine learning (ML) has shown its potential in various ITS applications, such as object detection, traffic sign classification, congestion prediction, velocity/acceleration prediction, etc., to name a few \cite{ye2018machine}.
However, the sensing capabilities and onboard computation powers of CVs are still limited.
Moreover, offloading raw data to an edge server raises immense privacy risks and requires humongous bandwidth. 
Therefore, a privacy-preserving distributed ML solution is urgently needed for modern vehicular edge networks (VENs) to ensure higher automation levels on the road where the moving CVs must make operational decisions swiftly.

With its privacy-preserving and distributed learning abilities, federated learning (FL) \cite{mcmahan2017communication} is, thus, an ideal solution for VENs.
Note that FL follows the parameter server paradigm, where the server distributes a global ML model to the clients, who then perform local model training in parallel on their devices and send their locally trained model parameters to the server \cite{richeng2022communication}.
Thus, the CVs do not need to share their raw data, i.e., data remains private. 
Besides, system and data heterogeneity of the CVs can be handled by carefully designing model aggregation rules and local training loss functions.

Unlike traditional stationary clients, however, devising a vehicular edge FL (VEFL) framework is challenging for multiple reasons.
Firstly, limited radio coverage makes the sojourn periods of the highly mobile CVs very short.
Therefore, the CVs can perform local model training only for a few iterations before moving out of the coverage area.
Secondly, modern CVs' onboard central processing units (CPUs) are responsible for many operational computations. 
Besides, the CVs are owned by different clients who may not readily join the FL process.
Therefore, a service level agreement (SLA) between a CV that wishes to utilize its limited resource for FL model training and the edge server should exist.
Note that an SLA is a commitment between the server and the CV that both parties agree to uphold.
Thirdly, a proper radio access technology (RAT) solution is required since the server can aggregate trained models only if these models are successfully received at the aggregation time.
However, the high mobility of the CVs makes communication over the intermittent wireless vehicle-to-infrastructure (V$2$I) links even more challenging.
As such, we shall carefully orchestrate the interplay between the server and the RAT solution to perform VEFL.
Moreover, the underlying RAT requires mandatory resource management.
Finally, system and data heterogeneity among the CVs is a norm in VENs since automotive makers produce products with different features.

\subsection{Related Work}
\noindent
We have seen many remarkable contributions to joint FL and wireless network parameter optimizations \cite{richeng2022communication,Amiri2020Federated,salehi2021Federated,Chen2021Ajoint,Yang2021}. 
However, these studies did not consider the fundamental constraint in VEN, i.e., client's high mobility, which results in a very short sojourn period.
Some recent works \cite{Zeng2021MultiTask,pervej2022Mobility,8917592,9492053,9424984,9416835,9181482,WANG2022199, Sun2022Edge, prathiba2021Federated} also considered FL for different tasks in VENs.
However, only a handful of studies \cite{zeng2022federated,Xiao2021Vehicle,taik2022clustered,liu2022FedCPF} addressed the constraints present in VENs.
Zeng \textit{et al.} proposed a dynamic federated proximal algorithm to design a controller for autonomous vehicles in \cite{zeng2022federated}.
The authors considered moving connected and autonomous vehicles as FL clients and devised their algorithm accounting for the communication and computation delay constraints.
The communication delay was derived using a simplistic channel model with one channel realization. 
Each vehicle had a unique orthogonal resource block to offload its trained model to the server.

Xiao \textit{et al.} jointly consider vehicular client selection, transmission power selection, CPU frequency selection and local model accuracy optimization under delay and energy constraints in \cite{Xiao2021Vehicle}.
More specifically, the authors assumed data quality is known, and optimized local model precision before the server performs global aggregation.
Besides, the authors considered a transmission control protocol/internet protocol based channel model with a unique radio resource for each client for offloading its trained model.
Taik \textit{et al.} considered a clustered vehicular FL in \cite{taik2022clustered}.
The authors used the traditional federated averaging (FedAvg) algorithm, where the vehicular cluster head performed model aggregation from the cluster members and then forwarded the aggregated model to the server.
Liu \textit{et al.} used a proximal FL algorithm, which is very similar to the widely used FedProx algorithm \cite{MLSYS2020_38af8613}, for vehicular edge computing in \cite{liu2022FedCPF}. 
The impact of mobility and wireless links was not considered in \cite{liu2022FedCPF}.

Asynchronous communication and model aggregation mechanisms were also proposed in some recent works \cite{liang2022Semi,yang2022efficient,chen2021semi}.
More specifically, \cite{liang2022Semi} considered a semi-synchronous FL for the Internet of vehicles, where the authors dynamically adjusted the server's waiting time between two global rounds in proportion to the total participating clients.
A hierarchical asynchronous FL was considered in \cite{yang2022efficient}. 
A semi-asynchronous hierarchical FL for transportation system was proposed in \cite{chen2021semi}. 
Particularly, \cite{chen2021semi} assumed a synchronous model aggregation for the local-edge level and a semi-synchronous model aggregation for the edge-cloud level.

\subsection{Motivations and Our Contributions}
\noindent
While \cite{Zeng2021MultiTask,pervej2022Mobility,8917592,9492053,9424984,9416835,9181482,WANG2022199, Sun2022Edge, prathiba2021Federated} showed the efficacy of FL in different vehicular applications and \cite{zeng2022federated,Xiao2021Vehicle,taik2022clustered,liu2022FedCPF} addressed some typical resource-constraints in VENs, these studies had their own limits.
Particularly, due to the intermittent wireless V2I links, VEFL is not as straightforward as broadcasting the global model and then aggregating locally trained model parameters under perfect wireless communication links between the server and clients.
A practical RAT, such as the $5$G new radio ($5$G-NR), is required for the parameter server to broadcast the global model in the downlink and then receive the model from the vehicular clients in the uplink.
Moreover, the parameter server must devise the VEFL strategy to accommodate the underlying RAT's characteristics.
As such, a joint study should address the constraints of the parameter server, mobile clients and the underlying RAT.
It is worth pointing out that $3$rd generation partnership project ($3$GPP) release $18$ will include different artificial intelligence and ML solutions for its data-driven network applications \cite{lin2022anOverview}.
Besides, different work groups within $3$GPP are working actively to include ML in the next-generation standard.
Moreover, ML-application-based RAT design is also a part of standardization for release $18$ \cite{3gppSA23_700}.

In this work, we, therefore, present a VEFL framework with a joint study of the impact of the mobility of the clients, i.e., the CVs, with a practical 5G-NR-based RAT solution and under strict delay, energy, computation resource, radio resource and cost constraints.
More specifically, our contributions are summarized as follows:
\begin{itemize}
    \item Leveraging $5$G-NR RAT, we propose a VEFL framework where an edge server utilizes a fixed bandwidth part (BWP) \cite{3gpp38_211} and an uplink heavy frame structure to receive the locally trained ML models over the intermittent V$2$I links from highly mobile CVs which participate in the model training and charge the server based on SLAs.
    \item We consider a full device participation case (FDPC) and a more practical partial device participation case (PDPC), where all CVs and only a subset of CVs participate in the model training, respectively. As FDPC is less flexible, to combat high mobility, i.e., short sojourn period, the server aggregates local model parameters based on a weighted combination reflecting the CVs' expected sojourn periods and dataset sizes. 
    \item In both cases, corresponding joint VEFL and RAT parameter optimization problems are formulated to maximize the probability of successful trained models reception at the server under strict delay, energy and cost constraints. 
    Since channel state information (CSI) can vary in each slot and is unknown beforehand, the original joint problem is decomposed into a VEFL parameter optimization sub-problem, given the upper bounds of the communication delay, energy expense and cost, and a RAT parameter optimization sub-problem that aims to maximize long-term energy-efficiency (EE).
    \item The non-convex VEFL parameter optimization sub-problems are solved using standard relaxations and the difference between convex (DC) approach.
    The fractional non-convex long-term EE optimization problem is first transformed into a tractable form using the Dinkelbach method, which is further converted into a per-slot online optimization problem leveraging Lyapunov drift-plus-penalty-based stochastic optimization.
    \item Finally, using simulation of urban mobility (SUMO) \cite{sumo2018}, we simulate a microscopic mobility scenario in Downtown Raleigh, NC, USA, and use four popular ML datasets to show the effectiveness of our proposed solutions.
\end{itemize}

The rest of the paper is organized as follows: Section \ref{sysModelSection} introduces our proposed VEFL system model. 
Section \ref{convergenceProbFormSection} provides the convergence analysis and our joint problem formulation.  
Section \ref{PDPC_Transformation_Solution_Section} presents the solution to the problem.
We discuss our simulation results in Section \ref{simulationResults_Section}. 
Finally, Section \ref{conclusion} concludes the paper.

\begin{figure}[!t] \vspace{-0.05in}
    \centering
	\includegraphics[trim={170 10 280 1}, clip, width=0.43\textwidth, height=0.24\textheight]{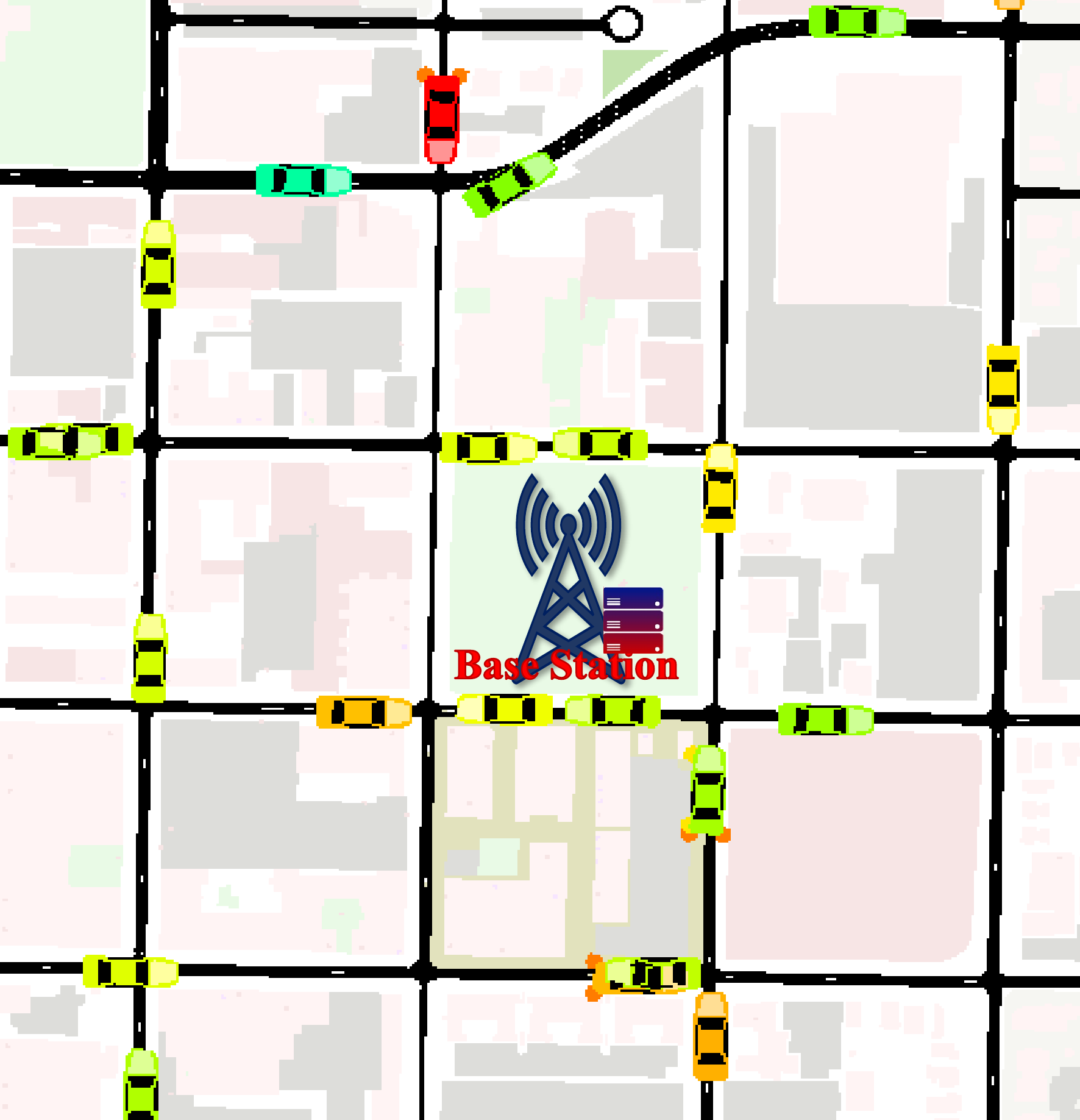} \vspace{-0.02 in}
    \caption{Vehicular FL system model}
    \label{system_model_VEFL}
\end{figure}

\section{VEFL System Model} 
\label{sysModelSection}

\noindent
We consider a VEN confined within a region of interest (RoI), as shown in Fig. \ref{system_model_VEFL}.
The edge server---embedded into the next generation Node B (gNB)---wishes to perform a distributed ML task leveraging the moving CVs' onboard CPUs and local datasets.
Similar to \cite{kim2023modems,Guo2019Enabling}, SLAs between the CVs and the edge server, which require the edge server to pay the CVs for contributing to the VEFL task, are assumed\footnote{However, the implementation of SLAs in a practical VEN implicates the core network and the user plane and control plane protocol stacks\cite{3gppTS23_501, 3gppTS23_502}, which are beyond the scope of this paper.}.
Note that the terms server and gNB are used interchangeably when there is no ambiguity.
We consider a general learning task, which can be object detection, traffic sign classification/detection, velocity/acceleration prediction, traffic congestion prediction, travel time prediction, fuel consumption prediction, etc., for our VEFL.
Moreover, the VEN operates in a discrete time-slotted manner.
The slots are denoted by $\mathcal{T}=\{t\}_{t=1}^{|\mathcal{T}|}$.
Particularly, the gNB has a fixed BWP for the VEFL to provide radio connectivity to the moving client CVs. 
The server can only leverage the trained ML model on the CVs' onboard CPUs within the communication range of the gNB.
Denote the communication radius of the gNB by $\mathrm{r}$.

The CVs enter and leave the RoI following some distributions, i.e., the CV set may not be the same in all time slots due to high mobility. 
Denote the CV set during time slot $t$ by $\mathcal{V}_t=\{v\}_{v=1}^{V_t}$.
Moreover, the server knows the maximum possible velocity on the roads inside this RoI.
Denote the maximum possible velocity of the RoI by $\mathrm{u}^{\mathrm{max}}$.
Assuming the gNB is located at the center of the coordinate system of the RoI, its coverage circle is given by
\begin{equation}
\label{eqnOfCircle}
    x^2+y^2 = \mathrm{r}^2,
\end{equation}
where $x$ and $y$ are the horizontal and vertical coordinates, respectively.

\subsection{CV Mobility Model}
\label{CVs_Mobility_Model}
\noindent
Denote the coordinate of $v\in \mathcal{V}_t$, during slot $t$, by $\left(\mathrm{x}^{\text{loc}}_{v}(t), \mathrm{y}^{\text{loc}}_{v}(t) \right)$ as shown in Fig. \ref{boundaryPoints}.
Also, let us denote CV $v$'s velocity and acceleration during time $t$ by $u_{v}(t)$ and $\dot{u}_{v}(t)$, respectively.
We consider the widely used car-following mobility (CFP) model, known as the intelligent driver model (IDM) mobility model, to model microscopic mobility for the CVs \cite{roy2010handbook}. 
In IDM, the mobility is controlled by the following instantaneous acceleration equation \cite{roy2010handbook}:
\begin{equation}
\label{idmAcce}
    \dot{u}_{v}(t) =  \bar{\mathrm{u}}_1\left(1 - [u_{v}(t) / \mathrm{u}^{\mathrm{max}}]^{4}\right) - \bar{\mathrm{u}}_1[s_v^{*}/\Delta d_{v} (t) ]^2,
\end{equation}
where $\bar{\mathrm{u}}_1$ is the maximum acceleration or a constant that depends on the design. 
$\Delta d_{v} (t)$ is the front bumper to the back bumper distance of CV $v$ and the vehicle directly in front of it, during time $t$. 
Furthermore, $s_v^{*}$ is the desired dynamical distance and is calculated as follows:
\begin{equation}
\label{idmDynDist}
   s_v^{*} = s^{\mathrm{saf}} + u_{v}(t) \cdot t^{\mathrm{dst}} + [u_{v}(t) \cdot \Delta u_v (t)]/(2\sqrt{\bar{\mathrm{u}}_1 \bar{\mathrm{u}}_2}), 
\end{equation}
where $s^{\mathrm{saf}}$ is the safety distance between $v$ and the CV directly in front of it, $t^{\mathrm{dst}}$ is the desired time headway that gives the minimum possible time to the CV directly in front of $v$, $\Delta u_{v}(t)$ is the velocity difference between $v$ and the vehicle in front of it, and $\bar{\mathrm{u}}_2$ is a positive number that defines the comfortable braking deceleration. 
Note that, in (\ref{idmAcce}), the first term, i.e., $\bar{\mathrm{u}}_1\left(1 - [u_{v}(t) / \mathrm{u}^{\mathrm{max}}]^{4}\right)$, is the instantaneous acceleration of CV $v$, which essentially is the desired acceleration on a free road.
The second term is the deceleration induced by the CV in front of it \cite{roy2010handbook}.

Moreover, we assume the server does not need to know the entire trajectory of the vehicle. 
Therefore, the server will only estimate the guaranteed sojourn period $\mathrm{t}_{v,t}^{\text{soj}}$.
To find $\mathrm{t}_{v}^{\mathrm{soj}} (t)$, first, we write the horizontal and vertical lines that intersect the gNB' radio coverage circle boundary as follows:\\
\begin{minipage}{0.23\textwidth}
    \begin{align}
    y &= \mathrm{y}^{\text{loc}}_{v}(t), \label{horizontalLine}
\end{align} 
\end{minipage}
\begin{minipage}{0.23\textwidth}
    \begin{align}
    x &= \mathrm{x}^{\text{loc}}_{v}(t). \label{verticalLine}
\end{align} 
\end{minipage}
Note that (\ref{horizontalLine}) and (\ref{verticalLine}) are represented by the purple and orange color chords, respectively, in Fig. \ref{boundaryPoints}.
We can then find the $x$-coordinates of the gNB coverage boundary by solving (\ref{eqnOfCircle}) and (\ref{horizontalLine}) as $x_{v}^{\mathrm{bnd},1}(t) \rs = \rs \sqrt{\mathrm{r}^2 - {\mathrm{y}_{v}^{\mathrm{loc}}(t) }^2 }$ and $x_{v}^{\mathrm{bnd},2}(t) \rs = \rs -\sqrt{\mathrm{r}^2 - {\mathrm{y}_{v}^{\mathrm{loc}}(t) }^2 }$.
As such, we can find the horizontal distances of these boundary points and the current location of the CV $(\mathrm{x}^{\text{loc}}_{v}(t), \mathrm{y}^{\text{loc}}_{v}(t) )$ as $\mathrm{d}_{x_{1}}^y=\left|\mathrm{x}^{\text{loc}}_{v}(t) - \mathrm{x}^{\mathrm{bnd},1}_{v} (t)\right|$ and $\mathrm{d}_{x_{2}}^y=\left|\mathrm{x}^{\text{loc}}_{v}(t) - \mathrm{x}^{\mathrm{bnd},2}_{v} (t)\right|$.
Similarly, we can find the $y$-coordinates of the gNB coverage boundary by solving (\ref{eqnOfCircle}) and (\ref{verticalLine}) as $y_{v}^{\mathrm{bnd},1}(t) \rs = \rs  \sqrt{\mathrm{r}^2 - {\mathrm{x}_{v}^{\mathrm{loc}}(t) }^2}$ and $y_{v}^{\mathrm{bnd},2}(t) \rs = \rs -\sqrt{\mathrm{r}^2 - {\mathrm{x}_{v}^{\mathrm{loc}}(t) }^2}$.
Then, the corresponding vertical distances of these boundary points and the current location of the CV $(\mathrm{x}^{\text{loc}}_{v}(t), \mathrm{y}^{\text{loc}}_{v}(t) )$ are calculated as $\mathrm{d}_{x}^{y_{1}} = \big|\mathrm{y}^{\text{loc}}_{v}(t) - \mathrm{y}^{\mathrm{bnd},1}_{v} (t) \big|$ and $\mathrm{d}_{x}^{y_{2}} = \big|\mathrm{y}^{\text{loc}}_{v}(t) - \mathrm{y}^{\mathrm{bnd},2}_{v} (t) \big|$.

As such, from time $t$, the minimum expected sojourn period of $v$ under the gNB's coverage is bounded below by
\begin{equation}
\label{sojournPeriod}
    \mathrm{t}_{v}^{\mathrm{soj}} (t) \geq  \mathrm{min} \big\{ \mathrm{d}_{x_{1}}^y, \mathrm{d}_{x_{2}}^y, \mathrm{d}_x^{y^{1}}, \mathrm{d}_x^{y^{2}} \big\}/ \mathrm{u}^{\mathrm{max}}.
\end{equation}
Note that (\ref{sojournPeriod}) is based on a linear trajectory and uniform velocity $\mathrm{u}^{\mathrm{max}}$, which is the worst-case estimation of the sojourn period of the CV. 

\begin{figure*}[!t] \vspace{-0.05in}
\begin{minipage}{0.35\textwidth}
    \centering
    \includegraphics[trim=150 52 260 5, clip, width=0.95\textwidth, height=0.2\textheight]{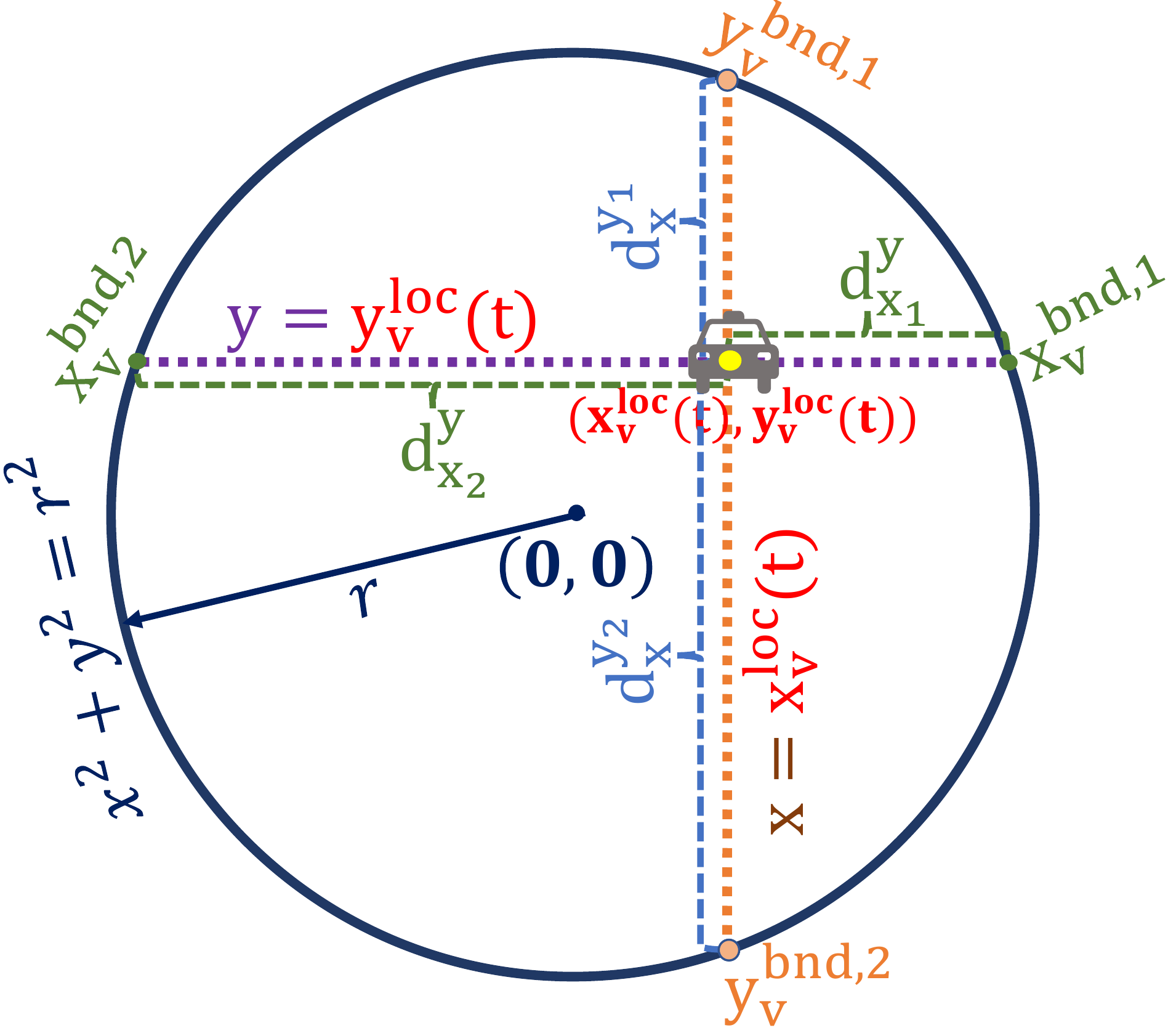} \vspace{-0.05 in}
    \caption{Finding coverage boundary points}
    \label{boundaryPoints}
\end{minipage} \hspace{-0.18in}
\begin{minipage}{0.64\textwidth}
    \centering
    \includegraphics[trim=50 700 50 50, clip, width=\textwidth, height=0.05\textheight]{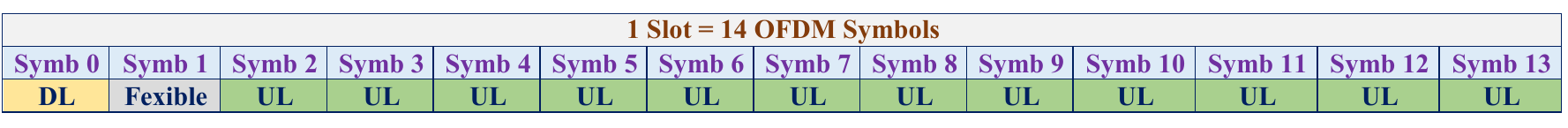} \vspace{-0.2 in}
    \caption{OFDM Symbols within a slot \cite{3gpp38_213}}
    \label{slotSybolStructure}
    \centering
    \includegraphics[trim=5 342 5 12, clip, width=1.05\textwidth, height=0.1\textheight]{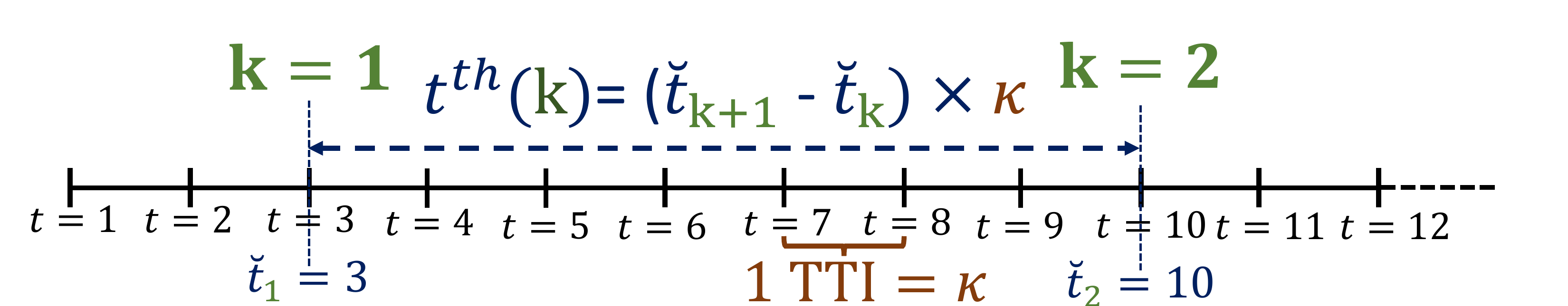} \vspace{-0.07 in}
    \caption{Time slot orchestration in the proposed VEFL}
    \label{slotStructure}
\end{minipage} \vspace{-0.25in}
\end{figure*}

\subsection{V2I Radio Access Technology Model}
\noindent
We assume that the VEN operates in TDD mode and has a fixed $\mathrm{W}$ Hz BWP for providing RAT connectivity over the universal mobile telecommunications system (UMTS) air interface (Uu interface) for the VEFL.
Note that TDD facilitates channel reciprocity and thus offers less control overhead.
Besides, we assume that all transceivers can mitigate the Doppler effect satisfactorily\footnote{Although Doppler shift is a well-known problem, if the underlying RAT mitigates it adequately, it is less critical for our proposed VEFL framework.}.
For model distribution in the downlink, as the gNB sends the same data to all selected CVs, it can use the entire spectrum and high transmission power to broadcast the model.
Therefore, similar to \cite{richeng2022communication, Yang2021,zeng2022federated}, we ignore the downlink communication.
The $\mathrm{W}$ Hz bandwidth is divided into orthogonal physical resource blocks (pRBs). 
Denote the pRB set of the allocated BWP by the set $\mathcal{Z}=\{z\}_{z=1}^Z$.
Note that due to orthogonal pRBs, there is no intra-cell interference. 
Besides, as we consider a single cell, the proposed VEN is interference-free.

The gNB considers the $5$G-NR frame structure in which the radio frame is $10$ ms long with $10$ subframes.
Within each $1$ ms subframe, there are $2^{\bar{n}}$ slots, where $\bar{n}$ is the sub-carrier spacing numerology \cite{3gpp38_213}.
The pRB allocation \textit{granularity} is the \textit{slot}, i.e., the pRBs can be allocated to different users in each slot $t$.
Each slot carries $14$ OFDM symbols in the time domain and $12$ sub-carriers in the frequency domain.
Moreover, the OFDM symbols can be configured based on the duplexing mode.
As this is uplink-heavy transmission, we consider a downlink-control uplink transmission slot format as shown in Fig. \ref{slotSybolStructure}.
As such, we consider the effective uplink data rate per slot, which will be fleshed out in what follows.

We consider a single-input-multiple-output (SIMO) case, where the gNB has $N$ antennas, and each CV has a single antenna.
Note that our framework can also be extended to other multiple-antenna communication models\footnote{However, the TDD-based massive MIMO requires rigorous channel estimation/equalization, beam management, etc., which deserve separate studies.}.
During slot $t$, denote the channel between CV $v$ and the $n^{\mathrm{th}}$ antenna of the gNB over pRB $z$ by $h_{n,v,z}(t)$. 
Then, we denote the entire channel response at VU $v$ from the gNB over PRB $z$ as 
\begin{equation}
	\begin{aligned}
		\mathbf{h}_{v,z}(t) &= \sqrt{\psi_v(t)} \varrho_v(t) \breve{\mathbf{h}}_{v,z} (t) \in \mathbb{C}^{N \times 1},
	\end{aligned}
\end{equation}
where $\sqrt{\psi_v(t)}$, $\varrho_v(t)$ and $\breve{\mathbf{h}}_{v,z}(t) = [h_{1,v,z}(t), \dots, h_{N,v,z}(t)]^T \in \mathbb{C}^{N \times 1}$ are large scale fading, log-Normal shadowing and fast fading channel responses from the $N$ antennas, respectively.
The path losses are modeled based on the urban macro (UMa) model \cite{3GPP_TR_38_901}.

To that end, denote CV $v$'s unit powered intended signal for the gNB by $s_v(t) \in \mathbb{C}$ and allocated uplink transmission power for pRB $z$ by $P_{v,z}(t)$.
Assuming receiver beamforming vector $\mathbf{g}_{v,z}(t) \in \mathbb{C}^{N \times 1}$, the effective uplink signal received at the gNB, over pRB $z$, is calculated as follows:
\begin{equation}
\label{uplink_RX_signal}
	y_{v,z}(t) = \mathrm{I}_{v,z}(t) \cdot \sqrt{P_{v,z}(t)} {\mathbf{g}_{v,z}(t)}^H \mathbf{h}_{v,z}(t) s_v(t) + \eta, 
\end{equation}
where $\mathrm{I}_{v,z}(t) \in \{0,1\}$ is an indicator function that takes value 1 when pRB $z$ is allocated to CV $v$, and $\eta \sim CN(0, \varsigma^2)$ is the circularly symmetric zero mean Gaussian distributed random variable with variance $\varsigma^2$.

Then, we calculate the received signal-to-noise ratio (SNR) at the gNB for CV $v$'s uplink transmission as follows:
\begin{equation}
\label{uplink_SINR0}
	\Gamma_{v,z}(t) = \big(\mathrm{I}_{v,z}(t) \cdot P_{v,z}(t) \big\vert {\mathbf{g}_{v,z}(t)}^H \mathbf{h}_{v,z}(t) \big\vert^2\big)/(\omega \varsigma^2),
\end{equation}
where $\omega$ is the pRB size.
The gnB can configure CV-specific CSI reference signal (RS) to estimate the channels. 
Since $5$G-NR has the flexibility of configuring CSI-RS periodically, semi-persistently or aperiodically and may also perform uplink channel information multiplexing on the physical uplink shared channel \cite{3gppTS38_214}, this work primarily focuses on the overall VEFL framework considering CSI is known at the gNB\footnote{Channel estimation delay is a part of the RAT and is less critical for our proposed VEFL framework as the server only uses the worst-case estimated channel, discussed in Section \ref{PDPC_Transformation_Solution_Section}, to determine the approximate upper bound for the uplink model offloading delay.}.
Therefore, the gNB can use maximal ratio combining receiver beamforming to get $\mathbf{g}_{v,z}(t) = \mathbf{h}_{v,z}(t)/\left\Vert \mathbf{h}_{v,z}(t) \right\Vert$, which gives the following uplink SNR over pRB $z$.
\begin{equation}
\label{uplink_SINR}
	\Gamma_{v,z}(t) = (\mathrm{I}_{v,z}(t) \cdot P_{v,z}(t) \left\Vert \mathbf{h}_{v,z}(t) \right\Vert^2)/(\omega \varsigma^2).
\end{equation}
To this end, we can calculate the achievable data rate at the gNB from CV $v$'s uplink as follows:
\begin{equation}
\label{uplink_DataRate}
	r_v(t) = \omega (1 - \upsilon) \cdot \mathrm{I}_v(t) \sum\nolimits_{z=1}^Z \mathbb{E}_{\mathbf{h}} \left[\log_2 (1+ \Gamma_{v,z}(t)) \right],
\end{equation}
where $\mathrm{I}_v(t)\in\{0,1\}$ is an indicator function that takes value $1$ when CV $v\in\mathcal{V}_t$ is scheduled for transmissions in slot $t$ and the expectation is over the channel $\mathbf{h}_{v,z}(t)$.
Besides, $\upsilon$ is the loss due to control signaling overhead. 
For our case, if the \textit{flexible} symbol in Fig. \ref{slotSybolStructure} is allocated for uplink, we set $\upsilon=1/14$. 
Moreover, if it is not assigned to uplink, we set $\upsilon=2/14$.

\subsection{Preliminaries of VEFL}

\noindent
Denote the server's global ML model by $\pmb{\omega}$. 
Denote the dataset available at CV $v$ by $\mathcal{D}_v= \{\mathbf{x}_i, y_i \}_{i=1}^{\left\vert \mathcal{D}_v \right\vert}$, where $\mathbf{x}_i$ and $y_i$ are the $i^{\mathrm{th}}$ sample feature and the corresponding label, respectively.
Therefore, during time $t$, the entire dataset can be denoted as $\mathcal{D}_t = \{\mathcal{D}_v\}_{v=1}^{V_t}$.
The edge server aims to optimize 
\begin{equation}
\begin{aligned}
    \underset{\pmb{\omega}}{\text{min }} F(\pmb{\omega}) = \sum\nolimits_{v=1}^{V_t} p_v f_v(\pmb{\omega}),
\end{aligned}
\end{equation}
where $p_v \in [0,1]$ is the linear combination weight for CV $v$ with $\sum\nolimits_{v=1}^{V_t} p_v=1$.
While the typical FedAvg set  $p_v=|\mathcal{D}_v| / |\mathcal{D}_t|$ \cite{mcmahan2017communication}, we will explore more on how to properly set these weights in what follows.
Besides, $f_v(\pmb{\omega})$ is the local empirical loss function of CV $v$.

The server distributes the global model during the VEFL global rounds $k=1,\dots, K$ as shown in Fig. \ref{slotStructure}.
Denote the slots corresponding to the global rounds by the set $\mathcal{T}_{g}=\{\breve{t}_k\}_{k=1}^K$, where $\breve{t}_1$ represents the VEN slot $t$ at which $k=1$.
For example, in Fig. \ref{slotStructure}, $\breve{t}_1=3$ and $\breve{t}_2=10$.
Note that throughout our discussions, we will use the notation $k$ and term \textit{round} to represent \textit{VEFL global round}, while the notation $t$ and term \textit{slot} will represent the \textit{discrete time slot} of the VEN.

Denote the duration between global round $k+1$ and $k$ by 
\begin{equation}
\label{deadlineThreshold}
    \mathrm{t}^{\mathrm{th}}(k) = \kappa \times \left(\breve{t}_{k+1} - \breve{t}_k \right),
\end{equation} 
where $\kappa$ is the transmission time interval (TTI).
Besides, $\mathcal{T}_g \subset \mathcal{T}$ is known to all CVs that have SLAs with the server.
Denote the available CV pool during global round $k$ by $\mathcal{V}_{\breve{t}_k}$.
Without any loss of generality, denote the global model at the server during global round $k$ by $\pmb{\omega}_k$.
The server then broadcasts $\pmb{\omega}_k$ to all CV $v \in \mathcal{V}_{\breve{t}_k}$. 
Denote the local copy of CV $v$'s model at the beginning of round $k$ by $\pmb{\omega}_{v,k}$, i.e., $\pmb{\omega}_{v,k} \gets \pmb{\omega}_k$ for all $v \in \mathcal{V}_{\breve{t}_k}$.
Upon receiving the global model, each $v \in \mathcal{V}_{\breve{t}_k}$ performs local model training to minimize the following objective function
\begin{equation}
\label{clientLossFunction}
    \underset{\pmb{\omega}} {\text{min }}
    f_v(\pmb{\omega}, \pmb{\omega}_k) = F_v(\pmb{\omega}) + (\mu/2) \Vert \pmb{\omega} - \pmb{\omega}_k \Vert^2,  
\end{equation}
where $F_v(\pmb{\omega})$ is CV $v$'s local empirical loss function on its dataset $\mathcal{D}_v$.
An $L_2$ regularization is added to $F_v(\pmb{\omega})$ to tackle heterogeneity that often arises in FL.

Each $v\in\mathcal{V}_{\breve{t}_k}$ trains its local model on its local dataset $\mathcal{D}_v $ for $l_v (k)$ iterations and obtains the following updated model
\begin{equation}
    \pmb{\omega}_{v,k+1} = \pmb{\omega}_{v,k} - \delta \sum\nolimits_{l=1}^{l_v (k)} \nabla f_v(\pmb{\omega}, \pmb{\omega}_{v,k}), 
\end{equation}
where $\delta$ is the step size.
After this local training, the CVs send $\pmb{\omega}_{v,k+1}$'s to the server.
The server then performs weight aggregations, which will be discussed in more detail in the following section, and computes the updated global model $\pmb{\omega}_{k+1}$ for the next round.
The above processes repeat until the globally trained model reaches an expected level of precision.

\subsection{Delay Calculation}
\subsubsection{Model Training Delay}
Denote CV $v$'s CPU computation cycle frequency by $\eta_v(k) \in \{\eta_v^{\mathrm{min}}, \eta_v^{\mathrm{max}}\}$.
Assuming CV $v$ requires $c_v$ CPU cycles to process per-bit data, the local computation delay for one local iteration is calculated as \cite{richeng2022communication}
\begin{equation}
	\mathrm{t}_{v,\text{itr}}^{\text{cmp}} = c_v \mathrm{D}_v/\eta_v(k),
\end{equation}
where $\mathrm{D}_v$ is the dataset size of $v$ in bits.
Then, to perform $l_v (k)$ local iterations during global round $k$, the total local computation delay of CV $v$ is 
\begin{equation}
\label{localComputeDelay}
	\mathrm{t}^{\text{cmp}}_{v}(k) = l_v (k) \cdot \mathrm{t}_{v,\text{itr}}^{\text{cmp}}.
\end{equation}

\subsubsection{Communication Delay}
Let the model parameters be of $M$ dimension, i.e., $\pmb{\omega}\in \mathbb{R}^{M}$.
Then, the required number of bits for transmitting the model parameters is calculated as
\begin{equation}
	S(\pmb{\omega}, \mathrm{FPP}) = \sum\nolimits_{m=1}^M [1+\mathrm{FPP}(m)], 
\end{equation}
where $\mathrm{FPP} (m)$ represents floating point precision. 
Note that besides $\mathrm{FPP} (m)$ bits to send the $m^{\mathrm{th}}$ element of $\pmb{\omega}$, we need $1$ additional bits to represent its sign.

This model payload size $S(\pmb{\omega}, \mathrm{FPP})$ is usually on the scale of megabits (Mbs) and takes more than one coherence time and frequency block. 
Therefore, if a CV starts model offloading from slot $t> \breve{t}_k$ after it finishes the local model training during VEFL global round $k$, the offloading delay is calculated as 
\begin{equation}
\label{comDelay}
    \mathrm{t}^{\text{tx}}_{v}\!(k) \rs=\rs \kappa \cdot \text{min} \big\{\rs T \rs: \rs \kappa (1\!-\!\upsilon) \rs \sum\nolimits_{\bar{t}=t}^T r_v(\bar{t}) \geq S (\pmb{\omega},\text{FPP}), T \in \mathbb{Z}^{+} \rs \big\}\!.\rs\rs
\end{equation}
For CV $v\in \mathcal{V}_{\breve{t}_k}$, the total delay between two VEFL global rounds is, thus, calculated as follows:
\begin{equation}
\label{totalDealy}
\begin{aligned}
    \mathrm{t}_v (k) &= \mathrm{t}^{\text{cmp}}_{v}(k) +  \mathrm{t}^{\text{tx}}_{v}(k).
\end{aligned}
\end{equation}

\subsection{Energy Consumption}
\subsubsection{Local Model Training Energy Consumption}
We calculate the energy consumption of CV $v\in \mathcal{V}_{\breve{t}_k}$ for its local model training during round $k$ as 
\begin{equation}
    \mathrm{e}^{\text{cmp}}_{v}(k) = l_v (k) \cdot (\zeta/2) c_v \mathrm{D}_v {\eta_v(k)}^2,  
\end{equation}
where $\zeta/2$ is the effective capacitance of CV $v$'s CPU chip.

\subsubsection{Communication Energy Consumption}
We calculate the uplink transmission energy consumption of the client $v\in\mathcal{V}_{\breve{t}_k}$ during round $k$ as
\begin{equation}
    \mathrm{e}_{v}^{\text{tx}}(k) = \sum\nolimits_{\bar{t}=t}^{T} \sum\nolimits_{z=1}^Z P_{v,z}(\bar{t}),    
\end{equation}
where $T$ is calculated is (\ref{comDelay}).

Therefore, we calculate the total energy consumption associated with client $v$'s local computation and uplink communication as follows:
\begin{equation}
\label{totalEnergyConsumption}
    \mathrm{e}^{\text{tot}}_{v} (k) = \mathrm{e}^{\text{cmp}}_{v}(k) + \mathrm{e}_{v}^{\text{tx}}(k).
\end{equation}

\section{VEFL Convergence Analysis and Problem Formulation}
\label{convergenceProbFormSection}

\subsection{Convergence Analysis}
\noindent
We make the following standard assumptions \cite{MLSYS2020_38af8613,9252927}.

\textit{Assumption $1$ ($L$-Lipschitz Gradient)}: For all $v\in \mathcal{V}_{\breve{t}_k}$ in all $k$, $F_v(\pmb{\omega})$ is $L$-Lipschitz gradient for any two parameter vectors $\pmb{\omega}$ and $\pmb{\omega}'$, i.e., $\left \Vert \nabla F_v(\pmb{\omega}) - \nabla F_v(\pmb{\omega}') \right \Vert \leq L \left\Vert \pmb{\omega} - \pmb{\omega}' \right \Vert $.

\textit{Assumption $2$ ($B$-Dissimilar Gradients)}: The local  gradient at the CVs are at most $B$-dissimilar from the global gradient $\nabla f (\pmb{\omega})$, i.e.,  $\left\Vert \nabla F_v(\pmb{\omega}) \right\Vert \leq B \left\Vert \nabla f(\pmb{\omega}) \right\Vert$, for all $v$.

\textit{Assumption $3$ ($\sigma$-Bounded Hessian)}: The smallest eigen value of the Hessian matrix is $-\sigma$ for all CVs, i.e., $\nabla^2 F_v \succeq -\sigma \mathbf{I}$. This also implies that the $f_v(\pmb{\omega}, \pmb{\omega}_k)$  in (\ref{clientLossFunction}) is $\mu'=\mu-\sigma$ strongly convex.

\textit{Assumption $4$ ($\gamma$-Inexact Local Solvers)}: Local update of the CV $v$ results in $\gamma$-inexact solution $\pmb{\omega}_{v,k+1}$ of (\ref{clientLossFunction}) in all global round $k$. 
In other words, the local update of CV $v$ yields $\left\Vert \nabla f_v( \pmb{\omega}_{v,k+1}, \pmb{\omega}_k) \right\Vert \leq \gamma \left\Vert \nabla f_v(\pmb{\omega}_k, \pmb{\omega}_k) \right\Vert$, where $\gamma \in [0,1]$. 
Note that $\gamma=0$ means solving (\ref{clientLossFunction}) optimally, while an increased value indicates how the updated model differs from the exact solution.

Additionally, we consider full-batch gradient descent, i.e., each CV performs its model training in its entire dataset, which can be extended to stochastic gradient descent.

Denote the successful trained local model reception from the client $v$ during global round $k$ by
\begin{equation}
\label{successOffloading}
\begin{aligned}
    \rs\rs\rs \mathbf{1}\rs\left(\rs\mathrm{t}_v (k) \rs \leq \rs \mathrm{t}^{\mathrm{th}}(k)|d_v(k) \rs \leq \rs \mathrm{r} \rs\right) 
    &\rs= \! \rs \begin{cases}
            \rs 1, \rs\rs\rs & \rs \text{with probability } p^{\mathrm{suc}}_{v} (k), \\ 
            \rs 0, \rs\rs\rs & \rs \text{with probability } \rs (\! 1 \! - \!p^{\mathrm{suc}}_{v} (k)\!),
            \end{cases}\rs\rs. \rs\rs\rs\rs\rs\rs\rs
\end{aligned}    
\end{equation}
where $d_v(k)$ is the distance of $v$ after performing $l_v (k)$ iterations from the gNB.

We consider two cases for device participation, namely, the FDPC and the PDPC. 
While all CVs with SLAs participate in model training for the former scenario, only a subset of these CVs participates in the latter.
In both cases, for the model aggregation, the server can take a client's locally trained model $\pmb{\omega}_{v,k+1}$ only when it successfully receives it within the deadline threshold $\mathrm{t}^{\mathrm{th}}(k)$.
As such, for FDPC, we express the server's aggregation rule as
\begin{equation}
\label{aggregationRule}
\begin{aligned}
    \rs\pmb{\omega}_{k+1} &\rs = \pmb{\omega}_k \rs + \rs \sum_{v=1}^{V_{\breve{t}_k}} \rs  \Big[\frac{p_v \rs \cdot \rs \mathbf{1}\!\left(\!\mathrm{t}_v (k) \rs \leq \rs \mathrm{t}^{\mathrm{th}}(k)|d_v(k) \rs \leq \rs \mathrm{r} \!\right)}{p^{\mathrm{suc}}_{v} (k)} \rs \left(\! \pmb{\omega}_{v,k+1} \rs - \pmb{\omega}_k \!\right) \! \rs \Big]\!,\rs \rs \rs \rs \rs \rs
\end{aligned}
\end{equation}
where $p_v \in (0,1]$ and $\sum\nolimits_{v=1}^{V_{\breve{t}_k}} p_v = 1$. 
Moreover, $p_v = \bar{p}_v/\sum_{v=1}^{V_{\breve{t}_k}} \bar{p}_v$, where $\bar{p}_v$ is calculated as follows:
\begin{equation}
\label{p_V_Mobility}
\begin{aligned} 
    \rs\bar{p}_v \rs &= \rs (1-\lambda)[ \mathrm{D}_v/ \sum\nolimits_{v=1}^{V_{\breve{t}_k}} \mathrm{D}_v] + \lambda [ \mathrm{t}_{v}^{\mathrm{soj}}(\breve{t}_k)/\sum\nolimits_{v=1}^{V_{\breve{t}_k}} \mathrm{t}_{v}^{\mathrm{soj}}(\breve{t}_k)],\rs \rs \rs \rs 
\end{aligned} 
\end{equation}
where $\lambda \in [0,1]$ is a parameter that balances the weight of the sojourn period and dataset size.

Note that the averaging step in (\ref{aggregationRule}) is unbiased, i.e., $\mathbb{E}\left[\pmb{\omega}_{k+1} \right] = \sum\nolimits_{v=1}^{V_{\breve{t}_k}} p_v \cdot \pmb{\omega}_{v,k+1}$. 
Besides, we adopt this aggregation weight $p_v$ inspired by the complementary cumulative distribution function (CCDF) of the CVs' expected sojourn periods $\mathrm{t}_{v}^{\mathrm{soj}} (\breve{t}_k)$'s.
Particularly, at high speed, $\mathrm{t}_{v}^{\mathrm{soj}} (\breve{t}_k)$ can be very small, as shown in Fig. \ref{ccdf_sojourn_period}. 
For example, at a maximum velocity $\mathrm{u}^{\mathrm{max}}$ of $26.82$ meter/second (m/s), about $28\%$ CVs have an expected sojourn period smaller than $5$ seconds.
Besides, about $13\%$ CVs have smaller than $2.5$ seconds sojourn periods.
As such, the aggregation rule shall also weigh this short sojourn period since these CVs are likely to execute only a few local iterations compared to the other CVs with relatively higher sojourn periods.

\begin{figure}[!t] \vspace{-0.05in}
    \centering
	\includegraphics[trim={15 15 15 15}, clip, width=0.47\textwidth, height=0.2\textheight]{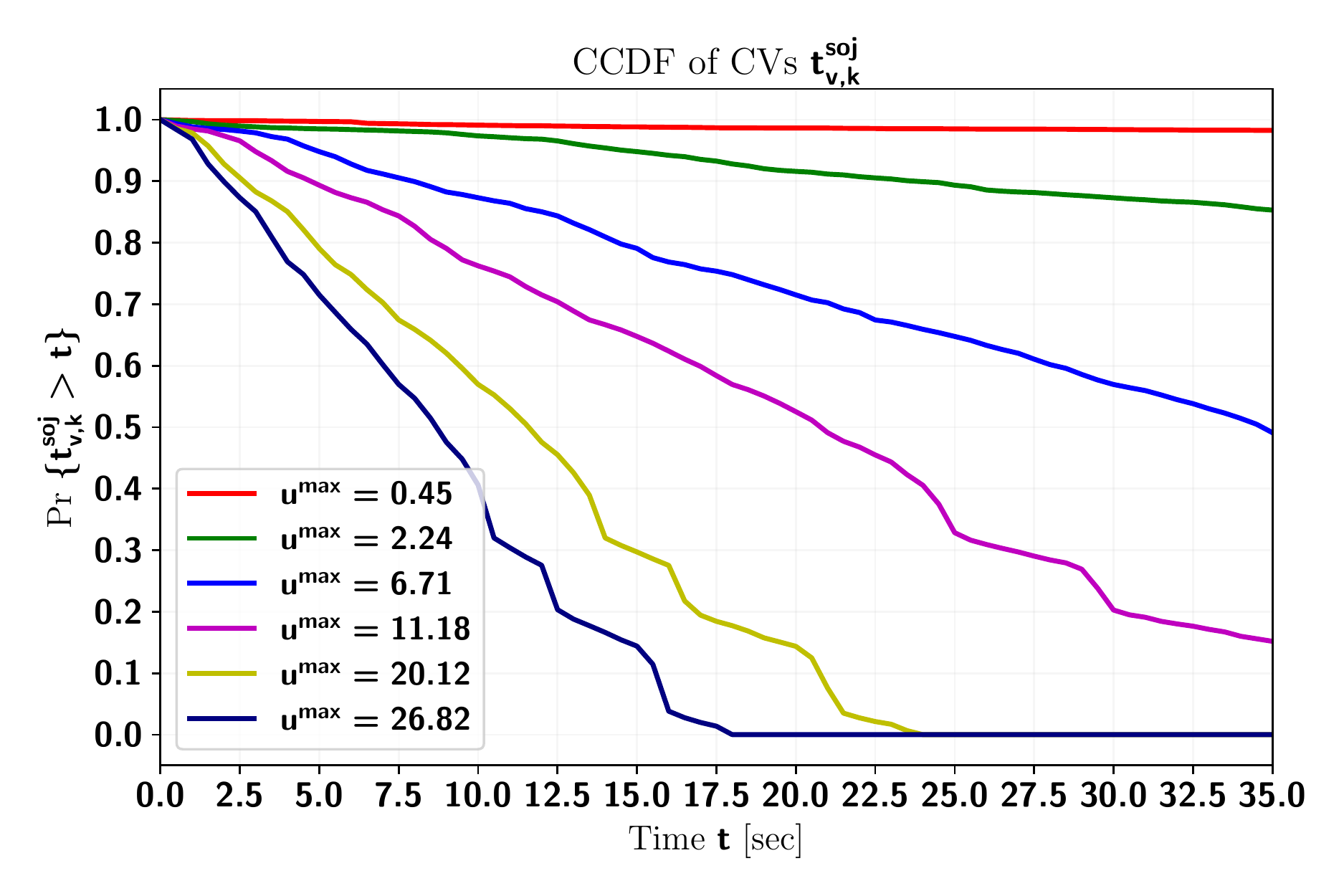} \vspace{-0.05 in}
    \caption{CCDF of CVs expected sojourn period}
    \label{ccdf_sojourn_period}
\end{figure}

One popular way to combat the straggler effect is to aggregate model parameters upon the reception of the first $|\mathcal{C}_k|$, where $\mathcal{C}_k \subseteq \mathcal{V}_{\breve{t}_k}$ \cite{Li2020On}.
In federated edge learning (FEEL), RAT resource is limited. 
Besides, the CVs are required to perform their own computational tasks, while the server is required to pay a fee if it selects a CV for model training.
As such, it is practical to select the subset $\mathcal{C}_k$ at the beginning of VEFL round $k$.
In this pragmatic case, we thus assume that the server uniformly selects $|\mathcal{C}_k|$ CVs out of the original $V_{\breve{t}_k}$ CVs without replacement in each VEFL round $k$.
Moreover, we stress that the server decides how many total CVs it can select based on its limited monetary and bandwidth budgets. 
In other words, we leave this $|\mathcal{C}_k|$ as a design parameter that the server determines contingent on the resource-constraint circumstances.
Denote the event that CV $v$ is selected during VEFL round $k$ by 
\begin{equation}
\label{cvSelectionIndicator}
\begin{aligned}
    & \mathbf{1}\left(v \in \mathcal{C}_k\right)  = \begin{cases}
            1, & \text{with probability $q_v(k)$ } \\
            0, & \text{with probability  } \left(1-q_{v} (k)\right),
            \end{cases}.
\end{aligned}    
\end{equation}

Considering the above factors, the server chooses the following aggregation rule for PDPC.
\begin{align}
\label{aggregationRulePartialNORep}
    \pmb{\omega}_{k+1} & \rs= \pmb{\omega}_k + \rs \sum\nolimits_{v=1}^{V_{\breve{t}_k}} \Big[p_v \cdot \frac{\mathbf{1}\left(v \in \mathcal{C}_k, \mathrm{t}_v (k) \leq \mathrm{t}^{\mathrm{th}}(k)|d_v(k) \leq \mathrm{r} \right)} {q_v(k) p^{\mathrm{suc}}_{v} (k)} \rs \times \rs \rs \rs \nonumber \\
    & \qquad \qquad \qquad \qquad \qquad \qquad \quad \left( \pmb{\omega}_{v,k+1} - \pmb{\omega}_k \right) \Big].
\end{align}
When all selected CVs have the same $p_v$, the averaging step in (\ref{aggregationRulePartialNORep}) is unbiased, i.e., $\mathbb{E}\left[\pmb{\omega}_{k+1} \right] = \sum\nolimits_{v=1}^{V_{\breve{t}_k}} p_v \cdot \pmb{\omega}_{v,k+1}$.
The VEFL algorithm is summarized in Algorithm \ref{vEFLalgorithm}.
The convergence analysis for PDPC is derived in Theorem \ref{theorem2}.

\begin{Theorem}
\label{theorem2}
Using our assumptions and aggregation rule (\ref{aggregationRulePartialNORep}), the expected loss decrease after one VEFL round is
\begin{equation}
\label{expeclosLiters_NoReplace}
\begin{aligned}
    &\mathbb{E} \left[f(\pmb{\omega}_{k+1}) \right] - f(\pmb{\omega}_k)  \leq (B \left(1 + \gamma \right)/\mu' ) \times \hfill \\
    & \qquad \Big(1 + \frac{BL(1+\gamma)}{2 \mu'} \sum\nolimits_{v=1}^{V_{\breve{t}_k}} \frac{p_v} {q_v(k) p^{\mathrm{suc}}_{v} (k)} \Big)  \left \Vert \nabla f(\pmb{\omega}_k) \right\Vert^2.
\end{aligned}
\end{equation}
\end{Theorem}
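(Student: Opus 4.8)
The plan is to carry out the standard inexact-FedProx descent estimate, adapted to the two independent sources of randomness in the aggregation rule (\ref{aggregationRulePartialNORep}): the uniform-without-replacement client selection of (\ref{cvSelectionIndicator}) and the intermittent V2I reception of (\ref{successOffloading}). First I would note that $f=\sum_v p_v F_v$ inherits the $L$-Lipschitz gradient property from Assumption $1$, so the descent lemma gives
\begin{equation*}
f(\pmb{\omega}_{k+1}) - f(\pmb{\omega}_k) \leq \big\langle \nabla f(\pmb{\omega}_k),\, \pmb{\omega}_{k+1}-\pmb{\omega}_k \big\rangle + \tfrac{L}{2}\big\Vert \pmb{\omega}_{k+1}-\pmb{\omega}_k \big\Vert^2 .
\end{equation*}
Writing $\Delta_v := \pmb{\omega}_{v,k+1}-\pmb{\omega}_k$ and $a_v := p_v\,\mathbf{1}(v\in\mathcal{C}_k,\,\mathrm{t}_v(k)\leq\mathrm{t}^{\mathrm{th}}(k)\,|\,d_v(k)\leq\mathrm{r})\,/\,(q_v(k)\,p^{\mathrm{suc}}_v(k))$, the update reads $\pmb{\omega}_{k+1}-\pmb{\omega}_k=\sum_v a_v\Delta_v$, and since the joint indicator has mean $q_v(k)p^{\mathrm{suc}}_v(k)$ we get $\mathbb{E}[a_v]=p_v$. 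I would then take the expectation over selection and reception conditioned on the realized local updates $\{\pmb{\omega}_{v,k+1}\}$, so that the $\Delta_v$ are treated as fixed vectors.

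Next I would bound $\Vert\Delta_v\Vert$ by the usual chain. Since $\nabla f_v(\pmb{\omega}_k,\pmb{\omega}_k)=\nabla F_v(\pmb{\omega}_k)$ and $f_v(\cdot,\pmb{\omega}_k)$ in (\ref{clientLossFunction}) is $\mu'$-strongly convex (Assumption $3$), the distance from $\pmb{\omega}_k$ to the exact local minimizer is at most $\Vert\nabla F_v(\pmb{\omega}_k)\Vert/\mu'$; the $\gamma$-inexactness (Assumption $4$) contributes an extra $\gamma\Vert\nabla F_v(\pmb{\omega}_k)\Vert/\mu'$ by strong convexity again; and the $B$-dissimilarity (Assumption $2$) replaces $\Vert\nabla F_v(\pmb{\omega}_k)\Vert$ by $B\Vert\nabla f(\pmb{\omega}_k)\Vert$, yielding $\Vert\Delta_v\Vert\leq \big(B(1+\gamma)/\mu'\big)\Vert\nabla f(\pmb{\omega}_k)\Vert$ for every $v$. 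For the linear term, $\mathbb{E}\langle\nabla f(\pmb{\omega}_k),\pmb{\omega}_{k+1}-\pmb{\omega}_k\rangle=\langle\nabla f(\pmb{\omega}_k),\sum_v p_v\Delta_v\rangle$, which by Cauchy--Schwarz, the triangle inequality, $\sum_v p_v=1$, and the $\Vert\Delta_v\Vert$ bound is at most $\big(B(1+\gamma)/\mu'\big)\Vert\nabla f(\pmb{\omega}_k)\Vert^2$.

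The crux is the quadratic term $\mathbb{E}\Vert\sum_v a_v\Delta_v\Vert^2$, where I want to sidestep the correlations among the selection indicators created by sampling without replacement. The device is to write $\sum_v a_v\Delta_v=\sum_v p_v\,(a_v\Delta_v/p_v)$ and apply Jensen's inequality to the convex function $\Vert\cdot\Vert^2$ with the deterministic weights $p_v$ (which sum to one): per realization, $\Vert\sum_v a_v\Delta_v\Vert^2\leq\sum_v (a_v^2/p_v)\Vert\Delta_v\Vert^2$. Taking expectations and using $\mathbf{1}^2=\mathbf{1}$, so that $\mathbb{E}[a_v^2]=p_v^2/(q_v(k)p^{\mathrm{suc}}_v(k))$, this collapses to $\mathbb{E}\Vert\pmb{\omega}_{k+1}-\pmb{\omega}_k\Vert^2\leq\sum_v \big(p_v/(q_v(k)p^{\mathrm{suc}}_v(k))\big)\Vert\Delta_v\Vert^2$, with no cross terms left to estimate. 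Substituting $\Vert\Delta_v\Vert^2\leq\big(B(1+\gamma)/\mu'\big)^2\Vert\nabla f(\pmb{\omega}_k)\Vert^2$ into both the linear and quadratic terms and collecting coefficients then reproduces (\ref{expeclosLiters_NoReplace}). The main obstacle I anticipate is making this second-moment step airtight --- verifying $\mathbb{E}[a_v]=p_v$ and $\mathbb{E}[a_v^2]=p_v^2/(q_v(k)p^{\mathrm{suc}}_v(k))$ under the joint law of selection and reception (treating reception as conditionally independent of selection given $v\in\mathcal{C}_k$), and confirming that the Jensen bound is legitimate here because the $\Delta_v$ are fixed once the local updates are realized.
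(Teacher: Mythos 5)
Your proposal is correct and follows essentially the same route as the paper's proof: the descent lemma from Assumption 1, the bound $\Vert\pmb{\omega}_{v,k+1}-\pmb{\omega}_k\Vert\leq (B(1+\gamma)/\mu')\Vert\nabla f(\pmb{\omega}_k)\Vert$ via strong convexity, $\gamma$-inexactness, $B$-dissimilarity and the triangle inequality, Jensen's inequality with weights $p_v$ on the quadratic term, and the expectation computation using $\mathbf{1}^2=\mathbf{1}$ together with the independence of selection and reception. Your explicit verification that $\mathbb{E}[a_v]=p_v$ and $\mathbb{E}[a_v^2]=p_v^2/(q_v(k)p_v^{\mathrm{suc}}(k))$ is exactly the step the paper carries out in equalities $(a)$ and $(b)$ of its final expectation display.
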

\begin{proof}
The proof is left in Appendix \ref{proofTheorem2}.
\end{proof}

\begin{Remark}
\label{remarkConvergence}
From the convergence analysis in (\ref{expeclosLiters_NoReplace}), we observe that $p_v$, $q_v(k)$, $p_v^{\mathrm{suc}}(k)$ and $\left\Vert \nabla f(\pmb{\omega}_k) \right\Vert$ may vary in each VEFL round $k$.
Besides, it is generally impossible to find a closed-form expression for $\left\Vert \nabla f(\pmb{\omega}_k) \right\Vert$ by relating it to the learning parameters \cite{salehi2021Federated,wang2019Adaptive}.
As such, in each VEFL round, we aim to sub-optimally minimize the loss in (\ref{expeclosLiters_NoReplace}) by concentrating on $\sum\nolimits_{v=1}^{V_{\breve{t}_k}} p_v/[q_v(k) p_v^{\mathrm{suc}}(k)]$.
\end{Remark}

\begin{Remark}
Note that FDPC is a special case of PDPC, where all CVs participate in the model training. 
Moreover, the convergence analysis and problem formulation are similar. 
\end{Remark}

\subsection{Problem Formulation}
\noindent
In our proposed VEFL, we explicitly consider CVs mobility, system heterogeneity---in terms of CPU frequency and energy budget---across CVs, intermittent Uu links, and the server's limited monetary and radio budgets.
Denote the fixed monetary budget of the server for global round $k$ by $\Xi(k)$ unit. 
Denote the cost for each unit of energy by $\phi_v$ unit/Joule.
The energy consumption of a CV depends on its number of local iterations $l_v (k)$, CPU frequency $\eta_v(k)$ and total uplink transmission power to offload the trained model $\pmb{\omega}_{v,k+1}$.
Particularly, as part of the SLA, each CV first reports its minimum CPU frequency $\eta_v^{\mathrm{min}}$, maximum CPU frequency $\eta_v^{\mathrm{max}}$, maximum transmission power $P_v^{\text{max}}$, energy budget $\mathrm{e}^{\mathrm{bud}}_{v}(k)$, dataset size $\mathrm{D}_v$ and charging policy $\xi_v (l_v(k), \eta_v(k))$ to the server.
More specifically, the CVs have the following charging policy.
\begin{equation}
\label{cost_per_iter}
\begin{aligned}
    \xi_v(l_v (k),\eta_v(k)) &= \mathrm{e}_v^{\mathrm{tot}} (k) \phi_v + \bar{\phi}_v,
\end{aligned}
\end{equation}
where $\mathrm{e}_v^{\mathrm{tot}} (k)$ is calculated in (\ref{totalEnergyConsumption}).
As such, $\mathrm{e}_v^{\mathrm{tot}} (k) \phi_v$ is the CV's expected operational cost, and $\bar{\phi}_v$ is a constant term to ensure its gain upon participating in model training.

\begin{algorithm}[t!]
\fontsize{9}{8}\selectfont
\SetAlgoLined 
\DontPrintSemicolon
\KwIn{Total global round $K$}
\For{$k=1$ to $K$}{
    Broadcast $\pmb{\omega}_k$ to the client CV set $\mathcal{Y}_k$\; \tcc*{$\mathcal{Y}_k=\mathcal{V}_{\breve{t}_k}$ and $\mathcal{Y}_k=\mathcal{C}_k$, respectively for FDPC and PDPC}
    \For {$v \in \mathcal{Y}_k$ in parallel} {
        Train the received global model for $l_v (k)$ local iterations to get the locally trained model $\pmb{\omega}_{v,k+1}$ \;
        Offload $\pmb{\omega}_{v,k+1}$ back to the server using the allocated TTIs and pRBs by the gNB \;}
    Server aggregates client CVs models to get the updated global model $\pmb{\omega}_{k+1}$ using (\ref{aggregationRule}) and (\ref{aggregationRulePartialNORep}), respectively, for FDPC and PDPC \; 
}
\KwOut{Global ML model $\pmb{\omega}_K$}
\caption{Vehicular Edge Federated Learning}
\label{vEFLalgorithm}
\end{algorithm}

To this end, denote 
$\pmb{1}(k) \rs = \rs \big[\mathbf{1}\left(1 \rs \in \rs \mathcal{C}_k\right), \dots, \mathbf{1}\left(V_{\breve{t}_k} \rs \in \rs \mathcal{C}_k\right) \big]^T \rs \rs \in \rs \mathbb{R}^{V_{\breve{t}_k}} \!$,
$\pmb{l} (k) = [l_1 (k), \dots, l_{V_{\breve{t}_k}} (k)]^T \in \mathbb{R}^{V_{\breve{t}_k}}$,
$\pmb{\eta} (k) = \big[\eta_1 (k), \dots, \eta_{V_{\breve{t}_k}} (k) \big]^T \in \mathbb{R}^{V_{\breve{t}_k}}$, 
$\pmb{\mathrm{I}}(t)=[\mathrm{I}_1(t), \dots, \mathrm{I}_{V_{\breve{t}_k}}(t)]^T \in \mathbb{R}^{V_{\breve{t}_k}}$,
$\breve{\pmb{\mathrm{I}}}_v(t) = [\mathrm{I}_{v,1}(t), \dots, \mathrm{I}_{v,Z}(t)]^T \in \mathbb{R}^{Z}$,
$\breve{\pmb{\mathrm{I}}}(t) = [\breve{\pmb{\mathrm{I}}}_1(t), \dots, \breve{\pmb{\mathrm{I}}}_{V_{\breve{t}_k}}(t)]^T \in \mathbb{R}^{V_{\breve{t}_k} \times Z}$, 
$\mathbf{p}_v(t)=[P_{v,1}(t), \dots, P_{v,Z}(t)]^T \in \mathbb{R}^{Z}$ and 
$\mathbf{p}(t)=[\mathbf{p}_1(t), \dots, \mathbf{p}_{V_{\breve{t}_k}}(t)]^T \in \mathbb{R}^{V_{\breve{t}_k} \times Z}$.
Based on Theorem \ref{theorem2} and Remark \ref{remarkConvergence}, in each VEFL round $k$, we want to sub-optimally minimize the loss by minimizing the relaxed objective $\sum\nolimits_{v=1}^{V_{\breve{t}_k}} p_v/[q_v(k) p_v^{\mathrm{suc}}(k)]$.
Note that $q_v(k)$ in the denominator is essentially the expectation of CV selection $\mathbf{1}(v \in \mathcal{C}_k)$, which is a VEFL parameter. 
Besides, $p_v^{\mathrm{suc}}$ is the CV's local model reception successful probability that depends on the RAT configurations.
Instead of solving $\sum\nolimits_{v=1}^{V_{\breve{t}_k}} p_v/[q_v(k) p_v^{\mathrm{suc}}(k)]$ directly, we pose it as relaxed a linear objective $\sum_{v=1}^{V_{\breve{t}_k}} \mathbf{1} (v \in \mathcal{C}_k) \cdot p_v^{\mathrm{suc}}(k)$ that we want to maximize.
Note that maximization of total user participation in a similar fashion is also common in the literature \cite{Ma2020Scheduling,Hamdi2022,Wang2022Federated}. 
Besides, this practical objective function also works well in simulation.
Recall that CSI is not fixed, the server has a limited monetary budget, and the gNB has a limited BWP. 
On the one hand, knowing $p_v^{\mathrm{suc}}(k)$ at the beginning of a VEFL round is impossible. 
On the other hand, the CV selection indicator functions appear in many of our constraints in the following.
Therefore, using the binary CV selection indicator function $\mathbf{1}(v \in \mathcal{C}_k)$ in the objective function serves two primary purposes: congruent problem decomposition in the sequel and elimination of optimization parameter $q_v(k)$.
As such, we want to solve the following relaxed problem.
\begin{subequations}
\label{partialDevProblem}
\begin{align}
    &\underset{\pmb{1} (k), \pmb{l} (k), \pmb{\eta} (k), \pmb{\mathrm{I}}(t), \breve{\pmb{\mathrm{I}}}(t), \mathbf{p}(t) } {\text{maximize }} ~ \sum\nolimits_{v=1}^{V_{\breve{t}_k}} \mathbf{1}(v \in \mathcal{C}_k) \cdot p_{v}^{\mathrm{suc}}(k), \tag{\ref{partialDevProblem}} \\
    &\text{s.t.} ~ \label{PD_cons1} (C_1) \quad \sum\nolimits_{v=1}^{V_{\breve{t}_k}} \mathbf{1}(v \in \mathcal{C}_k) = \left\vert \mathcal{C}_k\right\vert,\\
    & \label{PD_cons2} (C_2) \quad \mathbf{1}(v \in \mathcal{C}_k) \mathrm{t}_v(k) \leq \mathrm{t}^{\mathrm{th}}(k), ~ \forall v, \forall k, \\
    & \label{PD_cons3} (C_3) \quad \mathbf{1}(v \in \mathcal{C}_k) \mathrm{t}_v(k) \leq \mathbf{1}(v \in \mathcal{C}_k) \mathrm{t}_{v}^{\mathrm{soj}}(\breve{t}_k), ~ \forall v, \forall k,\\
    & \label{PD_cons4} (C_4) \quad  \mathbf{1}(v \in \mathcal{C}_k) \mathrm{e}_{v}^{\mathrm{tot}}(k) \leq \mathbf{1}(v \in \mathcal{C}_k) \mathrm{e}^{\mathrm{bud}}_{v}(k), ~\forall v, \forall k,\rs\\
    & \label{PD_cons5} (C_5) \quad \sum\nolimits_{v=1}^{V_{\breve{t}_k}}\mathbf{1}(v \in \mathcal{C}_k) \xi(l_v (k), \eta_v(k)) \leq \Xi(k), ~\forall k, \\
    & \label{PD_cons6} (C_6) \quad \mathbf{1}(v \in \mathcal{C}_k) \cdot l_v (k) \geq l_k^{\mathrm{des}}, ~ l_v (k) \in \mathbb{Z}^{+},~ \forall k, \forall v\\
    & \label{PD_cons7} (C_7) \quad \sum \nolimits_{v \in \mathcal{C}_k} \mathrm{I}_{v,z}(t) = 1,  \forall t, \forall v \in \mathcal{C}_k, \forall z, \\
    & \label{PD_cons8} (C_8) \quad \sum\nolimits_{z=1}^{Z} \sum\nolimits_{v \in \mathcal{C}_k} \mathrm{I}_{v,z}(t) = Z,  \forall t, \forall v \in \mathcal{C}_k, \forall z, \\
    &  \label{PD_cons9} (C_9) \quad  \sum\nolimits_{v \in \mathcal{C}_k} \mathrm{I}_v(t) \leq Z,  \forall t, \forall v \in \mathcal{C}_k, \\
    & \label{PD_cons10} (C_{10}) \quad  \{\mathrm{I}_v(t), \mathrm{I}_{v,z}(t)\} \in \{0,1\},  \forall t, \forall v\in \mathcal{C}_k, \forall z, \\
    & \label{PD_cons11} (C_{11}) \quad 0 \leq \mathrm{I}_{v,z}(t) P_{v,z}(t) \leq P_v^{\mathrm{max}}, ~~ \forall t, \forall z, \forall v \in \mathcal{C}_k\\
    & \label{PD_cons12} (C_{12}) \quad  \sum\nolimits_{z=1}^Z \mathrm{I}_{v,z}(t) P_{v,z}(t) \leq P_v^{\mathrm{max}},  ~~ \forall t, \forall z, v \in \mathcal{C}_k,
\end{align}
\end{subequations}
where constraint (\ref{PD_cons1}) restricts the server to choose $|\mathcal{C}_k|$ CVs in VEFL global round $k$.
The second constraint in $C_2$ restricts the total delay to be within the deadline budget. 
Constraint $C_3$ in (\ref{PD_cons3}) ensures that the total computation and transmission time is within the expected worst-case sojourn period.
Besides, constraint (\ref{PD_cons4}) is taken to restrict the total energy consumption to be within the total energy budgets of the CVs.
$C_5$ in (\ref{PD_cons5}) ensures that the server's total expense cannot exceed its budget $\Xi(k)$.
Furthermore, constraint $C_6$ is for the non-negative integer values of the local iteration numbers $l_v (k)$'s.
Constraint $C_7$ in (\ref{PD_cons7}) ensures that one pRB is allocated to only one CV and constraint $C_8$ in (\ref{PD_cons8}) ensures all pRBs are allocated to all CVs.
$C_9$ ensures that the total number of scheduled CVs in a slot does not exceed the total available pRBs.
Constraint $C_{10}$ is for the binary decision variables.
Finally, constraints $C_{11}$ and $C_{12}$ ensures that the allocated power over pRB $z$ and over all $Z$ pRBs are less than the maximum possible transmission power of the CV, respectively.

\begin{Remark} 
\label{Remark_Problem_Transformation}
The VEFL parameters, i.e., $\mathbf{1}(k)$, $\pmb{l}(k)$ and $\pmb{\eta}(k)$, and the RAT parameters, i.e., $\pmb{\mathrm{I}}(t)$, $\breve{\pmb{\mathrm{I}}}(t)$ and $\mathbf{p}(t)$, are coupled in problem (\ref{partialDevProblem}). 
One of the main challenges of this problem is that the CVs must know their $l_v(k)$'s and $\eta_v(k)$'s when the VEFL round $k$ starts. 
However, to solve (\ref{partialDevProblem}) for these values, the server must also know $\mathrm{t}_v^{\mathrm{tx}}(k)$'s and $\mathrm{e}_v^{\mathrm{tx}}(k)$'s, which is impossible at the beginning of the VEFL round because the Uu links change in each slot.
Moreover, the server also needs to know the charging policies $
\xi_v(l_v(k), \eta_v(k))$'s, which depend on the total energy consumption $\mathrm{e}_{v}^{\mathrm{tot}}(k)$'s.
As such, in the following, we decompose this joint problem into a VEFL parameter optimization sub-problem and a RAT parameter optimization sub-problem. 
More specifically, the VEFL parameter optimization sub-problem uses the worst-case estimation of $\mathrm{t}_v^{\mathrm{tx}}(k)$'s and $\mathrm{e}_v^{\mathrm{tx}}(k)$'s at the beginning of each VEFL round $k$. 
On the other hand, the server solves the RAT parameter optimization sub-problem when it requests for the trained models of the CVs.
\end{Remark}

\section{Problem Transformations and Solutions}
\label{PDPC_Transformation_Solution_Section}

\noindent
This section demonstrates our problem decomposition, followed by the solutions.
First, we illustrate the problem decomposition steps for the two sub-problems discussed in Remark \ref{Remark_Problem_Transformation} through Fig. \ref{probDecomp}.
As the original VEFL parameter optimization sub-problem (c.f. (\ref{iterPlusCpuFreqPD})) is challenging, we perform linear programming (LP) relaxation and transform it into a convex optimization problem (c.f. (\ref{localIterGivenEtaPDCase2})) that we solve to obtain the optimized $\mathbf{1}(k)$, $\pmb{l}(k)$ and $\pmb{\eta}(k)$ values.
Moreover, the server monitors the payload buffers of the CVs and optimizes the RAT parameters to maximize the probability of successful reception $p_v^{\mathrm{suc}}(k)$'s of the CVs' trained models (c.f. (\ref{fdRATProbOriginal})). 
However, since the CSI changes in each slot, the $p_v^{\mathrm{suc}}(k)$'s can only be determined after completing the VEFL round.  
As such, we convert the problem into a per-slot utility maximization problem (c.f. (\ref{fdRATProbTransform})). 
To that end, the earliest deadline first (EDF) based scheduling \cite{buttazzo2011hard} is adopted, followed by the resource allocation optimization. 
When more than one CV is scheduled, the RAT parameter optimization problem (c.f. (\ref{fdRATProbTransform1})) is non-convex. 
Thus, we perform LP relaxation, use the DC techniques and transform it into a relaxed convex optimization problem (c.f. (\ref{fdRATequivalProb2})) that we solve to optimize the pRB and power allocations.

\begin{figure}[!t] \vspace{-0.05 in}
    \centering
    \includegraphics[width=0.42\textwidth, height=0.21\textheight]{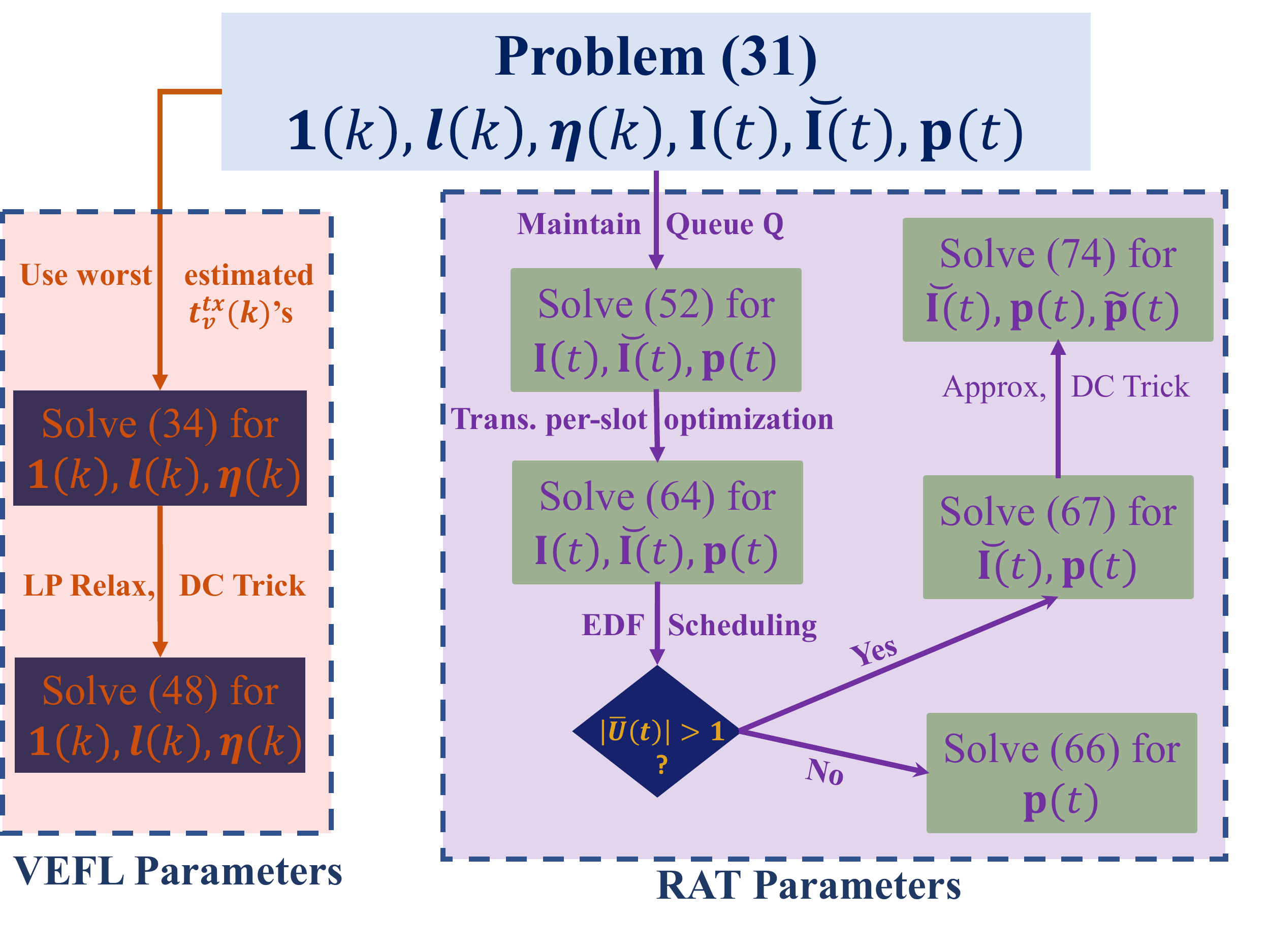} \vspace{-0.1in}
    \caption{Problem decomposition steps}
    \label{probDecomp}
\end{figure}

The server considers the following upper-bounded transmission energy consumption to calculate its per round cost at the beginning of each VEFL round $k$.
\begin{equation}
\begin{aligned}
    \hat{\xi}_v(l_v (k), \eta_v(k)) &= \big[ \mathrm{e}_v^{\mathrm{cmp}}(k) + \kappa P_v^{\mathrm{max}} \cdot \bar{\mathrm{t}}_v^{\mathrm{tx}} \big] \phi_v + \bar{\phi}_v,
\end{aligned}
\end{equation}
where $\bar{\mathrm{t}}_v^{\mathrm{tx}}$ is the worst case total number of required TTIs, which is calculated in (\ref{worstCaseTxTTIEstimate}).
\begin{equation}
\label{worstCaseTxTTIEstimate}
    \bar{\mathrm{t}}_v^{\mathrm{tx}} \rs = \rs \left\lceil S(\pmb{\omega}, \mathrm{FPP}) / (\kappa (1-\upsilon) \omega \cdot  \mathbb{E} \left[\log_2 (1 + \bar{\Gamma}_v) \right] \rs \times \! \tilde{z}(k)) \right\rceil \rs , \rs \rs \rs
\end{equation}
where $\lceil \cdot \rceil$ is the $ceil$ operator and $\bar{\Gamma}_v = \mathrm{min}\{\{\Gamma_{v,z}\}_{z=1}^Z\}$ is the worst-case SNR over the worst channel\footnote{In practical $3$GPP networks, one may use the historical CSI information to get an estimated worst-case channel condition.}. 
Besides, $\tilde{z}(k) = Z/V_{\breve{t}_k}$ if $Z<V_{\breve{t}_k}$ and $\tilde{z}(k) = 1$ otherwise.

\subsection{PDPC: VEFL Parameter Optimizations}

\noindent
Given the upper-bounded communication delay and energy consumption, the server wishes to jointly optimize $\pmb{1}(k)$, $\pmb{l}(k)$ and $\pmb{\eta}(k)$. 
Intuitively, the trained model $\pmb{\omega}_{v,k}$ is expected to deliver better performance if CV $v$ trains it for more local iterations.
Besides, more training samples usually help enhance a trained model's test performance.
However, since the CVs are not stationary, short sojourn periods can also play a critical role. 
As such, we want to configure $\mathbf{1}(k)$, $\pmb{l}(k)$ and $\pmb{\eta}(k)$ under the joint consideration of the sojourn periods and the dataset sizes.
Considering the worst-case communication delay, energy consumption and cost, we optimize the weighted sum of CVs' local training iterations, which is given as
\begin{subequations}
\label{iterPlusCpuFreqPD}
\begin{align}
    &\underset{\pmb{1} (k), \pmb{l} (k), \pmb{\eta} (k)} {\text{maximize }} \quad  \sum\nolimits_{v=1}^{V_{\breve{t}_k}} \mathbf{1}(v \in \mathcal{C}_k) l_v (k) \Theta_v , \tag{\ref{iterPlusCpuFreqPD}} \\
    &\text{s.t.} \quad  C_1, C_6, \\
    &\label{PD_VEFL_cons2} (\tilde{C}_2) ~ \mathbf{1}(v \in \mathcal{C}_k) \mathrm{t}_v^{\mathrm{cmp}}(k) + \nonumber\\
    &\quad \kappa \mathbf{1}(v \in \mathcal{C}_k) \bar{\mathrm{t}}_v^{\text{tx}} \leq  \mathbf{1}(v \in \mathcal{C}_k) \text{min}\{\mathrm{t}^{\mathrm{th}}(k), \mathrm{t}_{v}^{\mathrm{soj}}(\breve{t}_k)\}  ~ \forall v, k, \\
    &\label{PD_VEFL_cons3} (\tilde{C}_3)  \quad \mathbf{1}(v \in \mathcal{C}_k) \mathrm{e}_v^{\mathrm{cmp}}(k) +\nonumber\\
    &\qquad \quad \kappa \mathbf{1}(v \in \mathcal{C}_k) \bar{\mathrm{t}}_v^{\mathrm{tx}} P_v^{\mathrm{max}} \leq \mathbf{1}(v \in \mathcal{C}_k) \cdot \mathrm{e}^{\mathrm{bud}}_{v}(k), \forall v, k,\\
    &\label{PD_VELF_Cons4} (\tilde{C}_5) \quad \sum\nolimits_{v=1}^{V_{\breve{t}_k}} \mathbf{1}(v \in \mathcal{C}_k) \hat{\xi}(l_v (k), \eta_v(k)) \leq \Xi(k), ~~ \forall k,
\end{align}
\end{subequations}
where the constraints are taken for the same reasons as in (\ref{partialDevProblem}) and $\Theta_v = \theta_v / \sum_{v=1}^{V_{\breve{t}_k}} \theta_v$, where $\theta_v$ is calculated in (\ref{thetaV}).
\begin{equation}
\label{thetaV}
\begin{aligned} 
    \rs\rs\rs\theta_v &= (1-\bar{\lambda}) [\mathrm{D}_v / \sum\nolimits_{v=1}^{V_{\breve{t}_k}} \mathrm{D}_v] + \bar{\lambda} [\mathrm{t}_{v}^{\mathrm{soj}}(\breve{t}_k)/\sum\nolimits_{v=1}^{V_{\breve{t}_k}} \mathrm{t}_{v}^{\mathrm{soj}}(\breve{t}_k) ],\rs\rs\rs\rs\rs\rs  
\end{aligned} 
\end{equation}
where $\bar{\lambda} \in [0,1]$ is a weighting parameter.
A higher value of $\bar{\lambda}$ puts more weight on the estimated sojourn period, while a smaller value puts more weight on the dataset size.
Note that if the server solves (\ref{iterPlusCpuFreqPD}), it is expected to have $p_v^{\mathrm{suc}} (k) = 1$ since we consider the worst-case estimate for the transmission delay, energy consumption and cost.
In other words, the sub-problem (\ref{iterPlusCpuFreqPD}) retains the original problem's constraints and is expected to maximize the original objective function in each VEFL round.

Problem (\ref{iterPlusCpuFreqPD}) has binary, integer and multiplicative decision variables and is NP-hard. 
Moreover, the CPU frequency and local iteration numbers are inexorably related, affecting the CV selection process.
To tackle these grand challenges, we do standard LP relaxation on $l_v (k)$, which gives the following modified problem.
\begin{subequations}
\label{iterPlusCpuFreqPDLPrelaxed}
\begin{align}
    \underset{\pmb{1} (k), \pmb{l} (k), \pmb{\eta} (k)} {\text{maximize }} &\quad  \sum\nolimits_{v=1}^{V_{\breve{t}_k}} \mathbf{1}(v \in \mathcal{C}_k) l_v (k) \Theta_v , \tag{\ref{iterPlusCpuFreqPDLPrelaxed}} \\
    \text{s.t.} \quad  & C_1, \tilde{C}_2, \tilde{C}_3, \tilde{C}_5,\\
    \label{PD_VEFL_LPRelax_cons5} &(\tilde{C}_6) \quad \mathbf{1}(v \in \mathcal{C}_k) \cdot l_v (k) \geq l_k^{\mathrm{des}},~~ \forall v, k,
\end{align}
\end{subequations}
To that end, we now handle the multiplicative $\mathbf{1}(v \in \mathcal{C}_k)$ and $l_v (k)$ variables.
To tackle this non-linearity, let us define the following new variable:
\begin{equation}
\label{localIterNewVariables}
    \tilde{l}_v(k) \coloneqq l_v (k) \mathbf{1}(v \in \mathcal{C}_k).
\end{equation}
Since $\mathbf{1}(v \! \in \! \mathcal{C}_k) \rs \in \rs \{\! 0,1 \! \} \!$ and $l_v^{\mathrm{min}} \rs \leq \rs l_v (k) \rs \leq \rs l_v^{\mathrm{max}}\rs$, we can equivalently write this new variable with the following inequalities:
\begin{subequations}
\begin{align}
    l_v^{\mathrm{min}} \mathbf{1}(v \in \mathcal{C}_k) &\leq \tilde{l}_v(k) \leq l_v^{\mathrm{max}} \mathbf{1}(v \in \mathcal{C}_k), \label{ltildeCons1}\\
    \rs\rs \rs l_v^{\mathrm{min}} (1-\mathbf{1}(v \in \mathcal{C}_k)) &\leq l_v (k) - \tilde{l}_v(k) \leq l_v^{\mathrm{max}} (1 - \mathbf{1}(v \rs \in \rs \mathcal{C}_k)), \label{ltildeCons2}\rs\rs\rs\\
    0 \leq & \tilde{l}_v(k) \leq l_v^{\mathrm{max}} \label{ltildeCons3}.
\end{align}
\end{subequations}
Moreover, using the DC trick, the binary CV selection decision variables can equivalently be represented as follows:
\begin{subequations}
\begin{align}
    \sum\nolimits_{v=1}^{V_{\breve{t}_k}} \mathbf{1}(v \in \mathcal{C}_k) - \sum\nolimits_{v=1}^{V_{\breve{t}_k}} (\mathbf{1}(v \in \mathcal{C}_k))^2 & \leq 0,\label{ltildeCVselCons1}\\
    0 \leq \mathbf{1}(v \in \mathcal{C}_k) \leq 1, \quad \forall v \in \mathcal{V}_{\breve{t}_k} \label{ltildeCVselCons2}
\end{align}
\end{subequations}
Equation (\ref{ltildeCVselCons1}) is non-convex. 
However, since it is the difference between two convex functions, we can equivalently write the optimization problem as follows:
\begin{subequations}
\label{localIterGivenEtaPDCase1}
\begin{align}
    &\rs\rs\rs\rs\rs \underset{\pmb{1} (k), \tilde{\pmb{l}}_k, \pmb{l} (k),\pmb{\eta} (k)} {\text{minimize }} \rs -\rs \sum_{v=1}^{V_{\breve{t}_k}} \rs \tilde{l}_v(k) \Theta_v + \bar{\vartheta} \rs \Big[\rs\sum_{v=1}^{V_{\breve{t}_k}} \rs \mathbf{1}(v \rs \in \rs \mathcal{C}_k) - \rs \sum_{v=1}^{V_{\breve{t}_k}} \rs (\mathbf{1}(v \rs \in \rs \mathcal{C}_k))^2\rs\Big], \rs\rs \rs \tag{\ref{localIterGivenEtaPDCase1}} \\
    & \qquad \text{s.t.} \quad \label{pDcVselLocalIterTransform1_Cons1_0} (\ref{ltildeCons1}), (\ref{ltildeCons2}) (\ref{ltildeCons3}), (\ref{ltildeCVselCons2}),\\
    & \label{pDcVselLocalIterTransform1_Cons1} (\tilde{C}_1) \quad \sum\nolimits_{v=1}^{V_{\breve{t}_k}} \mathbf{1}(v \in \mathcal{C}_k) \leq |\mathcal{C}_k|, \forall k,\\
    &\label{PD_VEFL1_cons2} (\tilde{C}_2) ~ (\tilde{l}_v(k) c_v \mathrm{D}_v)/\eta_v(k) + \nonumber\\
    &\quad \mathbf{1}(v \in \mathcal{C}_k) \kappa \bar{\mathrm{t}}_v^{\text{tx}} \leq \mathbf{1}(v \in \mathcal{C}_k)  \text{min}\{\mathrm{t}^{\mathrm{th}}(k), \mathrm{t}_{v}^{\mathrm{soj}}(\breve{t}_k)\}  ~ \forall v, k,\rs\rs \\
    &\label{PD_VEFL1_cons3} (\tilde{C}_3)  ~~ \tilde{l}_v(k) (\zeta/2) c_v \mathrm{D}_v {\eta_v (k)}^2 + \nonumber \\
    &\qquad \qquad \mathbf{1}(v \in \mathcal{C}_k) \kappa \bar{\mathrm{t}}_v^{\mathrm{tx}} P_v^{\mathrm{max}} \leq \mathbf{1}(v \in \mathcal{C}_k)  \mathrm{e}^{\mathrm{bud}}_{v}(k), \forall v, k,\rs\\
    &\label{PD_VELF1_Cons4} (\tilde{C}_5) \sum\nolimits_{v=1}^{V_{\breve{t}_k}} \big([ \tilde{l}_v(k) (\zeta/2) c_v \mathrm{D}_v {\eta_v (k)}^2 + \mathbf{1}(v \rs \in \rs \mathcal{C}_k) \kappa P_v^{\mathrm{max}} \bar{\mathrm{t}}_v^{\mathrm{tx}} ] \phi_v + \nonumber\\
    &\qquad \qquad \qquad \mathbf{1}(v \in \mathcal{C}_k)\bar{\phi}_v \big) \leq \Xi(k), \forall k,\\
    &\label{PD_VEFL_LPRelax1_cons5} (\tilde{C}_6) \quad \tilde{l}_v(k) \geq l_k^{\mathrm{des}},~~ \forall v, k,
\end{align}
\end{subequations}
where $\bar{\vartheta} \gg 0$ is a penalty function that forces $\mathbf{1}(v \in \mathcal{C}_k)$ to be either $0$ or $1$.

Note that (\ref{localIterGivenEtaPDCase1}) is a DC problem.
We use successive convex approximation (SCA) to approximate the second quadratic term inside the penalty function as follows:
\begin{align}
\label{taylorExapnsionCVSel}
    &\sum\nolimits_{v=1}^{V_{\breve{t}_k}} (\mathbf{1}(v \rs \in \rs \mathcal{C}_k))^2 \geq \sum\nolimits_{v=1}^{V_{\breve{t}_k}} (\mathbf{1}(v \rs \in \rs \mathcal{C}_k,i))^2 \rs + \sum\nolimits_{v=1}^{V_{\breve{t}_k}} 2 \cdot \mathbf{1}(v \rs \in \rs  \mathcal{C}_k,i)[\nonumber \\
    &\qquad \qquad \mathbf{1}(v \rs \in \rs \mathcal{C}_k) - \mathbf{1}(v \rs \in \rs \mathcal{C}_k,i)] = H(\mathbf{1} (v \rs \in \rs \mathcal{C}_k,i)),\rs\rs\rs\rs
\end{align}
where $\mathbf{1} (v \! \in \! \mathcal{C}_k,i)$ is an initial feasible point.
Moreover, we can linearize the first term in $\tilde{C}_2$ as follows:
\begin{equation}
\begin{aligned}
    &\frac{\tilde{l}_v(k) c_v \mathrm{D}_v}{\eta_v(k)} \approx \frac{\tilde{l}_v(k,i) c_v \mathrm{D}_v}{\eta_v(k,i)} + \frac{c_v \mathrm{D}_v}{\eta_v(k,i)} [\tilde{l}_v(k) - \tilde{l}_v(k,i)] \rs \rs \\
    &~~ + \frac{\tilde{l}_v(k,i) c_v \mathrm{D}_v}{-(\eta_v(k,i))^2}[\eta_v(k) - \eta_v(k,i)] = A_1(\tilde{l}_v(k), \eta_v(k)),
\end{aligned}
\end{equation}
where $\tilde{l}_v(k,i)$ and $\eta_v(k,i)$ are two initial feasible points.
Similarly, we can linearize the multiplicative term in $\tilde{C}_3$ and $\tilde{C}_5$ as follows:
\begin{align}
\label{A_2_iterEta}
     &\tilde{l}_v(k) (\zeta/2) c_v \mathrm{D}_v {\eta_v (k)}^2 \approx \tilde{l}_v(k,i) (\zeta/2) c_v \mathrm{D}_v {\eta_v(k,i)}^2 +\nonumber\\
     &\qquad\qquad\qquad (\zeta/2) c_v \mathrm{D}_v {\eta_v(k,i)}^2 [\tilde{l}_v(k) - \tilde{l}_v(k,i)] +\\
     & \tilde{l}_v(k,i) \zeta c_v \mathrm{D}_v \eta_v(k,i) [\eta_v(k) - \eta_v(k,i)] = A_2(\tilde{l}_v(k), \eta_v(k)).\nonumber\rs \rs\rs 
\end{align}
Moreover, (\ref{PD_VEFL_LPRelax1_cons5}) may not always be possible due to practical limitations. 
As such, we find the upper-bounded $\tilde{l}_v(k)$ based on the time constraints as follows:
\begin{align}
\label{timeConstrainedIterBound}
    &\tilde{l}_v(k) \leq \big[\mathbf{1}(v \in \mathcal{C}_k,i) \mathrm{min}\{\mathrm{t}^{\mathrm{th}}(k), \mathrm{t}_{v}^{\mathrm{soj}}(\breve{t}_k)\} + \nonumber\\
    & \rs  [(\tilde{l}_v(k,i) c_v \mathrm{D}_v) /(-{\eta_v(k,i)}^2)] [\eta_v(k) - \eta_v(k,i)] - \mathbf{1}(v \rs \in \rs \mathcal{C}_k,i) \kappa \bar{\mathrm{t}}_v^{\mathrm{tx}} \nonumber\\
    &\qquad\qquad \qquad \big] \times [\eta_v(k,i)/(c_v\mathrm{D}_v)] = B_1(\tilde{l}_v(k), \eta_v(k)). \rs  
\end{align} 
Furthermore, CV's energy constraint provides the following upper bound.
\begin{align}
\label{energyConstrainedIterBound}
    \rs\rs \rs \tilde{l}_v(k) & \rs \leq \rs \big[ \mathbf{1}(v \rs\in\rs \mathcal{C}_k,i) \mathrm{e}^{\mathrm{bud}}_{v}(k) + (\zeta/2) c_v \mathrm{D}_v {\eta_v(k,i)}^2 [\tilde{l}_v(k) - \tilde{l}_v(k,i)] \rs \rs\nonumber \\ 
    & - \mathbf{1}(v \in \mathcal{C}_k,i) \kappa \bar{\mathrm{t}}_v^{\mathrm{tx}} P_v^{\mathrm{max}} + \tilde{l}_v(k,i) (\zeta/2) c_v \mathrm{D}_v {\eta_v(k,i)}^2\\
    &\qquad \quad \big]/[\tilde{l}_v(k,i) \zeta c_v \mathrm{D}_v \eta_v(k,i)]  
    = B_2(\tilde{l}_v(k), \eta_v(k)). \nonumber
\end{align}
Therefore, combining both time and energy constraints, the upper-bounded $\tilde{l}_v(k)$ is obtained as follows:
\begin{equation}
\label{PDPC_l_UpperBound}
    \tilde{l}_v(k) \leq \mathrm{min} \{B_1(\tilde{l}_v(k), \eta_v(k)), B_2(\tilde{l}_v(k), \eta_v(k))\}.
\end{equation}
Similarly, we can write the lower bound as follows:
\begin{equation}
\label{PDPC_l_LowerBound}
    \tilde{l}_v(k) \geq \mathrm{min} \left\{\rs l_k^{\mathrm{des}} \rs, \mathrm{min}\{ B_1(\tilde{l}_v(k), \eta_v(k)), B_2(\tilde{l}_v(k), \eta_v(k)) \} \rs \right\}\rs.\rs\rs
\end{equation}

To that end, we optimize the upper bound of problem (\ref{localIterGivenEtaPDCase1}) by successively optimizing the following problem:
\begin{subequations}
\label{localIterGivenEtaPDCase2}
\begin{align}
    &\underset{\pmb{1} (k), \tilde{\pmb{l}}_k, \pmb{l} (k), \pmb{\eta} (k)} {\text{minimize }} \rs -\rs\sum_{v=1}^{V_{\breve{t}_k}} \tilde{l}_v(k) \Theta_v + \bar{\vartheta}\bigg[\sum_{v=1}^{V_{\breve{t}_k}} \mathbf{1}(v \rs \in \rs \mathcal{C}_k) - H(\mathbf{1} (v \rs \in \rs \mathcal{C}_k,i))\bigg]\rs,\rs \tag{\ref{localIterGivenEtaPDCase2}} \\
    & \qquad \text{s.t.} \quad \label{pDcVselLocalIterTransform2_Cons1} (\ref{ltildeCons1}), (\ref{ltildeCons2}), (\ref{ltildeCons3}), (\ref{ltildeCVselCons2}), (\ref{PDPC_l_UpperBound}), (\ref{PDPC_l_LowerBound}) \\
    & \label{pDcVselLocalIterTransform1_1_Cons1} (\tilde{C}_1) \quad \sum\nolimits_{v=1}^{V_{\breve{t}_k}} \mathbf{1}(v \in \mathcal{C}_k,i) \leq |\mathcal{C}_k|, ~\forall k,\\
    &\label{PD_VELF1_1_Cons4} (\tilde{C}_5) ~~ \sum\nolimits_{v=1}^{V_{\breve{t}_k}} \big([ A_2(\tilde{l}_v(k),\eta_v(k)) + \kappa \mathbf{1}(v \in \mathcal{C}_k,i) P_v^{\mathrm{max}} \bar{\mathrm{t}}_v^{\mathrm{tx}} ] \phi_v + \nonumber\\
    &\qquad \qquad \qquad \qquad \qquad  \mathbf{1}(v \in \mathcal{C}_k,i)\bar{\phi}_v \big) \leq \Xi(k), \forall k,
\end{align}
\end{subequations}
where $H(\mathbf{1} (v \! \in \! \mathcal{C}_k,i))$ is the global underestimation of $\sum\nolimits_{v=1}^{V_{\breve{t}_k}}(\mathbf{1}(v \in \mathcal{C}_k))^2$ and is calculated in (\ref{taylorExapnsionCVSel}).
Moreover, $A_2(\tilde{l}_v(k),\eta_v(k))$ is calculated in (\ref{A_2_iterEta}).
The server solves (\ref{localIterGivenEtaPDCase2}) to jointly optimize the VEFL parameters and conveys the $l_v^{*}(k)$'s and $\eta_v^{*}(k)$'s to the selected CVs as part of the SLAs.
Note that problem (\ref{localIterGivenEtaPDCase2}) is convex and can be solved efficiently using existing solvers such as CVX \cite{diamond2016cvxpy}.
We use Algorithm \ref{pdCase_CVSelLocalIter} to solve (\ref{localIterGivenEtaPDCase2})  iteratively. 
Since (\ref{localIterGivenEtaPDCase2}) has $4V_{\breve{t}_k}$ decision variables, $2 + 6V_{\breve{t}_k}$ constraints, and Algorithm \ref{pdCase_CVSelLocalIter} runs for a maximum of $I$ iterations, the computation time complexity of our proposed solution is $\mathcal{O} \big(64IV_{\breve{t}_k}^3 (2 + 6V_{\breve{t}_k})\big)$ \cite{7812683}.

\begin{algorithm} [t!] 
\fontsize{9}{8}\selectfont
\SetAlgoLined 
\DontPrintSemicolon
\KwIn{Initial feasible set $\mathbf{1} (v \in \mathcal{C}_k,i)$, $\tilde{\pmb{l}} (k, i)$, $\pmb{\eta} (k,i)$, $i=0$, maximum iteration $I$, precision level $\epsilon^{\mathrm{prec}}$, initial penalty $\vartheta_0$}
\nl{\textbf{Repeat}:} \;
\Indp {
    $i \gets i+1$ ; $\bar{\vartheta} \gets \vartheta_0+i$ \;
    Solve (\ref{localIterGivenEtaPDCase2}) using $\bar{\vartheta}$, $\mathbf{1} (v \in \mathcal{C}_k, i-1)$, $\tilde{\pmb{l}} (k, i-1)$ and $\pmb{\eta} (k, i-1)$ to find $\mathbf{1} (v \in \mathcal{C}_k,i)$, $\tilde{\pmb{l}} (k, i)$ and $\pmb{\eta} (k, i)$ \;
    }
\Indm \textbf{Until} converge with $\epsilon^{\mathrm{prec}}$ precision or $i=I$ \;
\KwOut{CV selection set $\pmb{1} (k)$, local iteration set $\pmb{l} (k)$ and CPU frequencies $\pmb{\eta} (k)$}
\caption{Iterative Joint CV Selection and Local Iteration Selection Process}
\label{pdCase_CVSelLocalIter}
\end{algorithm}

\begin{Remark}
Note that for FDPC, as all CVs are to be selected, there are no binary $\mathbf{1} (v \in \mathcal{C}_k)$ variables. 
Apart from the DC trick, we follow a similar approach as in (\ref{iterPlusCpuFreqPD}) to jointly optimize $\pmb{l}(k)$ and $\pmb{\eta}(k)$ by maximizing the weighted sum of the CVs' local iterations, where we put the weight based on the aggregation weight $p_v = \bar{p}_v/\sum\nolimits_{v=1}^{V_{\breve{t}_k}} \bar{p}_v$. 
We omitted the optimization problem for brevity. 
Moreover, the following RAT parameter optimization process is also used for FDPC.
\end{Remark}

\subsection{RAT Parameter Optimization}
\label{ratParamOptimSubSection}
\noindent
The server initiates CVs' trained models reception from slot $\tau (k) = \mathrm{min} \big\{\tau_1 (k), \dots, \tau_{|\mathcal{C}_k|} (k) \big\}$, where $\tau_v (k)$ is calculated in (\ref{slotTXStart}), to minimize its operating cost.
\begin{equation}
\label{slotTXStart}
\begin{aligned}
    \tau_v (k) &= 
    \begin{cases}
        \breve{t}_{k} + \big\lfloor \mathrm{t}_{v}^{\mathrm{soj}}(\breve{t}_k)/\kappa \big\rfloor - \bar{\mathrm{t}}_v^{\mathrm{tx}}, & \text{ if  } \mathrm{t}_{v}^{\mathrm{soj}} (\breve{t}_k) < \mathrm{t}^{\mathrm{th}}(k), \rs\rs \\
        \breve{t}_{k+1} - \bar{\mathrm{t}}_v^{\mathrm{tx}}, & \text{ otherwise },
    \end{cases}.\rs
\end{aligned}
\end{equation}
Intuitively, the RAT resources are required only when there are uplink payloads.
As such, (\ref{slotTXStart}) calculates the slot at which a CV finishes its local model training, and $\tau(k)$ ensures that the server does not start the model receptions if no CV finishes its local model training.

Upon solving (\ref{localIterGivenEtaPDCase2}), the server broadcasts the global model $\pmb{\omega}_k$, $\pmb{l} (k)$, $\pmb{\eta}(k)$ and the transmission-reception starting slot $\tau (k)$.
Then, after performing $l_v (k)$ local iterations, CV $v$ needs to offload its trained model $\pmb{\omega}_{v,k+1}$ to the server.
From slot $\tau (k)$ and onward until the $\breve{t}_{k+1}$, the server needs to perform CV scheduling and resource allocations to successfully receive the payloads within each CV's specific deadline constraint.
Denote these optimization slots for the server by the set $\mathcal{T}_k = \{t\}_{t=\tau (k)}^{\breve{t}_{k+1}}$.

Note that all CVs have the same payload of $S(\pmb{\omega}, \mathrm{FPP})$ bits to offload to the server since they receive the same model. 
The server maintains a virtual remaining-local-model-payload buffer of the CVs $\mathbf{Q}(t)\overset{\Delta}{=}[Q_1(t), \dots, Q_{|\mathcal{C}_k|}(t)]$, where CV $v$'s remaining payload buffer $Q_v(t)$ evolves as follows:  
\begin{equation}
\label{payloadQueue}
    Q_v(t+1) = [Q_v(t) - \kappa \cdot r_v(t)]^{+}, ~\forall~ t,
\end{equation}
where $[\cdot]^{+}$ means $\mathrm{max}\{\cdot, 0\}$ and $r_v(t)$ is the uplink data rate calculated in (\ref{uplink_DataRate}).
{\color{black}{
Considering the above factors, we can calculate the probability of successful reception of the CV's trained model $\pmb{\omega}_{v,{k+1}}$ as follows:
\begin{equation}
\label{ProbSuc}
    p_v^{\mathrm{suc}} (k) = 1 - [Q_v(\breve{t}_{k+1})/S(\pmb{\omega}, \mathrm{FPP})].
\end{equation}
We stress that this $p_v^{\mathrm{suc}} (k)$ depends on the CSI, RAT parameters and the quality of the worst-case required number of slots $\bar{\mathrm{t}}_v^{\mathrm{tx}}$, calculated in (\ref{worstCaseTxTTIEstimate}).

Given that the VEFL parameters are solved using (\ref{localIterGivenEtaPDCase2}), we aim to solve the following sub-problem for jointly optimizing the RAT parameters.
\begin{subequations}
\label{fdRATProbOriginal}
\begin{align}
    &\underset{\mathbf{x}(t)} {\text{maximize }}  \quad  \sum\nolimits_{v=1}^{|\mathcal{C}_k|} p_v^{\mathrm{suc}}(k), \tag{\ref{fdRATProbOriginal}} \\
    &  \text{  s.t.} \qquad \qquad  C_7-C_{12}, \label{fdRATCons1_Org} \\
    & \mathrm{t}^{\text{tx}}_{v}(k) \leq \kappa \cdot \mathrm{min}\left\{\breve{t}_k + \big\lfloor \mathrm{t}_{v}^{\mathrm{soj}}(\breve{t}_k)/\kappa \big\rfloor - \tau_v (k), \breve{t}_{k+1} - \tau_v (k) \rs \right\}\rs,\rs \label{fdRATCons2_Org} 
\end{align}
\end{subequations} 
where $\mathbf{x}(t) = [\pmb{\mathrm{I}}(t), \breve{\pmb{\mathrm{I}}}(t), \mathbf{p}(t)]$.
Constraints in (\ref{fdRATCons1_Org}) are taken for the same reasons as in the original problem in (\ref{partialDevProblem}).
Besides, constraint (\ref{fdRATCons2_Org}) ensures that the entire trained model $\pmb{\omega}_{v,k+1}$ has to be received within the remaining $\mathrm{min}\big\{\breve{t}_k + \big\lfloor \mathrm{t}_{v}^{\mathrm{soj}}(\breve{t}_k)/\kappa \big\rfloor - \tau_v (k), \breve{t}_{k+1} - \tau_v (k) \big\}$ slots.
\begin{Remark}
   The maximization of $\sum_{v=1}^{|\mathcal{C}_k|} p_v^{\mathrm{suc}}(k)$ in (\ref{fdRATProbOriginal}) is not straightforward since the CSI varies in each slot $t$, and the CVs require multiple transmission slots to offload their trained models.
   Besides, while $p_v^{\mathrm{suc}}(k)$ is calculated at the end of the current VEFL round $k$, based on (\ref{ProbSuc}), the RAT parameters must be optimized in each slot $t$.
   Note that the radio resources are available for the entire VEFL round. 
   Intuitively, the server should seek a cost-effective way to perform the VEFL training and trained model reception to optimize the associated monetary cost, as defined in (\ref{cost_per_iter}). 
   As such, we aim to devise a per-slot-based long-term energy efficiency (EE) maximization problem to maximize the probability of successful model reception implicitly. 
   In other words, to maximize $p_{v}^{\mathrm{suc}}(k)$, the server aims to maximize the transmission EE, which shall also ensure optimized cost $\xi_v(l_v(k), \eta_v(k))$.
\end{Remark}
}}

Without any loss of generality, we define EE as follows:
\begin{equation}
\label{EE}
    \beta (t) = \big(\sum\nolimits_{v=1}^{|\mathcal{C}_k|} r_v(t)\big)/\big(\sum\nolimits_{v=1}^{|\mathcal{C}_k|} \sum\nolimits_{z=1}^Z P_{v,z}(t)\big).
\end{equation}
Note that (\ref{EE}) is the standard EE of a wireless network that calculates the tradeoff between the total sum rate and the total energy expense \cite{9174768}.
The expected average EE during round $k$ is calculated as 
\begin{equation}
\label{performMetric}
    \bar{\beta} (k) = \bar{\mathrm{r}} (k) / \bar{\mathrm{P}} (k), 
\end{equation}
where $\bar{\mathrm{r}} (k) = \left[1/(\breve{t}_{k+1} - \tau (k)) \right]  \sum\nolimits_{t=\tau (k)}^{\breve{t}_{k+1}} \mathbb{E} \big[ \sum\nolimits_{v=1}^{|\mathcal{C}_k|} r_v(t)\big] $ and $\bar{\mathrm{P}} (k) =  \left[1/(\breve{t}_{k+1} - \tau (k)) \right]  \sum\nolimits_{t=\tau (k)}^{\breve{t}_{k+1}} \mathbb{E} \big[ \sum\nolimits_{v=1}^{|\mathcal{C}_k|} \sum\nolimits_{z=1}^Z P_{v,z}(t) \big]$.
We jointly optimize the RAT parameters to maximize the EE as
\begin{subequations}
\label{fdRATProb}
\begin{align}
    &\underset{\mathbf{x}(t)} {\text{maximize }}  \quad  \bar{\beta} (k), \tag{\ref{fdRATProb}} \\
    &  \text{  s.t.} \quad  C_7-C_{12}, \label{fdRATCons1} \\
    & \mathrm{t}^{\text{tx}}_{v}(k) \leq \kappa \cdot \mathrm{min}\left\{\breve{t}_k + \big\lfloor \mathrm{t}_{v}^{\mathrm{soj}}(\breve{t}_k)/\kappa \big\rfloor - \tau_v (k), \breve{t}_{k+1} - \tau_v (k) \rs \right\}\rs,\rs \label{fdRATCons2}
\end{align}
\end{subequations}
where the constraints are taken for the same reasons as in (\ref{fdRATProbOriginal}).

\begin{Remark}
    By maximizing the EE, problem (\ref{fdRATProb}) ensures that the monetary cost for the uplink transmission is optimized, while constraint (\ref{fdRATCons2}) ensures that all local $\pmb{\omega}_{v,k+1}$'s are fully received. 
    Therefore, if problem (\ref{fdRATProb}) is solved, the server should ensure the maximization of $\sum\nolimits_{v=1}^{|\mathcal{C}_k|} p_{v}^{\mathrm{suc}}(k)$ in a cost effective way, which provides the expected loss minimization of the VEFL.
\end{Remark}

The objective function in (\ref{fdRATProb}) is fractional, non-convex and challenging to solve.
In practice, the Dinkelbach method \cite{dinkelbach1967nonlinear} is widely used to transform the fractional objective function equivalently into a linear one.
Denote the optimal solution for (\ref{fdRATProb}) by $\mathbf{x}^{*}(t)=[\pmb{\mathrm{I}}^{*}(t), \breve{\pmb{\mathrm{I}}}^{*}(t), \mathbf{p}^{*}(t)]$ and the corresponding optimal long-term EE by $\bar{\beta} (k)^{*} = \bar{\mathrm{r}} (k)^{*} / \bar{\mathrm{P}} (k)^{*}$.
Then, using the Dinkelbach approach, we transform the fractional objective function to the following equivalent form \cite{8643543}:
\begin{equation}
\label{equivDinkObj}
    \bar{\beta} (k) = \bar{\mathrm{r}} (k) / \bar{\mathrm{P}} (k) = \bar{\mathrm{r}} (k) - \bar{\beta} (k)^{*} \bar{\mathrm{P}} (k).
\end{equation}
However, since $\bar{\beta} (k)^{*}$ is unknown beforehand, available historical information is usually utilized. 
Denote $\bar{\beta} (k,t)$ as
\begin{equation}
\label{barBetaDink}
    \bar{\beta} (k,t) \rs \coloneqq \rs \!\big[\sum\nolimits_{\hat{t}=\tau (k)}^{t-1} \rs \sum\nolimits_{v=1}^{|\mathcal{C}_k|} \rs r_v(\hat{t}) \! \big] \! \big/ \! \big[\!\sum\nolimits_{\hat{t}=\tau (k)}^{t-1} \rs \sum\nolimits_{v=1}^{|\mathcal{C}_k|}\sum\nolimits_{z=1}^Z \rs P_{v,z}(\hat{t})\big]\rs,\rs\rs  
\end{equation}
where $\bar{\beta} (k, \tau (k))=0$.

To that end, plugging (\ref{equivDinkObj}) and (\ref{barBetaDink}) into the objective function of (\ref{fdRATProb}), we get the modified objective $\bar{\mathrm{r}} (k) - \bar{\beta} (k,t) \bar{\mathrm{P}} (k)$. 
Then, we can rewrite problem (\ref{fdRATProb}) as follows: 
\begin{subequations}
\label{fdRATProbDinkTransformed1}
\begin{align}
    &\underset{\mathbf{x}(t)} {\text{minimize }}  \quad  -\bar{\mathrm{r}} (k) + \bar{\beta} (k,t) \bar{\mathrm{P}} (k), \tag{\ref{fdRATProbDinkTransformed1}} \\
    & \quad  \text{  s.t.} \quad  C_7-C_{12}, (\ref{fdRATCons2}), \label{fdRATtransformed1Cons}
\end{align}
\end{subequations}
Note that problems (\ref{fdRATProb}) and (\ref{fdRATProbDinkTransformed1}) are equivalent \cite{8643543}, and similar treatment is widely used in the literature \cite{9174768,9136761,8643543,9718549}.

Clearly, problem (\ref{fdRATProbDinkTransformed1}) is stochastic due to the intermittent Uu links and is challenging to solve. As such, we leverage the Lyapunov optimization framework to transform (\ref{fdRATProb}) into an online and per-slot-wise deterministic optimization problem.
Let us define a quadratic Lyapunov function as
\begin{equation}
\label{lyapunovFunc}
    L(\mathbf{Q}(t)) \coloneqq (1/2) \sum\nolimits_{v=1}^{|\mathcal{C}_k|} {Q_v(t)}^2.
\end{equation}
Then, we can write the conditional Lyapunov drift as
\begin{equation}
    \Delta (\mathbf{Q}(t)) = \mathbb{E}\left[L(\mathbf{Q}(t+1)) - L(\mathbf{Q}(t))|\mathbf{Q}(t)\right].
\end{equation}
To that end, we express a Lyapunov drift-plus-penalty function as follows:
\begin{equation}
\label{lyapunovDrifPenaltyFunc}
\begin{aligned}
    \Delta_C (\mathbf{Q}(t)) &= \Delta (\mathbf{Q}(t)) + C \cdot \mathbb{E} \Big[- \sum\nolimits_{v=1}^{|\mathcal{C}_k|} r_v(t) +\\
    &\qquad \qquad \quad \bar{\beta} (k,t)\sum\nolimits_{v=1}^{|\mathcal{C}_k|} \sum\nolimits_{z=1}^Z P_{v,z}(t) |\mathbf{Q}(t) \Big],
\end{aligned}
\end{equation}
where $C \in [0, +\infty]$ is a control parameter that adjusts the trade-off between EE and payload buffer backlogs.

\begin{Remark}
    We want to minimize the upper-bounded Lyapunov drift-plus-penalty function in (\ref{lyapunovDrifPenaltyFunc}). 
    The right-hand side suggests minimizing the drift, i.e., payload buffer size, while the added penalty controls the EE of the system.
\end{Remark}

\begin{Lemma}
\label{lyaPunovLemma}
For an arbitrary $\mathbf{x}(t) = [\pmb{\mathrm{I}}(t), \breve{\pmb{\mathrm{I}}}(t), \mathbf{p}(t)]$, the Lyapunov drift-plus-penalty function is upper-bounded as
\begin{equation}
\label{drifPenaltyEq}
\begin{aligned}
    &\Delta_C (\mathbf{Q} (t))  \leq \varpi -\mathbb{E} \left[\sum\nolimits_{v=1}^{|\mathcal{C}_k|} \kappa r_v(t) Q_v(t) | \mathbf{Q}(t) \right] + \\
    &~~  C \cdot \mathbb{E} \left[-\sum\nolimits_{v=1}^{|\mathcal{C}_k|} r_v(t) + \bar{\beta} (k,t)\sum\nolimits_{v=1}^{|\mathcal{C}_k|} \sum\nolimits_{z=1}^Z P_{v,z}(t) |\mathbf{Q}(t) \right],
\end{aligned}
\end{equation}
where $\varpi \geq 0$ does not depend on the queue state.
\end{Lemma}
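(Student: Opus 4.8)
The plan is to follow the canonical Lyapunov drift-plus-penalty manipulation. First I would fix an arbitrary slot $t$ and feasible decision $\mathbf{x}(t)=[\pmb{\mathrm{I}}(t),\breve{\pmb{\mathrm{I}}}(t),\mathbf{p}(t)]$, and start from the per-user payload recursion (\ref{payloadQueue}). For each $v$, squaring $Q_v(t+1)=[Q_v(t)-\kappa r_v(t)]^{+}$ and invoking the elementary fact that $([a-b]^{+})^{2}\leq (a-b)^{2}=a^{2}-2ab+b^{2}$ for all real $a,b$ gives
\[
Q_v(t+1)^{2}\leq Q_v(t)^{2}-2\kappa\, r_v(t)\,Q_v(t)+\kappa^{2} r_v(t)^{2}.
\]
Summing over $v=1,\dots,|\mathcal{C}_k|$, dividing by $2$, and recalling the definition (\ref{lyapunovFunc}) of $L(\cdot)$ yields
\[
L(\mathbf{Q}(t+1))-L(\mathbf{Q}(t))\leq \frac{\kappa^{2}}{2}\sum\nolimits_{v=1}^{|\mathcal{C}_k|} r_v(t)^{2}-\kappa\sum\nolimits_{v=1}^{|\mathcal{C}_k|} r_v(t)\,Q_v(t).
\]

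Next I would absorb the quadratic rate term into a queue-independent constant. Because constraints $C_{11}$ and $C_{12}$ cap each $P_{v,z}(t)$ by $P_v^{\mathrm{max}}$, the pRB set $\mathcal{Z}$ is finite, the noise variance $\varsigma^{2}>0$, and $\log_2(1+\Gamma_{v,z}(t))$ is integrable under the adopted UMa path-loss/log-normal shadowing/fading model, the per-slot rate $r_v(t)$ in (\ref{uplink_DataRate}) admits a finite upper bound $r_v^{\mathrm{max}}$ that does not depend on $\mathbf{Q}(t)$. Hence $\tfrac{\kappa^{2}}{2}\sum_{v} r_v(t)^{2}\leq \varpi$ with $\varpi\coloneqq \tfrac{\kappa^{2}}{2}\sum_{v=1}^{|\mathcal{C}_k|}(r_v^{\mathrm{max}})^{2}\geq 0$, which is exactly the constant appearing in (\ref{drifPenaltyEq}).

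Finally I would take the conditional expectation $\mathbb{E}[\,\cdot\mid\mathbf{Q}(t)]$ of both sides to obtain $\Delta(\mathbf{Q}(t))\leq \varpi-\kappa\,\mathbb{E}[\sum_v r_v(t)Q_v(t)\mid\mathbf{Q}(t)]$, then add the penalty term $C\cdot\mathbb{E}[-\sum_v r_v(t)+\bar{\beta}(k,t)\sum_v\sum_z P_{v,z}(t)\mid\mathbf{Q}(t)]$ to both sides; by the definition (\ref{lyapunovDrifPenaltyFunc}) the left-hand side is precisely $\Delta_C(\mathbf{Q}(t))$, so the resulting inequality is (\ref{drifPenaltyEq}).

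The only step that is more than bookkeeping is the second one: one must argue that $\varpi$ is a genuine finite constant independent of the backlog, i.e., that $r_v(t)$ is uniformly bounded. This is where the power budget $C_{11}$/$C_{12}$, the finiteness of $Z$, and the boundedness/integrability of the per-pRB spectral efficiency under the adopted channel model are essential; the remaining ``drop the $[\cdot]^{+}$, square, telescope'' computation and the concluding drift-plus-penalty rearrangement are entirely routine.
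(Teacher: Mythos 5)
Your proof is correct and follows essentially the same route as the paper: square the recursion (\ref{payloadQueue}), drop the $[\cdot]^{+}$, sum and halve to bound the drift, absorb the quadratic rate term into a queue-independent constant $\varpi$, then take conditional expectations and add the penalty. The only difference is cosmetic: where you invoke an abstract uniform bound $r_v^{\mathrm{max}}$ on the rate, the paper makes $\varpi$ explicit by applying Cauchy--Schwarz to $\bigl(\sum_z \log_2(1+\Gamma_{v,z}(t))\bigr)^2$, the inequality $(\log_2(1+a))^2 \leq 2a/(\ln 2)^2$, and the per-pRB power cap $P_v^{\mathrm{max}}/Z$, which yields a closed-form constant in terms of $P_v^{\mathrm{max}}$ and $\Vert\mathbf{h}_{v,z}(t)\Vert^2$; both constants satisfy the lemma's requirement of being independent of $\mathbf{Q}(t)$.
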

\begin{proof}
The proof is left in Appendix \ref{proofLyaPunovLemma}.
\end{proof}

From Lemma \ref{lyaPunovLemma}, we find the following per-slot utility function that we want to minimize:
\begin{equation}
\begin{aligned}
    &\rs \rs \mathrm{U}(t) \! = \! C \bar{\beta} (k,t) \rs \sum\nolimits_{v=1}^{|\mathcal{C}_k|} \sum\nolimits_{z=1}^Z \rs P_{v,z}(t) \! - \rs \sum\nolimits_{v=1}^{|\mathcal{C}_k|} r_v(t) \! \big[\!\kappa Q_v(t) \! + \! C \big]\rs.\rs \rs \rs \rs\rs \rs\rs \rs
\end{aligned}
\end{equation}
Using (\ref{uplink_SINR}) and (\ref{uplink_DataRate}), we can write
\begin{align*}
    &\mathrm{U}(t) = C \bar{\beta} (k,t) \sum\nolimits_{v=1}^{|\mathcal{C}_k|} \sum\nolimits_{z=1}^Z \rs P_{v,z}(t) - \omega (1 - \upsilon) \sum\nolimits_{v=1}^{|\mathcal{C}_k|} \rs \mathrm{I}_v(t) \big[\\
    &\sum\nolimits_{z=1}^Z  \log_2 ( 1 + (\mathrm{I}_{v,z}(t) P_{v,z}(t) \left\Vert \mathbf{h}_{v,z}(t) \right\Vert^2)/(\omega\varsigma^2) ) \big] \left(\kappa Q_v(t) + C\right).\rs \rs
\end{align*}
As such, we can pose the following per-slot online optimization problem that can be solved opportunistically.
\begin{subequations}
\label{fdRATProbTransform}
\begin{align}
    \underset{\mathbf{x}(t)} {\text{minimize }}  & \quad  \mathrm{U} (t), \tag{\ref{fdRATProbTransform}} \\
    \text{  s.t.} \quad  \quad & C_7-C_{12}, (\ref{fdRATCons2}), \label{fdRATtransformCons1} 
\end{align}
\end{subequations}
where the constraints are taken for the same reasons as in (\ref{fdRATProbDinkTransformed1}).

Recall that CV $v$ starts offloading its trained model from slot $\tau_v (k)$.
Therefore, in slot $t \in \mathcal{T}_k$, denote the CVs that need to offload their trained models by the set $\mathcal{U}(t)$.
Since the gNB can schedule at most $Z$ CVs in a slot for their uplink transmissions and all $\pmb{\omega}_{v,k+1}$'s need to be received within slot $\breve{t}_{k+1}$ to meet the deadline threshold, the gNB has to perform a scheduling decision when $|\mathcal{U}(t)| > Z$.
As such, to ensure (\ref{fdRATCons2}), the medium access control (MAC) scheduler schedules the CVs based on the remaining deadlines and payload buffer status $\mathbf{Q}_v(t-1)$.
Particularly, we adopt the widely used real-time operating system's  EDF-based scheduling\footnote{Similar scheduling is also widely used in deadline-constrained applications \cite{Guan2020Maximize, Guo2019Enabling, pervej2022efficient}.} \cite{buttazzo2011hard}. 
Note that if EDF cannot guarantee zero deadline violation, no other algorithm can \cite{Guan2020Maximize}. 
Algorithm \ref{cvScheduling} provides the scheduled CV set $\bar{\mathcal{U}}(t)$ during slot $t$.

Per this scheduling policy, a CV is scheduled in at least $\big\lfloor (Z/|\mathcal{C}_k|) \times \mathrm{min}\left\{\breve{t}_k + \big\lfloor \mathrm{t}_{v}^{\mathrm{soj}}(\breve{t}_k)/\kappa \big\rfloor, \breve{t}_{k+1} \right\} - \tau_v (k) \big\rfloor$ slots.
As such, to satisfy constraint (\ref{fdRATCons2}), we enforce
\begin{equation}
\label{UlReqTxBits}
    r_v(t) \geq \rs 
    \begin{cases}
        \rs \frac{Q_v(t-1)} {\rs \kappa \times \mathrm{min}\left\{\breve{t}_k + \big\lfloor \mathrm{t}_{v}^{\mathrm{soj}}(\breve{t}_k)/\kappa \big\rfloor - \tau_v (k), \breve{t}_{k+1} - \tau_v (k) \right\}}, \rs\rs \rs\rs & \rs\rs\rs \rs \rs\rs \text{if } |\mathcal{C}_k| \leq Z,\\
        \rs \frac{Q_v(t-1)} {\rs \kappa  \times \rs \big\lfloor \rs (Z/|\mathcal{C}_k|) \times \mathrm{min}\left\{\breve{t}_k + \big\lfloor \mathrm{t}_{v}^{\mathrm{soj}}(\breve{t}_k)/\kappa \big\rfloor, \breve{t}_{k+1} \right\} - \tau_v (k) \big\rfloor}, & \rs \text{else},
    \end{cases}\rs\rs, 
\end{equation}
where $t=\tau_v (k), \dots, \mathrm{min}\{\breve{t}_k + \big\lfloor \mathrm{t}_{v}^{\mathrm{soj}}(\breve{t}_k)/\kappa \big\rfloor, \breve{t}_{k+1}$\}.

To that end, note that in a particular slot $t$, if only $1$ CV requires offloading its trained model, it shall have all pRBs, i.e., $\mathrm{I}_{v,z}(t)=1, \forall z \in \mathcal{Z}$.
The server only needs to optimize the transmission power of the scheduled CV. 
As such, in this special case, we pose the following optimization problem: 
\begin{subequations}
\label{fdRATProbTransform_Special1CV}
\begin{align}
    \underset{\mathbf{p}_v(t)} {\text{minimize }} \quad & \quad  \mathrm{U} (t), \tag{\ref{fdRATProbTransform_Special1CV}} \\
    \text{s.t. } \qquad & \quad  C_{11}, C_{12}, (\ref{UlReqTxBits}), \mathrm{I}_{v,z} (t) = 1, \quad  \forall z.
\end{align}
\end{subequations}

\begin{algorithm} [t!]
\fontsize{9}{8}\selectfont
\SetAlgoLined 
\DontPrintSemicolon
Determine the CV set $\mathcal{U}(t)$ eligible for scheduling \;
Set empty set $\bar{\mathcal{U}}(t)$ \;
\uIf{$|\mathcal{U}(t)| \leq Z$}{
    $\bar{\mathcal{U}}(t) \gets \mathcal{U}(t)$ \;
    Set $\mathrm{I}_v(t) = 1, \forall v \in \mathcal{U}(t)$} 
\Else{
    $\bar{\mathcal{U}}(t) \gets \mathrm{indmin}_{[Z]} \{T_1^{\mathrm{thr}}, \dots, T_{|\mathcal{U}(t)|}^{\mathrm{thr}} \}$, where $T_v^{\mathrm{thr}} = \mathrm{min}\left\{\breve{t}_k + \big\lfloor \mathrm{t}_{v}^{\mathrm{soj}}(\breve{t}_k)/\kappa \big\rfloor - t, \breve{t}_{k+1} - t \right\}$ \tcc{\textit{$\mathrm{indmin}_{[Z]}\{\mathcal{Q}\}$ means the index of the first $Z$ smallest entry of $\mathcal{Q}$}}
    Set $\mathrm{I}_v(t)=1$ $\forall v \in \bar{\mathcal{U}}(t)$ 
}
\KwOut{Scheduled CV set $\bar{\mathcal{U}}(t)$}
\caption{CV Scheduling in Each TTI}
\label{cvScheduling}
\end{algorithm}

However, when $|\mathcal{U}(t)|>1$, the MAC scheduler schedules the CVs in the set $\bar{\mathcal{U}}(t)$ based on Algorithm \ref{cvScheduling}.
We, therefore, reformulate optimization problem (\ref{fdRATProbTransform}) as
\begin{subequations}
\label{fdRATProbTransform1}
\begin{align}
    &\underset{\bar{\mathbf{x}}(t)} {\text{minimize }} \quad  \bar{\mathrm{U}} (t), \tag{\ref{fdRATProbTransform1}} \\
    &\text{s.t. }  C6, C7, C10, C11, (\ref{UlReqTxBits}), \mathrm{I}_{v,z}(t) \rs \in \rs \{0,1\}, \forall v \in \bar{\mathcal{U}}(t), z, \label{fdRATtransformCons1_1}
\end{align}
\end{subequations}
where $\bar{\mathbf{x}} (t) = [\breve{\pmb{\mathrm{I}}}(t), \mathbf{p}(t)] $ is the decision variables set for all $v\in \bar{\mathcal{U}}(t)$ and $\bar{\mathrm{U}}(t)$ is calculated as 
\begin{align*}
\label{barUt}
    &\bar{\mathrm{U}}(t) = C \bar{\beta} (k,t) \rs \sum\nolimits_{v \in \bar{\mathcal{U}}(t)} \! \sum\nolimits_{z=1}^Z \rs P_{v,z}(t) \!-\! \omega(1-\upsilon) \rs \sum\nolimits_{v \in \bar{\mathcal{U}}(t)} \rs \big[\rs\\
    & \sum\nolimits_{z=1}^Z \log_2 ( 1 + (\mathrm{I}_{v,z}(t) P_{v,z}(t) \left\Vert \mathbf{h}_{v,z}(t) \right\Vert^2)/(\omega\varsigma^2) ) \big] [ C + \kappa Q_v(t)]. \rs
\end{align*}    
Problem (\ref{fdRATProbTransform1}) has binary and multiplicative decision variables, making it non-convex.
To handle these complexities, we introduce the following new variable.
\begin{equation}
\label{newVarPower}
    \tilde{P}_{v,z} (t) \coloneqq \mathrm{I}_{v,z}(t) P_{v,z}(t).
\end{equation}
Now, since $\mathrm{I}_{v,z} (t) \in\{0,1\}$ and $0\leq P_{v,z}(t) \leq P_v^{\mathrm{max}}$, $\tilde{P}_{v,z} (t)$ is equivalent to the following inequalities:
\begin{subequations}
\begin{align}
    0 \cdot \mathrm{I}_{v,z}(t) & \leq \tilde{P}_{v,z} (t) \leq P_v^{\mathrm{max}} \cdot \mathrm{I}_{v,z}(t), \label{transformCons1}\\
    0\cdot[1 - \mathrm{I}_{v,z}(t)] & \leq P_{v,z}(t) - \tilde{P}_{v,z} (t) \leq P_v^{\mathrm{max}} \cdot [1 - \mathrm{I}_{v,z}(t)], \label{transformCons2}\\
    0 \leq & \tilde{P}_{v,z} (t) \leq P_v^{\mathrm{max}},\label{transformCons3}
\end{align}
\end{subequations}
Note that the binary decision variable $\mathrm{I}_{v,z}(t) \in \{0,1\}$ is equivalent to $\mathrm{I}_{v,z}(t) - (\mathrm{I}_{v,z}(t))^2 = 0$.
Then, using the DC trick, we equivalently represent this binary decision constraint as \cite{6816086,7812683}:
\begin{subequations}
\begin{align}
    &\sum\nolimits_{v\in\bar{\mathcal{U}}(t)} \sum\nolimits_{z=1}^Z \mathrm{I}_{v,z}(t) - \sum\nolimits_{v\in\bar{\mathcal{U}}(t)} \sum\nolimits_{z=1}^Z (\mathrm{I}_{v,z}(t))^2 \leq 0, \label{transformCons4}\\ 
    & \qquad 0 \leq \mathrm{I}_{v,z}(t) \leq 1, \quad \forall v \in \bar{\mathcal{U}}(t), z \in \mathcal{Z}. \label{transformCons5}
\end{align}
\end{subequations}
To that end, using (\ref{newVarPower}-\ref{transformCons5}), we equivalently rewrite (\ref{fdRATProbTransform1}) as 
\begin{subequations}
\label{fdRATequivalProb}
\begin{align}
    &\underset{\tilde{\mathbf{x}}(t)} {\text{minimize }} \quad  \tilde{\mathrm{U}} (t), \tag{\ref{fdRATequivalProb}} \\
    &\text{  s.t.} \quad  C_7, C_8, (\ref{transformCons1}), (\ref{transformCons2}), (\ref{transformCons3}), (\ref{transformCons4}), (\ref{transformCons5}), \label{fdRATequivalCons1_1}\\ 
    &\qquad \quad  \sum\nolimits_{z=1}^Z \tilde{P}_{v,z} (t) \leq P_v^{\mathrm{max}},  ~~ \forall t, \forall z, \label{fdRATequivalCons2_2}\\
    &\rs \tilde{r}_v(t) \geq \frac{Q_v(t-1)}{\rs \rs \kappa \rs \times \rs \big\lfloor \rs (Z/|\mathcal{C}_k|) \rs \times \rs (\mathrm{min}\{\breve{t}_k + \big\lfloor \mathrm{t}_{v}^{\mathrm{soj}}(\breve{t}_k)/\kappa \big\rfloor, \breve{t}_{k+1} \} - t) \big\rfloor},\rs\rs \label{fdRATequivalCons2_3}
\end{align}
\end{subequations}
where $\tilde{\mathbf{x}}(t) = [\breve{\pmb{I}}(t),\mathbf{p}(t),\tilde{\mathbf{p}}(t)]$, $\tilde{\mathbf{p}}(t)=[\tilde{\mathbf{p}}_1(t), \dots, \tilde{\mathbf{p}}_{|\mathcal{C}_k|}(t)]^T$, $\tilde{\mathbf{p}}_v(t)=[\tilde{P}_v^1(t),\dots, \tilde{P}_{v,z} (t)]^T$, 
$\tilde{\mathrm{U}}(t) = C \bar{\beta} (k,t) \sum\nolimits_{v \in \bar{\mathcal{U}}(t)}\sum\nolimits_{z=1}^Z \tilde{P}_{v,z} (t) - \omega(1-\upsilon) \sum\nolimits_{v \in \bar{\mathcal{U}}(t)} [\sum\nolimits_{z=1}^Z \log_2 ( 1 + (\tilde{P}_{v,z} (t) \left\Vert \mathbf{h}_{v,z}(t) \right\Vert^2)/(\omega\varsigma^2) ) ] [C + \kappa Q_v(t)] $ 
and $\tilde{r}_v(t) = \omega(1-\upsilon)\sum\nolimits_{z=1}^Z [\log_2 (1 + (\tilde{P}_{v,z} (t) \left\Vert \mathbf{h}_{v,z}(t) \right\Vert^2)/(\omega \varsigma^2) ) ]$.

Notice that constraint (\ref{transformCons4}) makes problem (\ref{fdRATequivalProb}) non-convex. 
Particularly, constraint (\ref{transformCons4}) is the difference between two convex functions.
Thus, this constraint can be incorporated into the objective function to act as a penalty when violated. 
Particularly, for a sufficiently large constant $\vartheta$, optimization problem (\ref{fdRATequivalProb}) can be equivalently \cite{6816086} represented as
\begin{subequations}
\label{fdRATequivalProb1}
\begin{align}
    &\underset{\tilde{\mathbf{x}}(t)} {\text{minimize }}  \tilde{\mathrm{U}} (t) \! + \! \vartheta \! \Big[\rs \sum_{v\in\bar{\mathcal{U}}(t)} \rs \rs \rs \sum\nolimits_{z=1}^Z \mathrm{I}_{v,z}(t) - \rs \rs \rs \rs \sum_{v\in\bar{\mathcal{U}}(t)} \rs \rs \rs \sum\nolimits_{z=1}^Z (\mathrm{I}_{v,z}(t))^2 \! \Big]\rs, \tag{\ref{fdRATequivalProb1}} \rs \rs \rs \rs \\
    &\text{  s.t.} \quad  C_7, C_8, (\ref{transformCons1}), (\ref{transformCons2}), (\ref{transformCons3}), (\ref{transformCons5}), (\ref{fdRATequivalCons2_2}), (\ref{fdRATequivalCons2_3}). \label{fdRATequivalCons2_1}
\end{align}
\end{subequations}
Intuitively, $\vartheta \gg 0$ penalizes the objective function when any $\mathrm{I}_{v,z}(t)$ is not either $0$ or $1$.
Therefore, for a hefty $\vartheta$, the pRB allocation parameter will need to be either $0$ or $1$ to minimize the objective function \cite{6816086, 7812683}.

\begin{Remark}
    The optimization problem (\ref{fdRATequivalProb1}) belongs to the DC function programming because all four terms in the objective functions are convex, and the constraints span a convex set.
\end{Remark}
To that end, we use SCA to approximate the second term of (\ref{transformCons4}).
Using Taylor expansion, we write the following for a feasible point $\mathrm{I}_{v,z}(t, j)$.
\begin{align}
\label{approxRATsquaredTerm}
    &\sum\nolimits_{v\in\bar{\mathcal{U}}(t)} \sum\nolimits_{z=1}^Z (\mathrm{I}_{v,z}(t))^2 \geq \sum\nolimits_{v\in\bar{\mathcal{U}}(t)} \sum\nolimits_{z=1}^Z (\mathrm{I}_{v,z}(t, j))^2 + \nonumber\\
    &~\sum\nolimits_{v\in\bar{\mathcal{U}}(t)} \sum\nolimits_{z=1}^Z 2\mathrm{I}_{v,z}(t, j) [\mathrm{I}_{v,z}(t) -\mathrm{I}_{v,z}(t, j)] = H(\tilde{\pmb{\mathrm{I}}}(t, j)),
\end{align}
where $\tilde{\pmb{\mathrm{I}}}(t, j)$ is the set of all feasible points for all $v$ and $z$.
As such, given an initial feasible $\tilde{\mathbf{x}}(t,j)$, we obtain the upper bound of (\ref{fdRATequivalProb1}) via optimizing the following convex optimization problem.
\begin{subequations}
\label{fdRATequivalProb2}
\begin{align}
    &\underset{\tilde{\mathbf{x}}(t, j)} {\text{minimize }}~~  \tilde{\mathrm{U}} (t) + \vartheta \Big[\rs \sum\nolimits_{v\in\bar{\mathcal{U}}(t)} \sum\nolimits_{z=1}^Z \mathrm{I}_{v,z}(t) - H(\tilde{\pmb{\mathrm{I}}}(t, j)) \Big], \tag{\ref{fdRATequivalProb2}} \rs\\
    &\text{  s.t.} \quad  C_7, C_8, (\ref{transformCons1}), (\ref{transformCons2}), (\ref{transformCons3}), (\ref{transformCons5}),(\ref{fdRATequivalCons2_2}), (\ref{fdRATequivalCons2_3}), \label{fdRATequival1Cons1_1} 
\end{align}
\end{subequations}
where $\tilde{\mathbf{x}}(t, j) = [\tilde{\pmb{\mathrm{I}}}(t, j), \mathbf{p}(t, j), \tilde{\mathbf{p}}(t, j)]$ and $H(\tilde{\pmb{\mathrm{I}}}(t, j))$ is the global underestimation of $\sum\nolimits_{v\in\bar{\mathcal{U}}(t)} \sum\nolimits_{z=1}^Z (\mathrm{I}_{v,z}(t))^2$.

Note that (\ref{fdRATequivalProb2}) is convex and can be solved efficiently using existing solvers such as CVX \cite{diamond2016cvxpy}.
Particularly, we use Algorithm \ref{rATOptimization} and CVX to solve (\ref{fdRATequivalProb2}) iteratively.
Note that (\ref{fdRATequivalProb2}) has $3(|\bar{\mathcal{U}}(t)| \times Z )$ decision variables and $4(|\bar{\mathcal{U}}(t)| \times Z ) + 2|\bar{\mathcal{U}}(t)| + Z + 1$ constraints.
Besides, in the worst-case, $|\bar{\mathcal{U}}(t)| = Z$.
Moreover, we need to run Algorithm \ref{cvScheduling} to get the required scheduling decisions before solving (\ref{fdRATequivalProb2}).
As such the time complexity of running Algorithm \ref{rATOptimization} for $J$ iterations is $\mathcal{O}\big(27JZ^6 (4Z^2 + 3Z + 1) + |\mathcal{U}(t)|^2 + Z \big)$ \cite{7812683}.
Moreover, our SCA-based solutions of (\ref{localIterGivenEtaPDCase2}) and (\ref{fdRATequivalProb2}) converge to locally optimal solutions of the original problems (\ref{iterPlusCpuFreqPD}) and (\ref{fdRATProbTransform}), respectively \cite{7812683,6816086}.

\begin{algorithm} [t!]
\fontsize{9}{8}\selectfont
\SetAlgoLined 
\DontPrintSemicolon
\KwIn{Initial feasible set $\tilde{\mathbf{x}} (t, j)$, $j=0$, maximum iteration $J$, precision level $\epsilon^{\mathrm{prec}}$, initial penalty $\vartheta_0$}
\nl{\textbf{Repeat}:} \;
\Indp {
    $j \gets j+1$ \;
    $\vartheta \gets \vartheta_0+j$ \;
    Solve (\ref{fdRATequivalProb2}) using $\vartheta$ and $\tilde{\mathbf{x}}(t, j-1)$ to find $\tilde{\mathbf{x}}(t, j)$ \;
    }
\Indm \textbf{Until} converge with $\epsilon^{\mathrm{prec}}$ precision or $j=J$ \;
\KwOut{pRB allocation and power allocation set $\tilde{\mathbf{x}}^{*}(t)$}
\caption{Iterative pRB and Power Allocation Process}
\label{rATOptimization}
\end{algorithm}

\begin{Remark}[VEFL Summary]
    At the beginning of a VEFL round $k$, each CV reports its $\eta_v^{\mathrm{min}}$, $\eta_v^{\mathrm{max}}$, $P_v^{\text{max}}$, $\mathrm{e}^{\mathrm{bud}}_{v}(k)$, $\mathrm{D}_v$ and $\xi_v (l_v(k), \eta_v(k))$ to the server as part of the SLA.
    The server solves (\ref{localIterGivenEtaPDCase2}) to jointly optimize $\pmb{1} (k)$, $\pmb{l} (k)$ and $\pmb{\eta} (k)$. 
    It then sends the most recent global model $\pmb{\omega}_k$, $l_v^{*}(k)$'s and $\eta_v^{*}(k)$'s to the selected CVs.
    Each CV uses the optimized $\eta_v^{*}(k)$ to train the received ML model for $l_v^{*}(k)$ local rounds on its local dataset.
    The server and the CVs work together to receive the CVs' locally trained models $\pmb{\omega}_{v,k+1}$'s. 
    More specifically, the server keeps track of the remaining payload buffer $\mathbf{Q}(t)$, which essentially tells it how much of $\pmb{\omega}_{v,k+1}$'s are yet to be received.
    Recall that the server estimated the worst-case required number of offloading time slots $\bar{\mathrm{t}}_v^{\mathrm{tx}}$ and determined the VEN slot $\tau(k)$ from which it shall start receiving the $\pmb{\omega}_{v,k+1}$'s. 
    The server makes the payload offloading scheduling decisions based on Algorithm \ref{cvScheduling} from this slot $\tau(k)$.
    It then solves (\ref{fdRATProbTransform_Special1CV}) to optimize the transmission power if only one CV is scheduled. 
    Otherwise, it solves (\ref{fdRATequivalProb2}) to optimize the pRB and power allocation jointly.
    The scheduled CVs then receives their $\mathrm{I}_{v,z}^{*}(t)$'s and $P_{v,z}^{*}(t)$'s from the server, and continue offloading their $\pmb{\omega}_{v,k+1}$'s.
    To that end, the server updates the global model based on its aggregation rule defined in (\ref{aggregationRulePartialNORep}) and repeats the above processes for the next round.
\end{Remark}

\begin{figure*} \vspace{-0.05 in}
\begin{minipage}{0.33\textwidth}
	\centering
	\includegraphics[trim=12 5 15 10, clip, width=\textwidth, height=0.15\textheight]{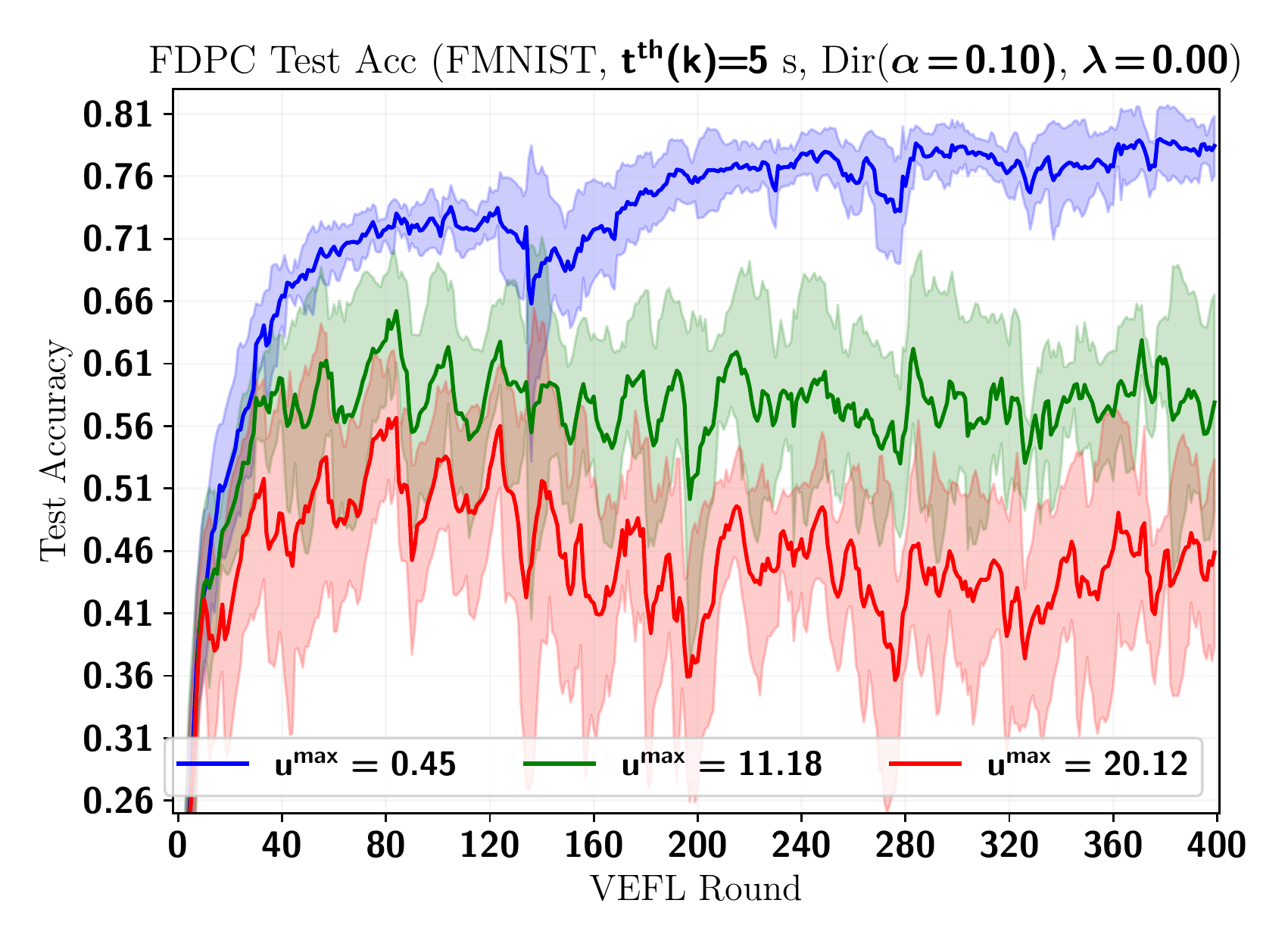} \vspace{-0.15 in}
	\caption{FDPC - impact of CV velocity: $\mathrm{t}^{\mathrm{th}}(k)=5$, Dir($\alpha=0.1$), $\lambda=0$}
	\label{veloImpact_FMNIST_alpha_0_1}
\end{minipage} \hspace{0.00001in}
\begin{minipage}{0.33\textwidth}
	\centering
	\includegraphics[trim=12 5 8 10, clip, width=\textwidth, height=0.15\textheight]{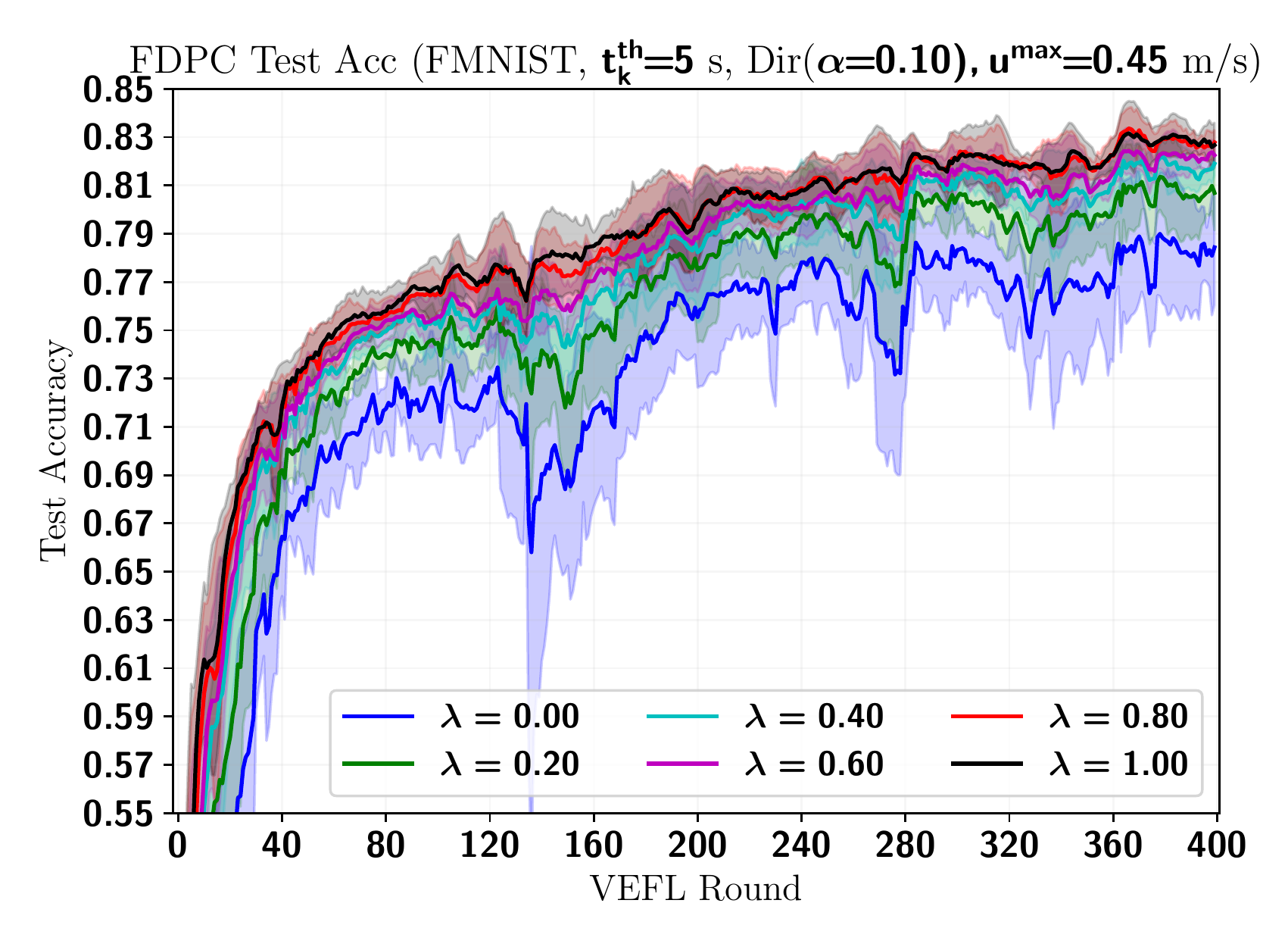} \vspace{-0.15 in}
	\caption{FDPC - impact of $\lambda$: $\mathrm{t}^{\mathrm{th}}(k)=5$, Dir($\alpha=0.1$), $\mathrm{u}^{\mathrm{max}}=0.45$ m/s}
	\label{sojAlphaImpact_FMNIST_alpha_0_1_velo_0_4_5}
\end{minipage} \hspace{0.00001in}
\begin{minipage}{0.33\textwidth}
	\centering
	\includegraphics[trim=12 5 8 10, clip, width=\textwidth, height=0.15\textheight]{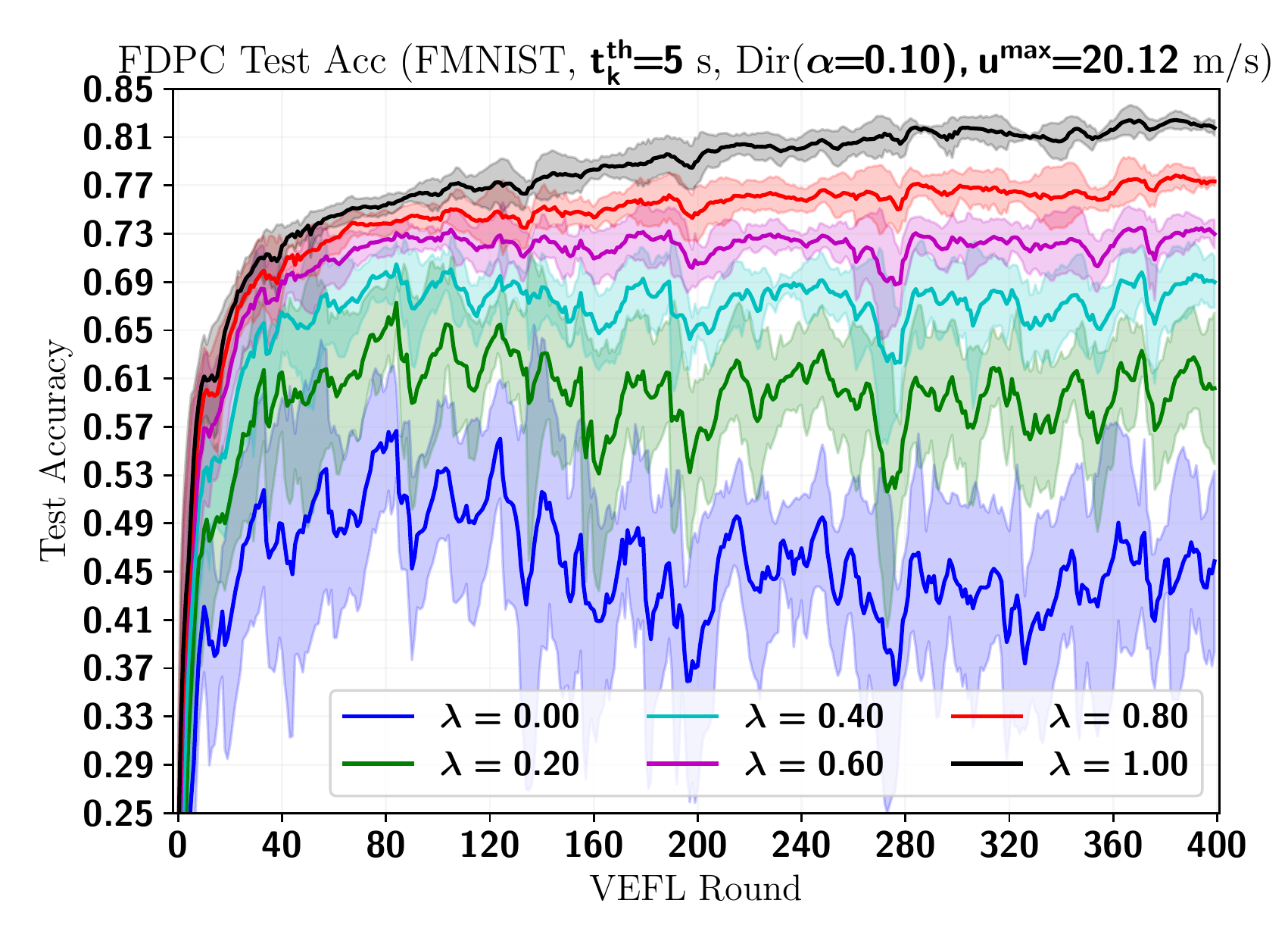} \vspace{-0.15 in}
	\caption{FDPC - impact of $\lambda$:  $\mathrm{t}^{\mathrm{th}}(k)=5$, Dir($\alpha=0.1$), $\mathrm{u}^{\mathrm{max}}=20.12$ m/s}
	\label{sojAlphaImpact_FMNIST_alpha_0_1_velo_20_1_2}
\end{minipage} \hspace{0.00001in}
\begin{minipage}{0.33\textwidth}
	\centering
	\includegraphics[trim=12 5 12 10, clip, width=\textwidth, height=0.15\textheight]{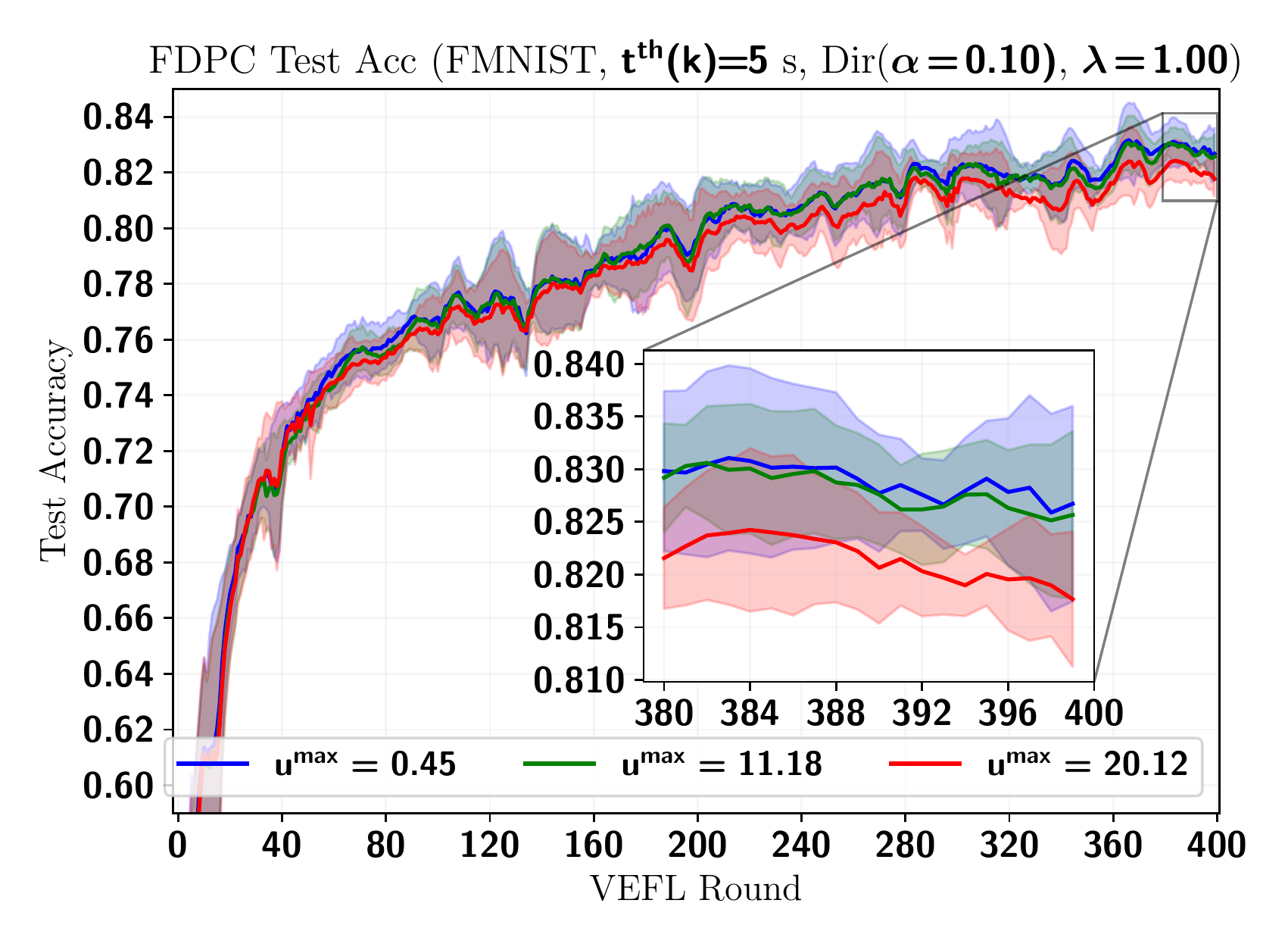} \vspace{-0.15 in}
	\caption{FDPC - performance when $\lambda=1$: $\mathrm{t}^{\mathrm{th}}(k)=5$, Dir($\alpha=0.1$)}
	\label{sojAlphaImpact_FMNIST_alpha_0_1_velo_20_12_SojAlp_1}
\end{minipage} \hspace{0.00001in}
\begin{minipage}{0.33\textwidth} \hspace{0.001 in}
\centering
	\includegraphics[trim=8 2 12 10, clip, width=\textwidth, height=0.15\textheight]{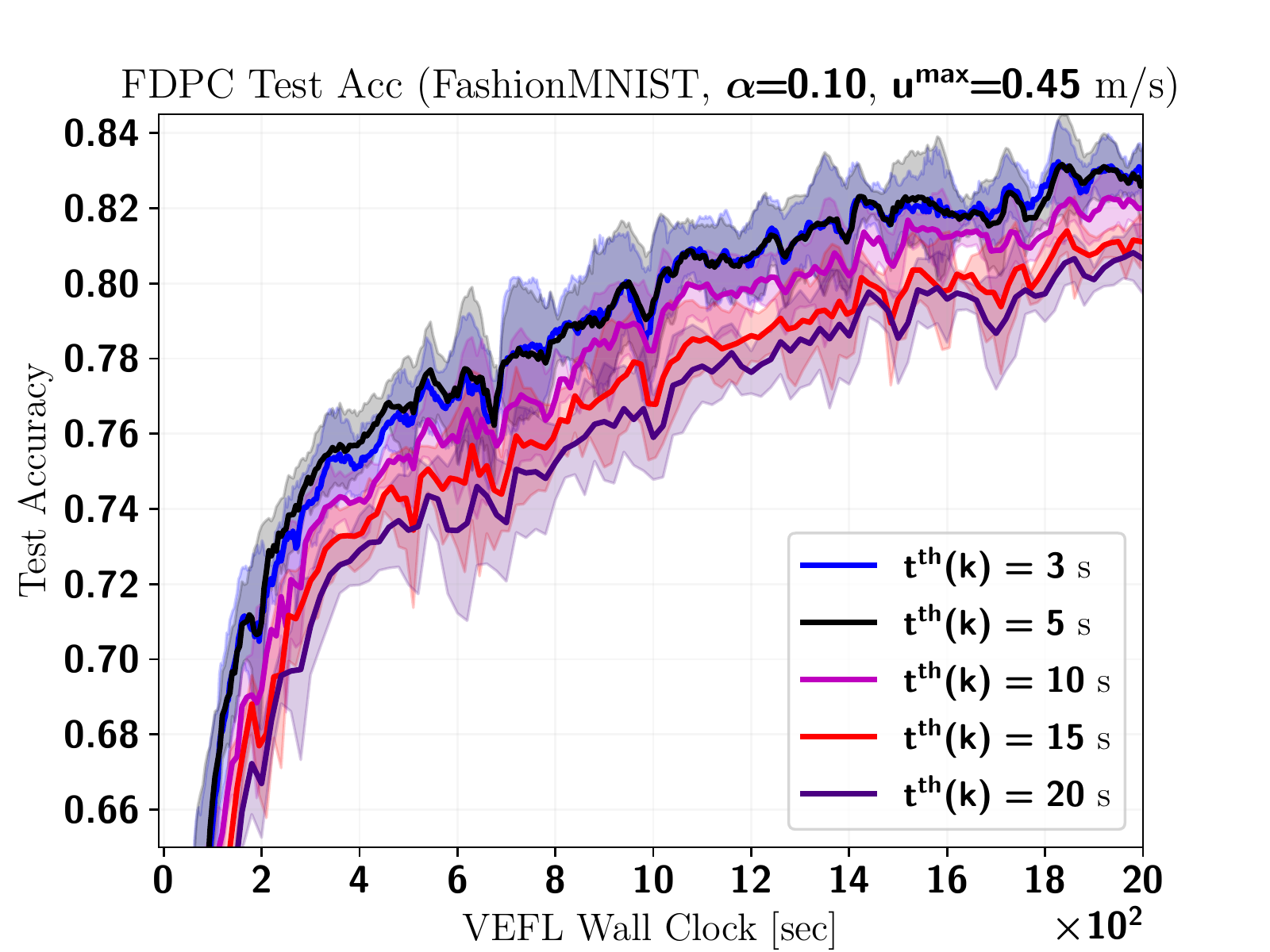} \vspace{-0.15 in}
	\caption{FDPC - Impact of $\mathrm{t}^{\mathrm{th}}(k)$: Dir$(\alpha=0.1)$, $\lambda=1$, $\mathrm{u}^{\mathrm{max}}=0.45$ m/s}
	\label{deadlineImpact_FMNIST_FDPC_velo_0_45}
\end{minipage} \hspace{0.00001in}
\begin{minipage}{0.33\textwidth}
\centering
	\includegraphics[trim=8 2 12 10, clip, width=\textwidth, height=0.15\textheight]{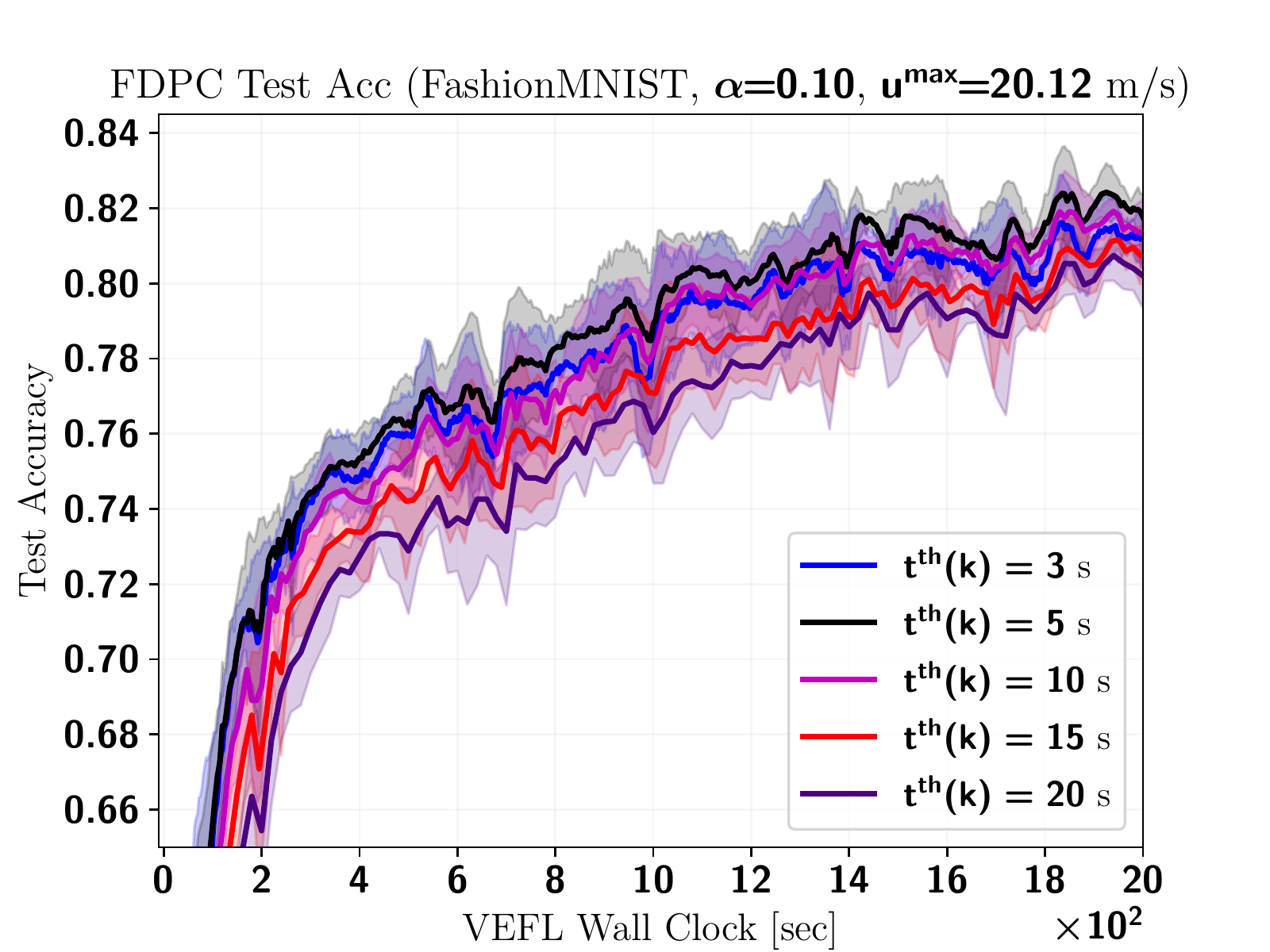} \vspace{-0.15 in}
	\caption{FDPC - Impact of $\mathrm{t}^{\mathrm{th}}(k)$: Dir$(\alpha=0.1)$, $\lambda=1$, $\mathrm{u}^{\mathrm{max}}=20.12$ m/s}
	\label{deadlineImpact_FMNIST_FDPC_velo_20_12}
\end{minipage}
\end{figure*}

\section{Simulation Results and Discussions}
\label{simulationResults_Section}

\subsection{Simulation Setting}
\noindent
For our RoI, we use the real-world map information of downtown Raleigh, NC, USA, from OpenStreetMap\footnote{\url{https://www.openstreetmap.org}}.
We then use SUMO to model practical microscopic mobility as presented in Section \ref{CVs_Mobility_Model}.
We repeat our simulation process for $5$ times to get the average performance.
The VEN is simulated for $2000$ seconds in each of these repeats.
There are $293$, $297$, $299$, $292$ and $287$ unique CVs, respectively in these $5$ repeats.
Besides, the total CVs present in the VEN slots varies. 
Each CV's minimum and maximum CPU frequencies are selected randomly from $[1\times 10^3, 5\times 10^3]$ Hz and $[1.9\times 10^9, 2.8\times 10^9]$ Hz, respectively.
The energy budget, required CPU cycle to process per bit data, per unit energy cost and $\bar{\phi}_v$s are randomly chosen from $[20,30]$ Joules, $[20, 30]$, $[5, 10]$ units and $[10,20]$ units, respectively.
For the RAT parameters, we consider $r=500$ meter, $Z=10$, $\omega=1.8$ MHz\footnote{With numerology $1$, the pRB size is 360 KHz. However, for the ease of faster simulation, we considered that the pRB size is $5 \times 360$ KHz.}, $\bar{n}=1$, subcarrier spacing $30$ KHz and $\kappa=0.5$ ms.
We assume the gNB has $4$ antennas, whereas the CVs have single antennas. 
For model training and testing, we use a) MNIST \cite{lecun2010mnist}, b) FashionMNIST \cite{xiao2017FMINST} c) German Traffic Sign Recognition Benchmark (GTSRB) \cite{GTSRB2011} and d) CIFAR-$10$ \cite{krizhevsky2009learning} datasets.
We use a simple convolutional neural network (CNN) with two convolutional layers, each followed by max-pooling layers, one fully connected layer and the output layer.
For the non-IID data distribution, we use symmetric Dirichlet distribution $\mathrm{Dir}(\alpha)$ with the concentration parameter $\alpha$ and use a similar approach as in \cite{richeng2022communication} to distribute the labels across the unique CVs.
Note that a smaller $\alpha$ means the label distribution across CVs is more skewed.

To this end, we study the impact of different system parameters using the FashionMNIST dataset for FDPC and PDPC. 
We will also use the other three datasets to compare the performances later in Section \ref{simResultsComparisons}.
Note that we have omitted the results of the VEFL and RAT parameter optimizations for brevity.

\subsection{Simulation Results: FDPC}

\begin{figure*}\vspace{-0.05in}
\begin{minipage}{0.33\textwidth}
\centering
	\includegraphics[trim=12 5 12 10, clip, width=\textwidth, height=0.15\textheight]{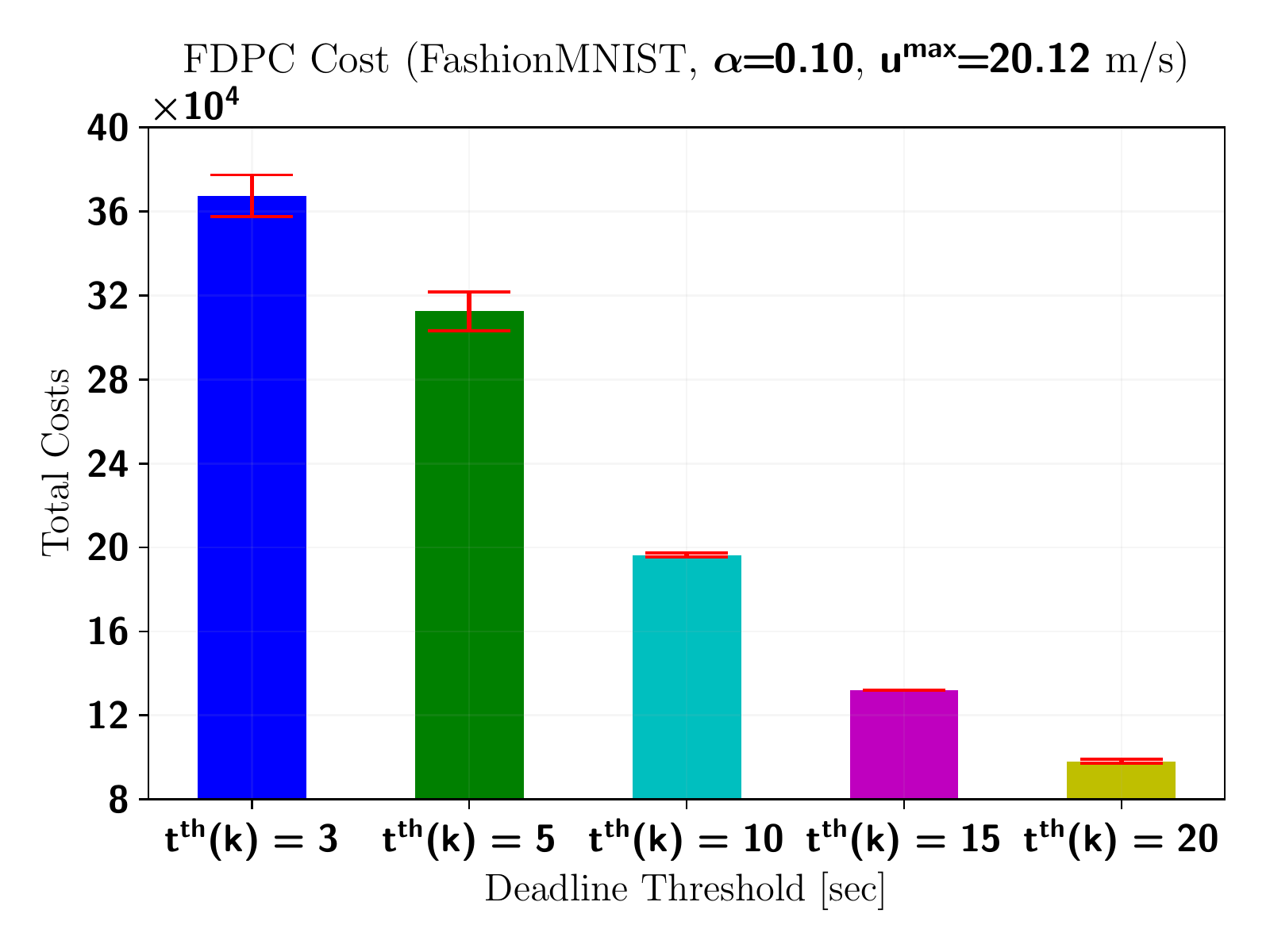} \vspace{-0.15 in}
	\caption{FDPC: total costs for different $\mathrm{t}^{\mathrm{th}}(k)$}
	\label{deadlineVsCost_FMINST_FDPC_velo_20_12}
\end{minipage} \hspace{0.00001in}
\begin{minipage}{0.33\textwidth}
\centering
	\includegraphics[trim=12 5 12 10, clip, width=\textwidth, height=0.15\textheight]{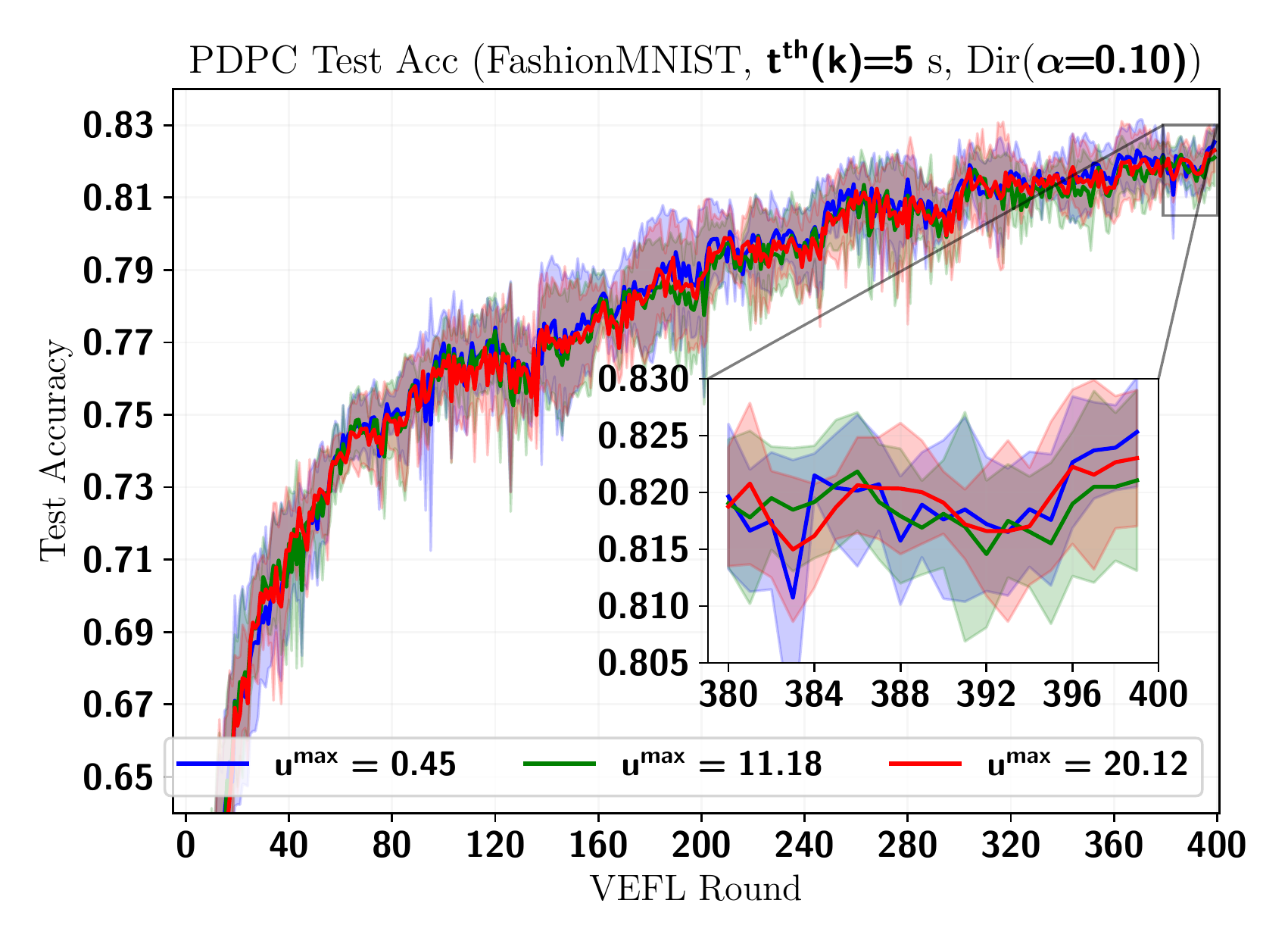} \vspace{-0.15 in}
	\caption{PDPC: Impact of CV velocity: $\mathrm{t}^{\mathrm{th}}(k)=5$, Dir($\alpha=0.1$)}
	\label{veloImpact_PDPC_FMNIST_alpha0_10}
\end{minipage} 
\begin{minipage}{0.33\textwidth}
\centering
	\includegraphics[trim=12 5 12 10, clip, width=\textwidth, height=0.15\textheight]{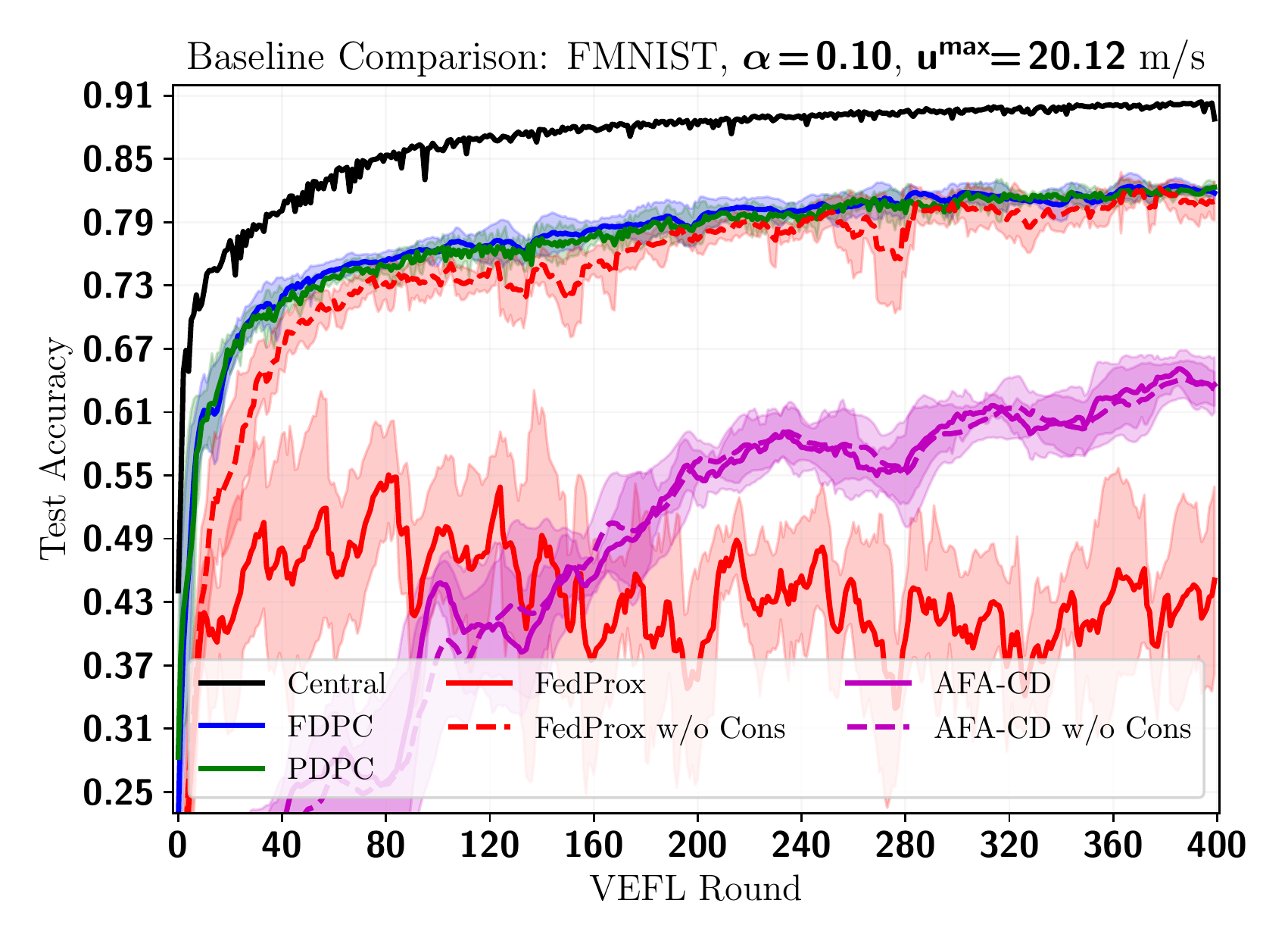} \vspace{-0.15 in}
	\caption{Performance comparison: $\mathrm{t}^{\mathrm{th}}(k)=5$s, $\mathrm{u}^{\mathrm{max}}=20.12$m/s, Dir($\alpha=0.1$)}
	\label{performanceCompFMNIST_alpha0_10_velo_20_12}
\end{minipage} \hspace{0.001 in}
\begin{minipage}{0.33\textwidth}
\centering
	\includegraphics[trim=12 5 12 10, clip, width=\textwidth, height=0.15\textheight]{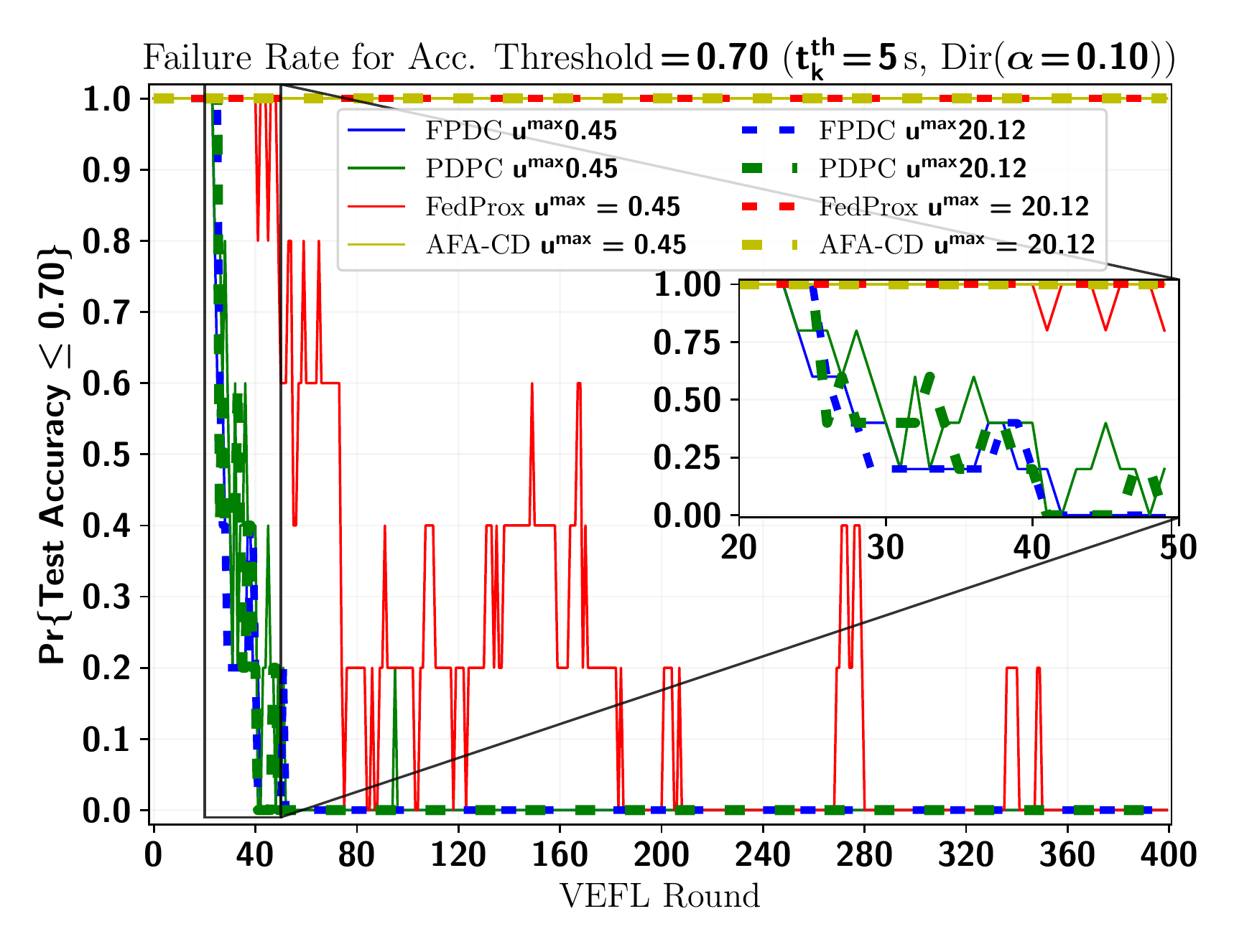} \vspace{-0.15 in}
	\caption{Failure rate of achieving $70\%$ test accuracy: $\mathrm{t}^{\mathrm{th}}(k)=5$ s, $\mathrm{u}^{\mathrm{max}}=0.45$ m/s, Dir($\alpha=0.1$)}
	\label{performanceCompFMNIST_alpha0_10_velo_0_45}
\end{minipage}\hspace{0.0001 in}
\begin{minipage}{0.33\textwidth}
\centering
	\includegraphics[trim=12 5 12 10, clip, width=\textwidth, height=0.15\textheight]{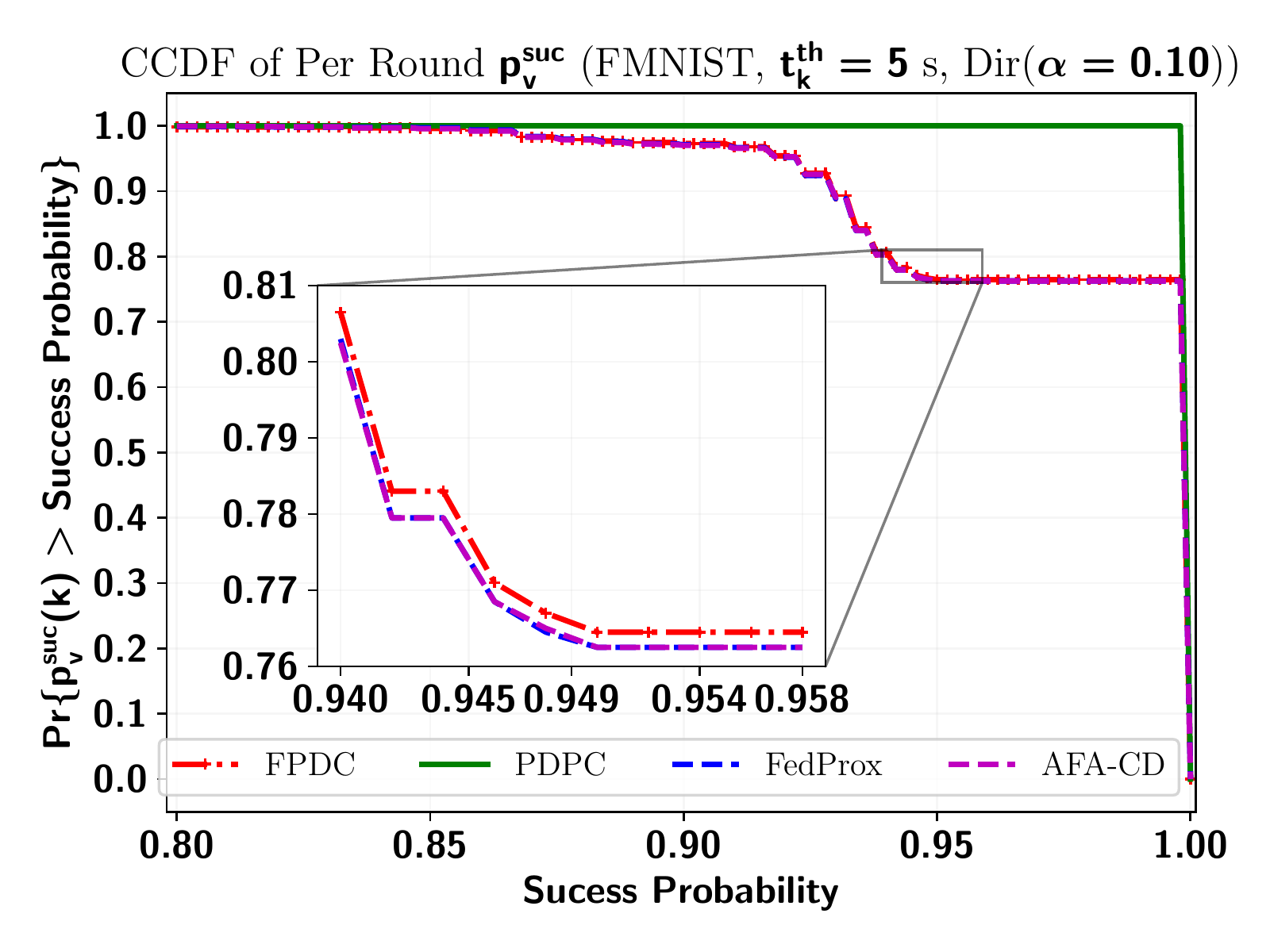} \vspace{-0.15 in}
	\caption{CCDF of success probability comparison: $\mathrm{t}^{\mathrm{th}}(k)=5$, Dir($\alpha=0.1$)}
	\label{performanceCompSuccFMINST_alpha0_10}
\end{minipage} \hspace{0.0001 in}
\begin{minipage}{0.33\textwidth}
\centering
	\includegraphics[trim=12 5 12 10, clip, width=\textwidth, height=0.15\textheight]{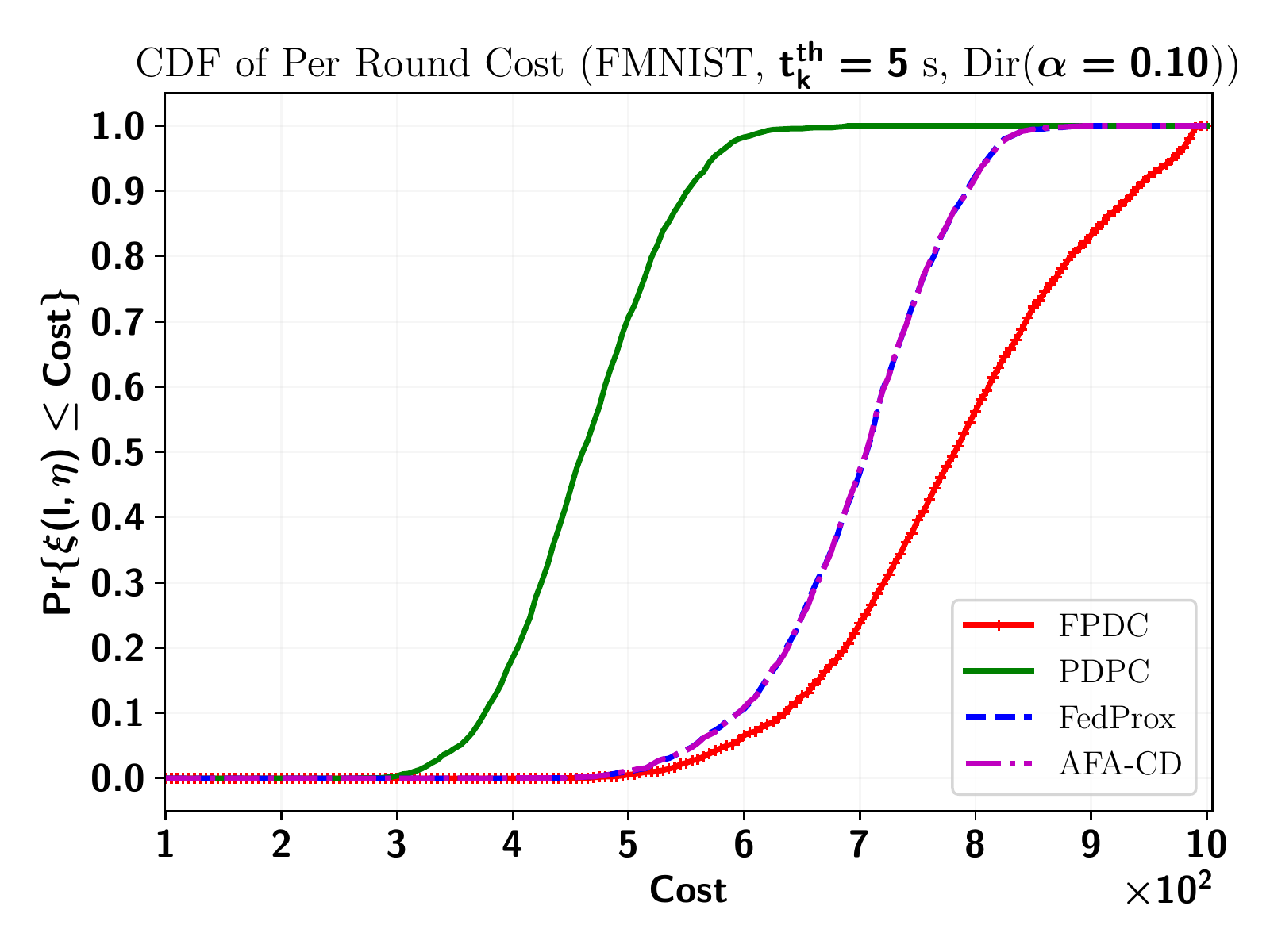} \vspace{-0.15 in}
	\caption{CDF of per round cost comparison: $\mathrm{t}^{\mathrm{th}}(k)=5$, Dir($\alpha=0.1$)}
	\label{performanceCompCostFMNIST_alpha0_10}
\end{minipage}
\end{figure*}

\subsubsection[]{FDPC - Impact of Velocity and $\lambda$}
First, we validate our aggregation rule's necessity for different velocities. 
Note that in (\ref{p_V_Mobility}), if we put zero weight on the estimated sojourn period, then the server essentially aggregates the model weights solely based on the dataset sizes of the CVs.
When the FEEL clients are stationary, it makes sense to aggregate the model parameters based on the dataset sizes. 
However, when the clients are mobile, which is the case for CVs, solely aggregating based on dataset size may not yield the best results. 
This is particularly true when mobility is relatively high. 
The expected sojourn period of the CV decreases as the velocity increases. 
Therefore, a CV has less time to perform its local iterations as it must offload the trained model before moving out of the gNB's coverage area. 
This essentially means that the performance of the trained global model may not yield good test accuracy.
A similar trend is also observed in our simulation results.
Fig. \ref{veloImpact_FMNIST_alpha_0_1} shows how test accuracy varies across the VEFL rounds for different CV velocities for $\alpha=0.1$.
We observe that the performance decreases when $\mathrm{u}^{\mathrm{max}}$ increases.
Note that the solid lines in the figures show the mean values, while the shaded strip is the standard deviation.
These simulation results also reveal that the deviation also increases with increased velocity.

In addition, it is beneficial to put more weight on the expected sojourn period.
Our results in Fig. \ref{sojAlphaImpact_FMNIST_alpha_0_1_velo_0_4_5} and Fig. \ref{sojAlphaImpact_FMNIST_alpha_0_1_velo_20_1_2} also validate this claim.
When $\mathrm{u}^{\mathrm{max}}=0.45$ m/s, we observe that $\lambda=0.8$ and $\lambda=1$ yield nearly identical performance in Fig. \ref{sojAlphaImpact_FMNIST_alpha_0_1_velo_0_4_5}.
Besides, when $\mathrm{u}^{\mathrm{max}}=20.12$ m/s, $\lambda=1$ clearly provides the best test accuracy, which is reflected in Fig. \ref{sojAlphaImpact_FMNIST_alpha_0_1_velo_20_1_2}.
Fig. \ref{sojAlphaImpact_FMNIST_alpha_0_1_velo_20_12_SojAlp_1} verifies that our proposed aggregation policy works for all $\mathrm{u}^{\mathrm{max}}$s.
Furthermore, the standard deviations in the test accuracy are also negligible for all velocities.

\subsubsection[]{FDPC - Impact of $\mathrm{t}^{\mathrm{th}} (k)$}
The impact of the deadline threshold $\mathrm{t}^{\mathrm{th}}(k)$ between two VEFL rounds can also affect the learning performance.
Intuitively, having a smaller $\mathrm{t}^{\mathrm{th}}(k)$ means the server can run more VEFL rounds within a fixed time duration.
Therefore, the trained model's test accuracy is expected to be better with a smaller $\mathrm{t}^{\mathrm{th}}(k)$.
However, although this holds at low velocity, it is not straightforward at a higher $\mathrm{u}^{\mathrm{max}}$ since the short sojourn period may lead to a smaller $l_v(k)$.
Therefore, under high mobility, the server shall carefully decide the deadline threshold $\mathrm{t}^{\mathrm{th}}(k)$ to get a reasonably trained model from the participating CVs. 
Our simulation results also reveal similar trends in Fig. \ref{deadlineImpact_FMNIST_FDPC_velo_0_45} and Fig. \ref{deadlineImpact_FMNIST_FDPC_velo_20_12}.
Particularly, when $\mathrm{u}^{\mathrm{max}}=0.45$ m/s, we observe that $\mathrm{t}^{\mathrm{th}}=3$ s and $\mathrm{t}^{\mathrm{th}}=5$ s yield almost similar performances.
However, at $\mathrm{u}^{\mathrm{max}}=20.12$ m/s, we observe that $\mathrm{t}^{\mathrm{th}}(k)=3$ s clearly does not provide better results over $\mathrm{t}^{\mathrm{th}}(k)=5$ s and $\mathrm{t}^{\mathrm{th}}(k)=10$ s.
Furthermore, a shorter $\mathrm{t}^{\mathrm{th}} (k)$ also causes more VEFL rounds, which causes more expenses for the server.
Fig. \ref{deadlineVsCost_FMINST_FDPC_velo_20_12} shows how the total cost $\sum_{k=1}^K \sum_{v=1}^{V_{\breve{t}_k}} \xi(l_v(k), \eta_v(k))$ vary for different $\mathrm{t}^{\mathrm{th}}(k)$.
Here, we stress that a smaller $\mathrm{t}^{\mathrm{th}}(k)$ can be desirable for the server if it requires a trained global model that delivers a certain level of accuracy quickly at the cost of higher expenses.
As such, depending on its operational needs, it may choose the desired deadline threshold.

\subsection{Simulation Results: PDPC}
\noindent
Note that the length of subset $\mathcal{C}_k$ is a design parameter that the system administrator can choose.
Based on our simulation setting, we found that $|\mathcal{C}_k|=8$ provides the best results for MNIST and GTSRB datasets, while $|\mathcal{C}_k|=10$ and $|\mathcal{C}_k|=12$ work best for the FashionMNIST and CIFAR-$10$ datasets, respectively.
Moreover, our framework is general, where the server can adjust this value in each VEFL round.

\subsubsection{PDPC - Impact of Velocity}
We now validate that our PDPC solution is robust against different velocities.
Recall that we maximize a weighted summation of $l_v(k)$s, where we choose the weight based on expected sojourn periods and dataset sizes of the CVs.
Particularly, since PDPC allows the server to choose the CVs at the beginning of the VEFL rounds, we put equal weights on the sojourn periods and the dataset sizes, i.e., $\bar{\lambda}=0.5$ in (\ref{thetaV}).
Therefore, a proper joint optimization for $\mathbf{1}(v \in \mathcal{C}_k)$, $l_v(k)$ and $\eta_v(k)$ can combat the effect of short sojourn periods under high mobility.
Our simulation results also validate this. 
In Fig. \ref{veloImpact_PDPC_FMNIST_alpha0_10}, we observe that the test accuracies are very similar for $\mathrm{u}^{\mathrm{max}}=0.45$ m/s, $\mathrm{u}^{\mathrm{max}}=11.18$ m/s and $\mathrm{u}^{\mathrm{max}}=20.12$ m/s.
Particularly, when $\mathrm{u}^{\mathrm{max}} = 0.45$ m/s and the VEFL rounds are $k=100$, $k=200$, $k=300$ and $k=400$, the test accuracies are $76.71\%$, $79.19 \%$, $81.17\%$ and $82.53\%$, respectively.
For the same VEFL rounds, when $\mathrm{u}^{\mathrm{max}}=11.18$ m/s, the test accuracies are $76.17\%$, $78.56\%$, $80.67\%$ and $82.11\%$, respectively. 
Besides, when $\mathrm{u}^{\mathrm{max}}=20.12$ m/s, the test accuracies are $76.08\%$, $78.74\%$, $81.12\%$ and $82.3\%$, respectively.

Note that PDPC performs similarly to FDPC with respect to different $\mathrm{t}^{\mathrm{th}}(k)$, which we skip presenting for brevity.

\begin{table*}[!t] \vspace{-0.05in}
\caption{Test Accuracy with Trained $\pmb{\omega}_K$}
\small
\fontsize{7}{10}\selectfont
\centering
\begin{tabular}{|C{0.9cm} |C{0.6cm} | C{1.6cm}| C{1.6cm} | C{2cm} | C{2cm}| C{1.9cm} | C{1.9cm} | C{1.5cm} |}
	\hline 
	\rowcolor{gray!15} \textbf{Dataset} & Dir($\alpha$)& \textbf{FDPC (Proposed)} & \textbf{PDPC (Proposed)} & \textbf{FedProx \cite{MLSYS2020_38af8613} with Constraints} & \textbf{AFA-CD \cite{yang2022anarchic} with Constraints} & \textbf{FedProx \cite{MLSYS2020_38af8613} W/O Constraints}  & \textbf{AFA-CD \cite{yang2022anarchic} W/O Constraints} & \textbf{Centralized ML} \\ \hline
	\multirow{3}{*}{MNIST} &  $0.1$ & $0.9677 \pm 0.0023$  & $\mathbf{0.9691} \pm 0.0024$ & $0.4566 \pm 0.1497$ & $0.7543 \pm 0.112$ & $0.9723 \pm 0.0012$ & $0.8285 \pm 0.0209$ & \multirow{3}{*}{$0.9905$} \\ \cline{2-8}
	& $0.9$ & $0.9779 \pm 0.0022$  & $\mathbf{0.9787} \pm	0.0013$ & $0.8906 \pm 0.0191 $ & $0.8248 \pm 0.0994$ & $0.982 \pm 0.0017$ & $0.8994 \pm 0.0049$ &\\ \cline{2-8}
	& $10$ & $\mathbf{0.9816} \pm 0.0009$  & $\mathbf{0.9816} \pm 0.0006$ & $0.9216 \pm 0.0033$ & $0.8294 \pm 0.1025$ & $0.9838 \pm 0.0011$ & $0.9136 \pm 0.0031$ & \\ \hline
	\multirow{3}{*}{FMNIST} &  $0.1$ & $0.8177 \pm	0.0064$  & $\mathbf{0.823} \pm 0.006$  & $0.4507 \pm 0.0888$ & $0.6359 \pm 0.0262$ & $0.8098 \pm 0.0181$ & $0.6319 \pm 0.0161$ & \multirow{3}{*}{$0.8879$}\\ \cline{2-8}
	& $0.9$ & $\mathbf{0.857} \pm	0.0024$  & $0.8527 \pm 0.0032$  & $0.6457 \pm	0.0251$ & $0.6933 \pm 0.0058$ & $0.8649 \pm 0.0033$ & $0.6943 \pm 0.0136$ &\\ \cline{2-8} 
	& $10$ & $\mathbf{0.8641} \pm 0.0019$  & $0.8526 \pm 0.0033$ & $0.6846 \pm 0.0142$ & $0.7061 \pm 0.003 $ & $0.8744 \pm 0.0021$ & $0.7057 \pm 0.0033$ & \\ \hline
	\multirow{3}{*}{GTSRB} &  $0.1$ & $0.4838 \pm	0.0123$ & $\mathbf{0.488} \pm 0.0049$ & $0.104 \pm 0.026$ & $0.1247 \pm 0.0144 $ & $0.5568 \pm 0.0118$ & $0.1246 \pm 0.0139$  & \multirow{3}{*}{$0.9166$} \\ \cline{2-8}
	& $0.9$ & $0.614 \pm 0.0078$  & $\mathbf{0.6165} \pm 0.0187$ & $0.2045 \pm 0.0191$ & $0.1628 \pm 0.0146$ & $0.6622 \pm 0.0122$ & $0.1592 \pm 0.0106$ & \\ \cline{2-8}
	& $10$ & $0.6159 \pm 0.0164$  & $\mathbf{0.6438} \pm 0.0182$ & $0.2195 \pm 0.0183$ & $0.1602 \pm 0.0144$ & $0.6873 \pm 0.0085$ & $0.1524 \pm 0.0077$ & \\ \hline
    \multirow{3}{*}{CIFAR-$10$} & $0.1$ & $\mathbf{0.5043} \pm 0.0122$ & $0.4966 \pm 0.0081$ & $0.2047 \pm 0.0211$ & $0.256 \pm 0.02141$ & $0.5043 \pm 0.0213$ & $0.2484 \pm 0.0168$ & \multirow{3}{*}{$0.7394$}\\ \cline{2-8}
    & $0.9$ & $0.589 \pm 0.006$ & $\mathbf{0.595} \pm 0.0081 $ & $0.2803 \pm 0.022 $ & $0.3229 \pm 0.0045$ & $0.6124 \pm 0.0129$ & $0.3282 \pm 0.0038$ & \\ \cline{2-8}
    & $10$ & $0.6175 \pm 0.0022$  & $\mathbf{0.6202} \pm 0.009 $ & $0.3305 \pm 0.0097$ & $0.3304 \pm 0.0097 $ & $0.6293 \pm 0.0026$ & $0.3274 \pm 0.0026$&\\ \hline
	\end{tabular}
	\label{performanceComparison}
\end{table*}

\subsection{Performance Comparisons}
\label{simResultsComparisons}
\noindent
We now compare the performances of our proposed FDPC and PDPC VEFL schemes with the state-of-the-art FedProx\cite{MLSYS2020_38af8613} and anarchic federated averaging - cross-device (AFA-CD) \cite{yang2022anarchic} baselines.
For FedProx and AFA-CD, we consider two cases, namely, (1) all constraints are present, and (2) no constraints are enforced.
Note that the latter is the ideal case that does not consider any delay, energy, monetary or radio resource constraints.
Besides, for the baselines with system constraints, we assume that the server expends an equal amount of its budget for all CVs. 
Furthermore, taking the maximum CPU frequencies $\eta_v^{\mathrm{max}}$'s, the $l_v(k)$'s are chosen to satisfy the time and energy constraints.
Moreover, we use the same RAT solution for offloading the CVs' trained models.
Note that the baselines with constraints are expected to perform worse when the velocity increases since they do not consider mobility in the model weights aggregation.
We also compare the results with the centralized ML, which serves as the performance upper bound as the server can access all training data.

To that end, we compare these schemes with $\mathrm{u}^{\mathrm{max}}=20.12$ m/s in Fig. \ref{performanceCompFMNIST_alpha0_10_velo_20_12}.
When VEFL rounds are $k=200$, $k=300$ and $k=400$, the test accuracies with FDPC are $78.91\%$, $81.43\%$ and $81.77 \%$, respectively.
On the other hand, PDPC delivers $78.74\%$, $81.12\%$ and $82.30\%$ test accuracies for the same respective VEFL rounds.
By contrast, for the same VEFL rounds, when all constraints are present, FedProx returns $35.89\%$, $39.94\%$ and $45.07\%$, while AFA-CD yields $55.03\%$, $60.51\%$ and $63.59 \%$ test accuracies.
Moreover, in the ideal case, i.e., without (W/O) constraints, FedProx delivers $77.75\%$, $80.09\%$ and $80.98\%$, and AFA-CD returns $56.26\%$, $59.04\%$ and $63.19\%$ test accuracies. 
These results suggest that our proposed FDPC and PDPC schemes significantly outperform the baselines.
It is worth noting that FedProx W/O constraints performs similarly to our solutions with constraints, while the AFA-CD fails to achieve good performance. 
This is due to the fact that FedProx also incorporates a similar local loss function for the clients, whereas AFA-CD does not incorporate any proximal terms and employs a different aggregation rule.
Besides, the gap with the centralized ML is expected due to the inherent nature of the system model. 
More specifically, all training data samples are distributed among the CVs without repetitions. 
As such, once a CV moves out of the communication area, its training data samples are no longer available.
Moreover, these CVs may only perform a few local rounds before moving out of the gNB's coverage.

Besides, unlike in the ideal case, under all system constraints, FedProx suffers from potential model divergence.
In such a constrained case, notice that the best test accuracy of $55.06\%$ is achieved with FedProx when $k=81$.
However, $\pmb{\omega}_k$ does not converge and fluctuates rapidly, as shown in Fig. \ref{performanceCompFMNIST_alpha0_10_velo_20_12}.
It is even more evident in Fig. \ref{performanceCompFMNIST_alpha0_10_velo_0_45}, which shows the failure rate of achieving a test accuracy of $70\%$ for different $\mathrm{u}^{\mathrm{max}}$.
We can see that FDPC achieves a higher test accuracy than the accuracy threshold for all velocities after about $51$ VEFL rounds with probability $1$.
On the other hand, we notice a few oscillations with PDPC. 
Particularly, the probability of having a higher test accuracy than the threshold becomes one after about $51$ VEFL rounds, notwithstanding a $20\%$ drop at VEFL round $k=95$.
However, from $k=96$ onward, PDPC delivers a higher test accuracy than the threshold with probability $1$.
Contrary to these guaranteed performances with FDPC and PDPC, the performance of FedProx with constraints is unreliable.
Even at low mobility, $\mathrm{u}^{\mathrm{max}}=0.45$ m/s, we observe that FedProx's performance oscillates.
This baseline shows that the test accuracy can be less than the threshold $20\%$ of the time, even at VEFL round $k=349$ with relatively low mobile CVs.
Furthermore, AFA-CD fails to achieve the desired $70\%$ accuracy.

Although PDPC yields slightly higher performance oscillations than FDPC initially, it is still a practical solution in resource-constrained circumstances.
To further study its benefits, we now examine the CCDF of $\mathrm{p}_v^{\mathrm{suc}}(k)$.
Recall that for FDPC, FedProx and AFA-CD, all CVs with SLAs receive the global model and participate in the training process.
However, due to mobility, some CVs may not finish even a single local iteration within the allocated deadline. 
Besides, the energy constraint can also potentially restrict some CVs from training and offloading their models.
On the other hand, in PDPC, this can be mitigated by appropriately selecting the subset $\mathcal{C}_k$.
In any case, if the server does not receive a CV's local model within $\mathrm{t}^{\mathrm{th}}(k)$, it considers that as a failure. 
Our simulation results in Fig. \ref{performanceCompSuccFMINST_alpha0_10} suggest that in all VEFL rounds, a CV successfully offloads its locally trained model to the server with at least $0.9$ probability $97.3\%$, $100\%$, $97.1\%$ and $97\%$ percent of the times for FDPC, PDPC, FedProx and AFA-CD, respectively.
Moreover, a CV has a success probability of at least $0.99$ for $76.45\%$ and $100\%$ of the VEFL rounds, respectively, in FDPC, PDPC, and $76.25 \%$ of the VEFL rounds for FedProx and AFA-CD, with all constraints present.

PDPC not only guarantees a $100\%$ trained models reception but also yields a lesser expense for the server.
This is due to the fact that the server only needs to pay the fees to the selected CVs in the subset $\mathcal{C}_k$.
Our simulation results in Fig. \ref{performanceCompCostFMNIST_alpha0_10} show that, in a VEFL round, the server's cost is less than $700$ units for about $23.75\%$, $100\%$, $46.8\%$ and $47.3\%$ of the times, respectively, for FDPC, PDPC, FedProx and AFA-CD.
Note that PDPC not only requires a lesser expense but also delivers near identical test accuracies to FDPC.

Table \ref{performanceComparison} shows the performance comparisons on MNIST, FashionMNIST, GTSRB and CIFAR-$10$ datasets with the obtained global model after $K=400$ VEFL rounds when $\mathrm{u}^{\mathrm{max}}=20.12$ m/s, $\mathrm{t}^{\mathrm{th}}=5$ s and $\Xi(k)=1000$ units.
Under the resource-constrained scenario, it can be observed that both FDPC and PDPC perform similarly while the baselines lag significantly.
Furthermore, even without constraints, AFA-CD fails to perform reasonably in GTSRB and CIFAR-$10$ datasets.
Moreover, the proposed schemes achieve performance comparable to FedProx W/O constraints, which validates the effectiveness of our proposed solutions under extreme resource constraints.

\section{Conclusion}
\label{conclusion}
\noindent
A novel vehicular edge federated learning framework that leverages a $5$G-NR RAT and the limited computation powers of the moving CVs is proposed in this work.
Under delay, energy and cost constraints, first, subset CV selection, local iterations and CPU frequencies are jointly optimized given the estimated worst-case sojourn period and communication delay and cost.
Then, the RAT parameters are jointly optimized using an online per-slot-based stochastic optimization technique.
Extensive simulation results suggest that the server can combat high mobility by aggregating the trained model parameters using a weighted combination of sojourn periods and dataset sizes in FDPC, whereas by appropriately selecting a subset of CVs in PDPC.
Moreover, the proposed method outperforms the state-of-the-art FedProx and AFA-CD algorithms under the same constraints in all the examined scenarios.

\appendices
\section{Proof of Theorem \ref{theorem2}} 
\label{proofTheorem2}

\noindent
Using Taylor expansion, we can write the following:
\begin{align}
\label{lossTayLorExapand}
	& f(\pmb{\omega}_{k+1}) = f(\pmb{\omega}_k) + \left<\nabla f(\pmb{\omega}_k), \pmb{\omega}_{k+1}-\pmb{\omega}_k \right> + \nonumber \\
	&\qquad \frac{1}{2!} \left( \pmb{\omega}_{k+1}-\pmb{\omega}_k\right)^T \nabla^2 f(\pmb{\omega}_k) \left(\pmb{\omega}_{k+1}-\pmb{\omega}_k\right) + h(o), \\
	& \overset{(a)}{\leq} f(\pmb{\omega}_k) + \underbrace{\left<\nabla f(\pmb{\omega}_k), \pmb{\omega}_{k+1}-\pmb{\omega}_k \right>}_\text{Term $1$} + \underbrace{\frac{L}{2} \left \Vert \pmb{\omega}_{k+1}-\pmb{\omega}_k \right\Vert^2}_\text{Term $2$}, \nonumber
\end{align}
where $h(o)$ represents higher-order terms.
We reach to ($a$) by ignoring $h(o)$ and using \textit{Assumption} $1$.

\textbf{Bound Term $2$}: 
Let $\hat{\pmb{\omega}}_{v,k+1} = \underset{\pmb{\omega}}{\text{arg min }} f_v({\pmb{\omega}, \pmb{\omega}_k})$. 
Then, due to $\mu'$-strong convexity of $f_v({\pmb{\omega}, \pmb{\omega}_k})$, we can write
\begin{align} 
\label{eq1}
		\rs \rs & \left\Vert \hat{\pmb{\omega}}_{v,k+1} - \pmb{\omega}_{v,k+1} \right \Vert  \leq \frac{1}{\mu'} \left\Vert\nabla f_v(\hat{\pmb{\omega}}_{v,k+1}, \pmb{\omega}_k) - \nabla f_v(\pmb{\omega}_{v,k+1}\rs, \pmb{\omega}_k) \right\Vert\rs,\rs\rs\rs \nonumber\\
		& = (1/\mu') \left\Vert \nabla f_v(\pmb{\omega}_{v,k+1}, \pmb{\omega}_k) \right \Vert 
		\overset{(a)}{\leq} (\gamma/\mu') \left\Vert \nabla f_v (\pmb{\omega}_k, \pmb{\omega}_k) \right \Vert, \\
		& = (\gamma/\mu') \left\Vert \nabla F_v(\pmb{\omega}_k) \right\Vert 
		\overset{(b)}{\leq} [(B \gamma)/\mu'] \left\Vert \nabla f(\pmb{\omega}_k) \right\Vert, \nonumber 
\end{align}
where we obtain ($a$) and ($b$) using \textit{Assumption} $4$, i.e., $\gamma$-inexact local solvers and \textit{Assumption} $2$, i.e., $B$-dissimilarity, respectively.
Similarly, we can write 
\begin{equation}
\label{eq2}
    \left\Vert \hat{\pmb{\omega}}_{v,k+1} - \pmb{\omega}_k \right \Vert  \leq (1/\mu') \left\Vert \nabla F_v(\pmb{\omega}_k) \right \Vert = (B/\mu') \left \Vert \nabla f(\pmb{\omega}_k) \right \Vert\rs. \rs\rs 
\end{equation}
Then, using triangle inequality we can write 
\begin{equation}
\label{ind_cl_diff}
    \left \Vert \pmb{\omega}_{v,k+1} - \pmb{\omega}_k \right \Vert  \leq [(B(1+\gamma))/\mu'] \left\Vert \nabla f(\pmb{\omega}_k) \right\Vert.
\end{equation}
Now using (\ref{ind_cl_diff}) and (\ref{aggregationRulePartialNORep}), we get 
\begin{align}
\label{term2Simplified_Partial}
\displaybreak[0]
	& \left \Vert \pmb{\omega}_{k+1} - \pmb{\omega}_k \right \Vert^2 = \bigg \Vert  \sum_{v=1}^{V_{\breve{t}_k}} \bigg[p_v \cdot \nonumber\\
	&\quad \frac{ \mathbf{1}\left(v \in \mathcal{C}_k, \mathrm{t}_v (k) \leq \mathrm{t}^{\mathrm{th}}(k)| d_v(k) \leq \mathrm{r} \right)}{q_v (k) p^{\mathrm{suc}}_{v} (k)} \left( \pmb{\omega}_{v,k+1} - \pmb{\omega}_k \right) \bigg] \bigg \Vert^2 \nonumber\\
	&\overset{(a)}{\leq} \sum_{v=1}^{V_{\breve{t}_k}} p_v \cdot \bigg \Vert \frac{\mathbf{1}\left(v \in \mathcal{C}_k, \mathrm{t}_v (k) \leq \mathrm{t}^{\mathrm{th}}(k)| d_v(k) \leq \mathrm{r} \right)} {q_v(k) p^{\mathrm{suc}}_{v} (k)} \times \nonumber\\
	&\qquad \qquad \qquad \qquad \qquad \qquad \left( \pmb{\omega}_{v,k+1} - \pmb{\omega}_k \right) \bigg \Vert^2 \rs , \rs \\
	&\overset{(b)}{=} \sum_{v=1}^{V_{\breve{t}_k}} p_v \cdot \bigg(\frac{\mathbf{1}\left(v \in \mathcal{C}_k, \mathrm{t}_v (k) \leq \mathrm{t}^{\mathrm{th}}(k)|d_v(k) \leq \mathrm{r} \right)} {q_v(k) p^{\mathrm{suc}}_{v} (k)}\bigg)^2 \rs \rs \times\rs \rs\rs \rs\nonumber\\
	&\qquad\qquad\qquad\qquad\qquad\qquad\qquad \left\Vert \pmb{\omega}_{v,k+1} - \pmb{\omega}_k \right\Vert^2 \rs \rs, \rs\rs \nonumber\\
	&\overset{(c)}{=} \frac{B^2 (1 + \gamma)^2} {{\mu'}^2} \left\Vert \nabla f(\pmb{\omega}_k)\right\Vert^2 \times \nonumber\\ 
	& \qquad \sum_{v=1}^{V_{\breve{t}_k}} p_v \cdot \bigg( \frac{\mathbf{1}\left(v \in \mathcal{C}_k, \mathrm{t}_v (k) \leq \mathrm{t}^{\mathrm{th}}(k) | d_v(k) \leq \mathrm{r} \right)}{q_v(k) p^{\mathrm{suc}}_{v} (k)} \bigg)^2,\nonumber
\end{align} 
where ($a$) follows from Jensen's inequality due to the convexity of $\left\Vert \cdot \right\Vert^2$. 
($b$) stems using the fact that $\frac{\mathbf{1}\left(v \in \mathcal{C}_k, \mathrm{t}_v (k) \leq \mathrm{t}^{\mathrm{th}}(k)| d_v(k) \leq \mathrm{r} \right)} {q_v(k)p^{\mathrm{suc}}_{v} (k)}$ is scalar. 
Moreover, we get to ($c$) using (\ref{ind_cl_diff}).

\textbf{Bound Term $1$}: 
Using the aggregation rule defined in (\ref{aggregationRulePartialNORep}), we can write
\begin{align}
\label{term1Simplified_NoReplace}
	& \left<\nabla f(\pmb{\omega}_k), \pmb{\omega}_{k+1}-\pmb{\omega}_k \right> = \big<\nabla f(\pmb{\omega}_k), \sum\nolimits_{v=1}^{V_{\breve{t}_k}} \big[p_v \cdot \nonumber\\
	& \frac{\mathbf{1}\left(v \in \mathcal{C}_k, \mathrm{t}_v (k) \leq \mathrm{t}^{\mathrm{th}}(k)| d_v(k) \leq \mathrm{r} \right)}{q_v(k) p^{\mathrm{suc}}_{v} (k)} \left(\pmb{\omega}_{v,k+1} - \pmb{\omega}_k \right) \big] \big>,\rs\rs\nonumber\\
	&= \sum_{v=1}^{V_{\breve{t}_k}} p_v \cdot \frac{\mathbf{1}\left(v \in \mathcal{C}_k, \mathrm{t}_v (k) \leq \mathrm{t}^{\mathrm{th}}(k)|d_v(k) \leq \mathrm{r} \right)}{q_v(k)p^{\mathrm{suc}}_{v} (k)} \times \\
	& \qquad\qquad\qquad\qquad\qquad \left<\nabla f(\pmb{\omega}_k), \pmb{\omega}_{v,k+1} - \pmb{\omega}_k \right>, \nonumber\\
	& \leq B \left(1 + \gamma \right)/(\mu') \left \Vert \nabla f(\pmb{\omega}_k) \right\Vert^2 \sum\nolimits_{v=1}^{V_{\breve{t}_k}} p_v \times \nonumber \\
	&\qquad\qquad\qquad \frac{\mathbf{1}\left(v \in \mathcal{C}_k, \mathrm{t}_v (k) \leq \mathrm{t}^{\mathrm{th}}(k)| d_v(k) \leq \mathrm{r} \right)}{q_v(k) p^{\mathrm{suc}}_{v} (k)}.\nonumber
\end{align}
Then, plugging (\ref{term1Simplified_NoReplace}) and (\ref{term2Simplified_Partial}) in (\ref{lossTayLorExapand}), we get the following
\begin{align}
\label{oneIterLossDecrease_NoReplace}
	&f(\pmb{\omega}_{k+1}) \leq f(\pmb{\omega}_k) + B \left(1 + \gamma \right)/(\mu') \left \Vert \nabla f(\pmb{\omega}_k) \right\Vert^2 \times \nonumber\\
	&\rs  \sum_{v=1}^{V_{\breve{t}_k}} p_v \cdot \frac{\mathbf{1}\left(v \rs \in \rs \mathcal{C}_k, \mathrm{t}_v (k) \rs \leq \rs \mathrm{t}^{\mathrm{th}}(k)|d_v(k) \leq \mathrm{r} \right)} {q_v(k)p^{\mathrm{suc}}_{v} (k)} + \frac{L B^2 (1+\gamma)^2}{2{\mu'}^2} \times \rs\rs \\
	& \left\Vert \nabla f(\pmb{\omega}_k)\right\Vert^2 \sum_{v=1}^{V_{\breve{t}_k}} p_v \cdot \bigg( \frac{\mathbf{1}\left(v \rs \in \rs \mathcal{C}_k, \mathrm{t}_v (k) \rs \leq \rs \mathrm{t}^{\mathrm{th}}(k) | d_v(k) \leq \mathrm{r} \right)}{q_v(k)p^{\mathrm{suc}}_{v} (k)} \bigg)^2\rs\rs.\nonumber
\end{align}
Therefore, we can calculate the expected training loss decrease in one global round by taking the expectation of (\ref{oneIterLossDecrease_NoReplace}) as
\begin{align}
    & \mathbb{E} \left[f(\pmb{\omega}_{k+1})\right] - f(\pmb{\omega}_k) \leq  \mathbb{E} \bigg\{B \left(1 + \gamma \right)/(\mu') \left \Vert \nabla f(\pmb{\omega}_k) \right\Vert^2 \times \nonumber\\
	&\qquad \sum_{v=1}^{V_{\breve{t}_k}} p_v \cdot \frac{\mathbf{1}\left( v \in\mathcal{C}_k, \mathrm{t}_v (k) \leq \mathrm{t}^{\mathrm{th}}(k)| d_v(k) \leq \mathrm{r} \right)}{q_v(k)p^{\mathrm{suc}}_{v} (k)} \bigg\} + \nonumber\\
	&\qquad \mathbb{E} \bigg\{\frac{L B^2 (1+\gamma)^2}{2{\mu'}^2 } \left \Vert \nabla f(\pmb{\omega}_k)\right\Vert^2 \times \nonumber\\
	&\qquad \sum_{v=1}^{V_{\breve{t}_k}} p_v \cdot \bigg( \frac{\mathbf{1}\left(v \in \mathcal{C}_k, \mathrm{t}_v (k) \leq \mathrm{t}^{\mathrm{th}}(k) | d_v(k) \leq \mathrm{r} \right)}{q_v(k) p^{\mathrm{suc}}_{v} (k)} \bigg)^2 \bigg\},\nonumber\\
	&= \frac{B \left(1 + \gamma \right)}{\mu' } \left \Vert \nabla f(\pmb{\omega}_k) \right\Vert^2 +  \frac{L B^2 (1+\gamma)^2}{2{\mu'}^2 } \left \Vert \nabla f(\pmb{\omega}_k)\right\Vert^2 \rs\rs \times \rs\rs\rs\rs\rs\rs \nonumber\\
	&\quad \sum_{v=1}^{V_{\breve{t}_k}} p_v \cdot  \frac{\mathbb{E} \left\{[\mathbf{1}\left(v \in \mathcal{C}_k, \mathrm{t}_v (k) \leq \mathrm{t}^{\mathrm{th}}(k) | d_v(k) \leq \mathrm{r} \right)]^2\right\}}{{q_v(k)}^2{p^{\mathrm{suc}}_{v} (k)}^2} , \nonumber \\
	&\overset{(a)}{=} \frac{B \left(1 + \gamma \right)}{\mu'} \left \Vert \nabla f(\pmb{\omega}_k) \right\Vert^2 +  \frac{L B^2 (1+\gamma)^2}{2{\mu'}^2} \left \Vert \nabla f(\pmb{\omega}_k)\right\Vert^2 \times \nonumber \\
	&\qquad \sum\nolimits_{v=1}^{V_{\breve{t}_k}} [p_v/( {q_v(k)}^2{p^{\mathrm{suc}}_{v} (k)}^2)] \times \\
	&\mathbb{E} \big\{\big[\mathbf{1}\left(v \in \mathcal{C}_k \right) \times \mathbf{1} \big( \mathrm{t}_v (k) \leq \mathrm{t}^{\mathrm{th}}(k) | d_v(k) \leq \mathrm{r} \big) \big]^2\big\},\nonumber\\
	&\overset{(b)}{=} \frac{B \left(1 + \gamma \right)}{\mu'} \left \Vert \nabla f(\pmb{\omega}_k) \right\Vert^2 +  \frac{L B^2 (1+\gamma)^2}{2{\mu'}^2} \left \Vert \nabla f(\pmb{\omega}_k)\right\Vert^2 \times \nonumber \\
	&\qquad \sum\nolimits_{v=1}^{V_{\breve{t}_k}} [p_v /( {q_v(k)}^2{p^{\mathrm{suc}}_{v} (k)}^2)] \times \nonumber \\
	&\mathbb{E} \big\{\big[\mathbf{1}\left(v \in \mathcal{C}_k \right) \big]^2 \times \mathbb{E} \big[\mathbf{1} \big( \mathrm{t}_v (k) \leq \mathrm{t}^{\mathrm{th}}(k) | d_v(k) \leq \mathrm{r} \big) \big]^2\big\},\nonumber\\
	&=\frac{B \big(1 + \gamma \big)}{\mu'} \left \Vert \nabla f(\pmb{\omega}_k) \right\Vert^2 +  \frac{L B^2 (1+\gamma)^2}{2{\mu'}^2} \left \Vert \nabla f(\pmb{\omega}_k)\right\Vert^2 \times \nonumber\\
	&\qquad\qquad\qquad\qquad\qquad \sum\nolimits_{v=1}^{V_{\breve{t}_k}} [p_v/(q_v(k) p^{\mathrm{suc}}_{v} (k))],\nonumber\\
	&=\frac{B \left(1 + \gamma \right)}{\mu'} \bigg[1 + \frac{B L (1+\gamma)}{2 \mu' } \sum_{v=1}^{V_{\breve{t}_k}} \frac{p_v} {q_v(k) p^{\mathrm{suc}}_{v} (k)} \bigg]  \left \Vert \nabla f(\pmb{\omega}_k) \right\Vert^2, \nonumber
\end{align}
where we write $(a)$ using the fact that $\mathbf{1}\left(v \in \mathcal{C}_k \right)$ and $\mathbf{1}\left(\mathrm{t}_v (k) \leq \mathrm{t}^{\mathrm{th}}(k) | d_v(k) \leq \mathrm{r} \right)$ are independent.
Besides, in $(b)$ the outer and inner expectations are with respect to CV sampling and successful trained model receptions, respectively.

\section{Proof of Lemma \ref{lyaPunovLemma}} 
\label{proofLyaPunovLemma}

\noindent
Using the remaining payload buffer in (\ref{payloadQueue}), we can write
\begin{align}
\label{lyaPunovLem}
    &{Q_v(t+1)}^2 = \left([Q_v(t) - \kappa \cdot r_v(t)]^{+}\right)^2,\nonumber\\
    &\qquad \qquad \leq {Q_v(t)}^2 + \kappa^2 {r_v(t)}^2 - 2\kappa r_v(t) Q_v(t),\nonumber\\
    &{Q_v(t+1)}^2 - {Q_v(t)}^2 \leq \kappa^2 {r_v(t)}^2 - 2\kappa r_v(t) Q_v(t),\\
    &(1/2)\sum\nolimits_{v=1}^{|\mathcal{C}_k|} \left[ {Q_v(t+1)}^2 - {Q_v(t)}^2 \right]  \nonumber\\
    &\qquad \quad \leq - \sum\nolimits_{v=1}^{|\mathcal{C}_k|} \kappa r_v(t) Q_v(t) +  (1/2) \sum\nolimits_{v=1}^{|\mathcal{C}_k|} \kappa^2 {r_v(t)}^2, \nonumber
\end{align}
where the last inequality is obtained by dividing both sides by $2$ and summing up the inequalities for all $v \in \mathcal{V}_{\breve{t}_k}$.
Then, using (\ref{uplink_DataRate}), we can write 
\begin{align}
    & L(\mathbf{Q}(t+1)) -  L(\mathbf{Q}(t)) \leq - \sum\nolimits_{v=1}^{|\mathcal{C}_k|} \kappa r_v(t) Q_v(t) +\nonumber\\
    & (\kappa^2\omega^2 (1-\upsilon)^2 /2) \sum_{v=1}^{|\mathcal{C}_k|} \Big[ {\mathrm{I}_v(t)}^2 \Big(\sum_{z=1}^Z \log_2 (1+\Gamma_{v,z}(t)) \Big)^2 \Big],\nonumber\\
    & \overset{(a)}{\leq} - \sum\nolimits_{v=1}^{|\mathcal{C}_k|} \kappa r_v(t) Q_v(t) + (\kappa^2\omega^2 (1-\upsilon)^2/2) \sum\nolimits_{v=1}^{|\mathcal{C}_k|} \Big[ \nonumber\\
    &\qquad {\mathrm{I}_v(t)}^2 \cdot Z \sum\nolimits_{z=1}^Z \left(\log_2 (1+\Gamma_{v,z}(t)) \right)^2 \Big],\\
    & \overset{(b)} {\leq} 
    - \sum\nolimits_{v=1}^{|\mathcal{C}_k|} \kappa r_v(t) Q_v(t) + [(Z \kappa^2 \omega^2 (1-\upsilon)^2)/ (\ln 2)^2] \times \nonumber \\ &\quad \Big[\sum\nolimits_{v=1}^{|\mathcal{C}_k|}  {\mathrm{I}_v(t)}^2 \cdot \sum\nolimits_{z=1}^Z (\mathrm{I}_{v,z}(t) \cdot P_{v,z}(t) \left\Vert \mathbf{h}_{v,z}(t) \right\Vert^2)/ (\omega \varsigma^2)\Big],\nonumber
\end{align}
where ($a$) comes from Cauchy-Schwarz inequality on real numbers $\left(\sum_{z=1}^{Z} a_z \cdot 1 \right)^2 \leq \left(\sum_{z=1}^{Z} a_z^2\right) \cdot \left(\sum_{z=1}^{Z} 1^2\right)$ and ($b$) stems from $(\log_2(1+a))^2 \leq (2a /(\ln 2)^2)$.

Now, since both $\mathrm{I}_v(t)$ and $\mathrm{I}_{v,z}(t)$ are binary indicator functions with maximum value of $1$, we can write the following:
\begin{align}
    &\frac{Z\kappa^2 \omega^2 (1-\upsilon)^2} {(\ln 2)^2} \sum_{v=1}^{|\mathcal{C}_k|}  {\mathrm{I}_v(t)}^2 \cdot \sum_{z=1}^Z \frac{\mathrm{I}_{v,z}(t) \cdot P_{v,z}(t) \left\Vert \mathbf{h}_{v,z}(t) \right\Vert^2} {\omega \varsigma^2} \nonumber\\
    & \overset{(a)}{\leq} \frac{Z\kappa^2 \omega^2 (1-\upsilon)^2} {(\ln 2)^2} \sum_{v=1}^{|\mathcal{C}_k|}  1^2 \cdot \sum_{z=1}^Z \frac{1 \cdot \frac{P_v^{\text{max}}}{Z} \left\Vert \mathbf{h}_{v,z}(t) \right\Vert^2} {\omega \varsigma^2},\\
    & = \frac{\kappa^2 \omega^2(1-\upsilon)^2} {(\ln 2)^2} \sum_{v=1}^{|\mathcal{C}_k|}  \sum_{z=1}^Z (P_v^{\text{max}} \left\Vert \mathbf{h}_{v,z}(t) \right\Vert^2)/ (\omega \varsigma^2) = \varpi, \nonumber
\end{align}
where, in ($a$), we use the fact that $P_v^{\mathrm{max}}$ is equally distributed among all pRBs.

Then, using $\varpi$, adding $C \cdot (- \sum\nolimits_{v=1}^{|\mathcal{C}_k|} r_v(t) + \bar{\beta} (k,t)\sum\nolimits_{v=1}^{|\mathcal{C}_k|} \sum\nolimits_{z=1}^Z P_{v,z}(t))$ on both sides of (\ref{lyaPunovLem}) and taking expectation on both sides conditioned on the queue state $Q_v(t)$, we reach to (\ref{drifPenaltyEq}).

\bibliography{Reference.bib}

\begin{thebibliography}{10}
\providecommand{\url}[1]{#1}
\csname url@samestyle\endcsname
\providecommand{\newblock}{\relax}
\providecommand{\bibinfo}[2]{#2}
\providecommand{\BIBentrySTDinterwordspacing}{\spaceskip=0pt\relax}
\providecommand{\BIBentryALTinterwordstretchfactor}{4}
\providecommand{\BIBentryALTinterwordspacing}{\spaceskip=\fontdimen2\font plus
\BIBentryALTinterwordstretchfactor\fontdimen3\font minus
  \fontdimen4\font\relax}
\providecommand{\BIBforeignlanguage}[2]{{%
\expandafter\ifx\csname l@#1\endcsname\relax
\typeout{** WARNING: IEEEtran.bst: No hyphenation pattern has been}%
\typeout{** loaded for the language `#1'. Using the pattern for}%
\typeout{** the default language instead.}%
\else
\language=\csname l@#1\endcsname
\fi
#2}}
\providecommand{\BIBdecl}{\relax}
\BIBdecl

\bibitem{Qayyum2020Securing}
A.~Qayyum, M.~Usama, J.~Qadir, and A.~Al-Fuqaha, ``Securing connected \&
  autonomous vehicles: Challenges posed by adversarial machine learning and the
  way forward,'' \emph{IEEE Commun. Surveys \& Tut.}, vol.~22, no.~2, pp.
  998--1026, Feb. 2020.

\bibitem{ye2018machine}
H.~Ye, L.~Liang, G.~Ye~Li, J.~Kim, L.~Lu, and M.~Wu, ``Machine learning for
  vehicular networks: Recent advances and application examples,'' \emph{IEEE
  Veh. Technol. Magaz.}, vol.~13, no.~2, pp. 94--101, Apr. 2018.

\bibitem{mcmahan2017communication}
B.~McMahan, E.~Moore, D.~Ramage, S.~Hampson, and B.~A. y~Arcas,
  ``Communication-efficient learning of deep networks from decentralized
  data,'' in \emph{Artificial intelligence and statistics}.\hskip 1em plus
  0.5em minus 0.4em\relax PMLR, 2017, pp. 1273--1282.

\bibitem{richeng2022communication}
R.~Jin, X.~He, and H.~Dai, ``Communication efficient federated learning with
  energy awareness over wireless networks,'' \emph{IEEE Trans. Wireless
  Commun.}, vol.~21, no.~7, pp. 5204--5219, Jan. 2022.

\bibitem{Amiri2020Federated}
M.~M. Amiri and D.~Gündüz, ``Federated learning over wireless fading
  channels,'' \emph{IEEE Trans. Wireless Commun.}, vol.~19, no.~5, pp.
  3546--3557, May 2020.

\bibitem{salehi2021Federated}
M.~Salehi and E.~Hossain, ``Federated learning in unreliable and
  resource-constrained cellular wireless networks,'' \emph{IEEE Trans.
  Commun.}, vol.~69, no.~8, pp. 5136--5151, Aug. 2021.

\bibitem{Chen2021Ajoint}
M.~Chen, Z.~Yang, W.~Saad, C.~Yin, H.~V. Poor, and S.~Cui, ``A joint learning
  and communications framework for federated learning over wireless networks,''
  \emph{IEEE Trans. Wireless Commun.}, vol.~20, no.~1, pp. 269--283, Jan. 2021.

\bibitem{Yang2021}
Z.~Yang, M.~Chen, W.~Saad, C.~S. Hong, and M.~Shikh-Bahaei, ``Energy efficient
  federated learning over wireless communication networks,'' \emph{IEEE Trans.
  Wireless Commun.}, vol.~20, no.~3, pp. 1935--1949, Mar. 2021.

\bibitem{Zeng2021MultiTask}
T.~Zeng, J.~Guo, K.~J. Kim, K.~Parsons, P.~Orlik, S.~Di~Cairano, and W.~Saad,
  ``Multi-task federated learning for traffic prediction and its application to
  route planning,'' in \emph{IEEE Intel. Veh. Symp.}, July 2021, pp. 451--457.

\bibitem{pervej2022Mobility}
M.~F. Pervej, J.~Guo, K.~J. Kim, K.~Parsons, P.~Orlik, S.~D. Cairano,
  M.~Menner, K.~Berntorp, Y.~Nagai, and H.~Dai, ``Mobility, communication and
  computation aware federated learning for internet of vehicles,'' in
  \emph{IEEE Intel. Veh. Symp.}, June 2022, pp. 750--757.

\bibitem{8917592}
S.~Samarakoon, M.~Bennis, W.~Saad, and M.~Debbah, ``Distributed federated
  learning for ultra-reliable low-latency vehicular communications,''
  \emph{IEEE Trans. Commun.}, vol.~68, no.~2, pp. 1146--1159, 2020.

\bibitem{9492053}
X.~Huang, P.~Li, R.~Yu, Y.~Wu, K.~Xie, and S.~Xie, ``Fedparking: A federated
  learning based parking space estimation with parked vehicle assisted edge
  computing,'' \emph{IEEE Trans. Veh. Technol.}, vol.~70, no.~9, pp.
  9355--9368, 2021.

\bibitem{9424984}
X.~Zhou, W.~Liang, J.~She, Z.~Yan, and K.~I.-K. Wang, ``Two-layer federated
  learning with heterogeneous model aggregation for 6g supported internet of
  vehicles,'' \emph{IEEE Trans. Veh. Technol.}, vol.~70, no.~6, pp. 5308--5317,
  2021.

\bibitem{9416835}
Q.~Kong, F.~Yin, R.~Lu, B.~Li, X.~Wang, S.~Cui, and P.~Zhang,
  ``Privacy-preserving aggregation for federated learning-based navigation in
  vehicular fog,'' \emph{IEEE Trans. Indust. Informat.}, vol.~17, no.~12, pp.
  8453--8463, 2021.

\bibitem{9181482}
Z.~Yu, J.~Hu, G.~Min, Z.~Zhao, W.~Miao, and M.~S. Hossain, ``Mobility-aware
  proactive edge caching for connected vehicles using federated learning,''
  \emph{IEEE Trans. Intell. Transport. Syst.}, vol.~22, no.~8, pp. 5341--5351,
  2021.

\bibitem{WANG2022199}
G.~Wang, F.~Xu, H.~Zhang, and C.~Zhao, ``Joint resource management for mobility
  supported federated learning in internet of vehicles,'' \emph{Fut. Gen.
  Computer Syst.}, vol. 129, pp. 199--211, 2022.

\bibitem{Sun2022Edge}
F.~Sun, Z.~Zhang, S.~Zeadally, G.~Han, and S.~Tong, ``Edge computing-enabled
  internet of vehicles: Towards federated learning empowered scheduling,''
  \emph{IEEE Trans. Veh. Technol.}, pp. 1--17, June 2022.

\bibitem{prathiba2021Federated}
S.~B. Prathiba, G.~Raja, S.~Anbalagan, K.~Dev, S.~Gurumoorthy, and A.~P.
  Sankaran, ``Federated learning empowered computation offloading and resource
  management in 6g-v2x,'' \emph{IEEE Trans. Network Science Engg.}, vol.~9,
  no.~5, pp. 3234--3243, Aug. 2022.

\bibitem{zeng2022federated}
T.~Zeng, O.~Semiariy, M.~Chen, W.~Saad, and M.~Bennis, ``Federated learning on
  the road autonomous controller design for connected and autonomous
  vehicles,'' \emph{IEEE Trans. Wireless Commun.}, June 2022.

\bibitem{Xiao2021Vehicle}
H.~Xiao, J.~Zhao, Q.~Pei, J.~Feng, L.~Liu, and W.~Shi, ``Vehicle selection and
  resource optimization for federated learning in vehicular edge computing,''
  \emph{IEEE Trans. Intell. Transport. Syst.}, pp. 1--15, Aug. 2021.

\bibitem{taik2022clustered}
A.~Taïk, Z.~Mlika, and S.~Cherkaoui, ``Clustered vehicular federated learning:
  Process and optimization,'' \emph{IEEE Trans. Intel. Transport. Syst.}, pp.
  1--13, Feb. 2022.

\bibitem{liu2022FedCPF}
S.~Liu, J.~Yu, X.~Deng, and S.~Wan, ``Fedcpf: An efficient-communication
  federated learning approach for vehicular edge computing in 6g communication
  networks,'' \emph{IEEE Trans. Intel. Transport. Syst.}, vol.~23, no.~2, pp.
  1616--1629, Feb. 2022.

\bibitem{MLSYS2020_38af8613}
T.~Li, A.~K. Sahu, M.~Zaheer, M.~Sanjabi, A.~Talwalkar, and V.~Smith,
  ``Federated optimization in heterogeneous networks,'' in \emph{Proc. MLSys},
  vol.~2, 2020, pp. 429--450.

\bibitem{liang2022Semi}
F.~Liang, Q.~Yang, R.~Liu, J.~Wang, K.~Sato, and J.~Guo, ``Semi-synchronous
  federated learning protocol with dynamic aggregation in internet of
  vehicles,'' \emph{IEEE Trans. Vehicular Technol.}, vol.~71, no.~5, pp.
  4677--4691, May 2022.

\bibitem{yang2022efficient}
Z.~Yang, X.~Zhang, D.~Wu, R.~Wang, P.~Zhang, and Y.~Wu., ``Efficient
  asynchronous federated learning research in the internet of vehicles,''
  \emph{IEEE Internet of Things Journal}, Dec. 2022.

\bibitem{chen2021semi}
Q.~Chen, Z.~You, and H.~Jiang, ``Semi-asynchronous hierarchical federated
  learning for cooperative intelligent transportation systems,'' \emph{arXiv
  preprint arXiv:2110.09073}, 2021.

\bibitem{lin2022anOverview}
X.~Lin, ``An overview of {5G} advanced evolution in {3GPP} release 18,''
  \emph{IEEE Commun. Standards Magaz.}, vol.~6, no.~3, pp. 77--83, Sept. 2022.

\bibitem{3gppSA23_700}
``\textit{3rd Generation Partnership Project; TR 23.700-80: 3rd Generation
  Partnership Project; Technical Specification Group Services \& System
  Aspects; Study on 5G System Support for AI/ML-based Services (Release 18)},''
  Dec. 2022, v18.0.0.

\bibitem{3gpp38_211}
``\textit{3rd Generation Partnership Project; Technical Specification Group
  Radio Access Network; NR; Physical channels and modulation},'' July 2020,
  3GPP TS 38.211 V16.2.0, Release 16.

\bibitem{sumo2018}
P.~A. Lopez, M.~Behrisch, L.~Bieker-Walz, J.~Erdmann, Y.-P. Fl{\"o}tter{\"o}d,
  R.~Hilbrich, L.~L{\"u}cken, J.~Rummel, P.~Wagner, and E.~Wie{\ss}ner,
  ``Microscopic traffic simulation using sumo,'' in \emph{Proc. ITSC}, 2018.

\bibitem{kim2023modems}
T.~Kim, S.~D. Sathyanarayana, S.~Chen, Y.~Im, X.~Zhang, S.~Ha, and C.~Joe-Wong,
  ``Modems: Optimizing edge computing migrations for user mobility,''
  \emph{IEEE J. Sel. Areas Commun.}, vol.~41, no.~3, pp. 675--689, 2023.

\bibitem{Guo2019Enabling}
T.~Guo and A.~Suárez, ``Enabling 5g ran slicing with edf slice scheduling,''
  \emph{IEEE Trans. Veh. Technol.}, vol.~68, no.~3, pp. 2865--2877, Mar. 2019.

\bibitem{3gppTS23_501}
``\textit{3rd Generation Partnership Project; TS 23.501: 3rd Generation
  Partnership Project; Technical Specification Group Services and System
  Aspects; System architecture for the 5G System (5GS); Stage 2 (Release
  18)},'' Dec. 2022, version 18.0.0.

\bibitem{3gppTS23_502}
``\textit{3rd Generation Partnership Project; TS 23.502: 3rd Generation
  Partnership Project; Technical Specification Group Services and System
  Aspects; Procedures for the 5G System (5GS); Stage 2 (Release 18)},'' Dec.
  2022, version 18.0.0.

\bibitem{roy2010handbook}
R.~R. Roy, \emph{Handbook of mobile ad hoc networks for mobility models}.\hskip
  1em plus 0.5em minus 0.4em\relax Springer Science \& Business Media, 2010.

\bibitem{3gpp38_213}
``\textit{3rd Generation Partnership Project; Technical Specification Group
  Radio Access Network; NR;Physical layer procedures for control},'' Dec. 2021,
  3GPP TS 38.213 V17.0.0, Release 17.

\bibitem{3GPP_TR_38_901}
``\textit{3rd Generation Partnership Project; Technical Specification Group
  Radio Access Network; Study on channel model for frequencies from 0.5 to 100
  GHz},'' 3GPP TR 38.901 V16.1.0, Release 16, Dec. 2019.

\bibitem{3gppTS38_214}
``\textit{3rd Generation Partnership Project; TS 38.214: 3rd Generation
  Partnership Project; Technical Specification Group Radio Access Network; NR;
  Physical layer procedures for data (Release 17)},'' Dec. 2022, version
  17.4.0.

\bibitem{9252927}
H.~T. Nguyen, V.~Sehwag, S.~Hosseinalipour, C.~G. Brinton, M.~Chiang, and
  H.~Vincent~Poor, ``Fast-convergent federated learning,'' \emph{IEEE J. Sel.
  Areas Commun.}, vol.~39, no.~1, pp. 201--218, Jan 2021.

\bibitem{Li2020On}
X.~Li, K.~Huang, W.~Yang, S.~Wang, and Z.~Zhang, ``On the convergence of fedavg
  on non-iid data,'' in \emph{Proc. ICLR}, 2020.

\bibitem{wang2019Adaptive}
S.~Wang, T.~Tuor, T.~Salonidis, K.~K. Leung, C.~Makaya, T.~He, and K.~Chan,
  ``Adaptive federated learning in resource constrained edge computing
  systems,'' \emph{IEEE J. Sel. Areas Commun.}, vol.~37, no.~6, pp. 1205--1221,
  June 2019.

\bibitem{Ma2020Scheduling}
X.~Ma, H.~Sun, and R.~Q. Hu, ``Scheduling policy and power allocation for
  federated learning in noma based mec,'' in \emph{Proc. IEEE GLOBECOM}, 2020,
  pp. 1--7.

\bibitem{Hamdi2022}
R.~Hamdi, M.~Chen, A.~B. Said, M.~Qaraqe, and H.~V. Poor, ``Federated learning
  over energy harvesting wireless networks,'' \emph{IEEE Internet Things J.},
  vol.~9, no.~1, pp. 92--103, 2022.

\bibitem{Wang2022Federated}
Z.~Wang, J.~Qiu, Y.~Zhou, Y.~Shi, L.~Fu, W.~Chen, and K.~B. Letaief,
  ``Federated learning via intelligent reflecting surface,'' \emph{IEEE Trans.
  Wireless Commun.}, vol.~21, no.~2, pp. 808--822, Feb. 2022.

\bibitem{buttazzo2011hard}
G.~C. Buttazzo, \emph{Hard real-time computing systems: predictable scheduling
  algorithms and applications}.\hskip 1em plus 0.5em minus 0.4em\relax Springer
  Sci. \& Busi. Media, 2011, vol.~24.

\bibitem{diamond2016cvxpy}
S.~Diamond and S.~Boyd, ``Cvxpy: A python-embedded modeling language for convex
  optimization,'' \emph{J. Mach. Learn. Res.}, vol.~17, no.~1, pp. 2909--2913,
  2016.

\bibitem{7812683}
Y.~Sun, D.~W.~K. Ng, Z.~Ding, and R.~Schober, ``Optimal joint power and
  subcarrier allocation for full-duplex multicarrier non-orthogonal multiple
  access systems,'' \emph{IEEE Trans. Commun.}, vol.~65, no.~3, pp. 1077--1091,
  Mar. 2017.

\bibitem{9174768}
B.~Liu, C.~Liu, and M.~Peng, ``Resource allocation for energy-efficient mec in
  noma-enabled massive iot networks,'' \emph{IEEE J. Sel. Areas Commun.},
  vol.~39, no.~4, pp. 1015--1027, Apr. 2021.

\bibitem{dinkelbach1967nonlinear}
W.~Dinkelbach, ``On nonlinear fractional programming,'' \emph{Manage. Sci.},
  vol.~13, no.~7, pp. 492--498, 1967.

\bibitem{8643543}
Y.~Zhang, C.~Li, T.~H. Luan, Y.~Fu, W.~Shi, and L.~Zhu, ``A mobility-aware
  vehicular caching scheme in content centric networks: Model and
  optimization,'' \emph{IEEE Trans. Veh. Technol.}, vol.~68, no.~4, pp.
  3100--3112, Apr. 2019.

\bibitem{9136761}
B.~Xu, P.~Zhu, J.~Li, D.~Wang, and X.~You, ``Joint long-term energy efficiency
  optimization in c-ran with hybrid energy supply,'' \emph{IEEE Trans. Veh.
  Technol.}, vol.~69, no.~10, pp. 11\,128--11\,138, 2020.

\bibitem{9718549}
H.~Hu, W.~Song, Q.~Wang, R.~Q. Hu, and H.~Zhu, ``Energy efficiency and delay
  tradeoff in an mec-enabled mobile iot network,'' \emph{IEEE Internet Things
  J.}, pp. 1--1, 2022.

\bibitem{Guan2020Maximize}
P.~Guan and X.~Deng, ``Maximize potential reserved task scheduling for urllc
  transmission and edge computing,'' in \emph{Proc. VTC2020-Fall}, 2020.

\bibitem{pervej2022efficient}
M.~F. Pervej, R.~Jin, S.-C. Lin, and H.~Dai, ``Efficient content delivery in
  user-centric and cache-enabled vehicular edge networks with
  deadline-constrained heterogeneous demands,'' \emph{arXiv preprint
  arXiv:2202.07792}, 2022.

\bibitem{6816086}
E.~Che, H.~D. Tuan, and H.~H. Nguyen, ``Joint optimization of cooperative
  beamforming and relay assignment in multi-user wireless relay networks,''
  \emph{IEEE Trans. Wireless Commun.}, vol.~13, no.~10, pp. 5481--5495, Oct.
  2014.

\bibitem{lecun2010mnist}
Y.~LeCun, C.~Cortes, and C.~Burges, ``Mnist handwritten digit database,''
  \emph{ATT Labs [Online]. Available: yann.lecun.com/exdb/mnist}, vol.~2, 2010.

\bibitem{xiao2017FMINST}
H.~Xiao, K.~Rasul, and R.~Vollgraf. (2017) Fashion-mnist: a novel image dataset
  for benchmarking machine learning algorithms.

\bibitem{GTSRB2011}
J.~Stallkamp, M.~Schlipsing, J.~Salmen, and C.~Igel, ``The {G}erman {T}raffic
  {S}ign {R}ecognition {B}enchmark: A multi-class classification competition,''
  in \emph{IEEE Int. Joint Conf. Neural Netw.}, 2011, pp. 1453--1460.

\bibitem{krizhevsky2009learning}
A.~Krizhevsky, G.~Hinton \emph{et~al.}, ``Learning multiple layers of features
  from tiny images,'' 2009.

\bibitem{yang2022anarchic}
H.~Yang, X.~Zhang, P.~Khanduri, and J.~Liu, ``Anarchic federated learning,'' in
  \emph{Proc. ICML}, 2022, pp. 25\,331--25\,363.

\end{thebibliography}
\bibliographystyle{IEEEtran}

\end{document}